\DeclareMathOperator*{\argmax}{arg\,max}
\DeclareMathOperator*{\argmin}{arg\,min}
\renewcommand{\emptyset}{\varnothing}
\newtheorem{theorem}{Theorem}[section]
\newtheorem{lemma}{Lemma}[section]
\newtheorem{proposition}{Proposition}[section]
\newtheorem{definition}{Definition}
\newtheorem{example}{Example}
\newcommand{\EFone}{\textrm{\textup{EF1}}}
\newcommand{\MMS}{\textrm{\textup{MMS}}}
\newcommand{\EF}{\textrm{\textup{EF}}}
\newcommand{\EFX}{\textrm{\textup{EFX}}}
\newcommand{\PO}{\textup{PO}}
\newcommand{\SO}{\textup{SO}}
\newcommand{\cmark}{\ding{51}}%
\newcommand{\xmark}{\ding{55}}%
\newcommand{\sort}{\mathrm{sort}}
\newcommand{\NPH}{\textrm{\textup{NP-hard}}}
\newcommand{\Poly}{\textup{P}}
\newcommand{\dist}{\mathrm{dist}}
\newcommand{\SN}[1]{{\color{blue!50!magenta}{SN: }{#1} }}
\Crefname{remark}{Remark}{Remarks}
\Crefname{rmk}{Remark}{Remarks}
\Crefname{dfn}{Definition}{Definitions}
\Crefname{thm}{Theorem}{Theorems}
\Crefname{cor}{Corollary}{Corollaries}
\Crefname{lem}{Lemma}{Lemmas}
\Crefname{example}{Example}{Examples}
\Crefname{prop}{Proposition}{Propositions}
\Crefname{tab}{Table}{Tables}
\title{Fair Distribution of Delivery Orders}
\author[1]{Hadi Hosseini}
\author[2]{Shivika Narang}
\author[3]{Tomasz Wąs}
\affil[1]{Pennsylvania State University}
\affil[2]{University of New South Wales}
\affil[3]{University of Oxford}
\date{}
\begin{document}
\maketitle
\begin{abstract}
\noindent We initiate the study of fair distribution of delivery tasks among a set of agents wherein delivery jobs are placed along the vertices of a graph.
Our goal is to fairly distribute delivery costs (distance traveled to complete the deliveries) among a fixed set of agents while satisfying some desirable notions of economic efficiency. We adopt well-established fairness concepts---such as \textit{envy-freeness up to one item} (\EFone{}) and \textit{minimax share} (\MMS{})---to our setting and show that fairness is often incompatible with the efficiency notion of \textit{social optimality}. We then characterize instances that admit fair and socially optimal solutions by exploiting graph structures. We further show that achieving fairness along with Pareto optimality is computationally intractable. We complement this by designing an XP algorithm (parameterized by the number of agents) for finding \MMS{} and Pareto optimal solutions on every tree instance, and show that the same algorithm can be modified to find efficient solutions along with \EFone{}, when such solutions exist.
The latter crucially relies on an intriguing result that in our setting \EFone{} and Pareto optimality jointly imply \MMS{}.
We conclude by theoretically and experimentally analyzing the price of fairness.
\end{abstract}

\section{Introduction}\label{sec:intro}

With the rise of digital marketplaces and the gig economy, package delivery services have become crucial components of e-commerce platforms like Amazon, AliExpress, and eBay. In addition to these novel platforms, traditional postal and courier services also  require swift turnarounds for distributing packages. 
Prior work has extensively investigated the optimal partitioning of tasks among the delivery agents under the guise of \textit{vehicle routing} problems (see \citep{toth2002overview} for an overview). However, these solutions are primarily focused on optimizing the efficiency (often measured by delivery time or distance travelled \citep{kleinberg1999fairness,pioro2007fair}),
and do not consider fairness towards the delivery agents.
This is particularly important in settings where agents do not receive monetary compensation,
e.g., in volunteer-based social programs such as Meals on Wheels~\citep{odwyer2009mealsonwheels}.

We consider fair distribution of delivery orders that are located on the vertices of a connected graph, containing a warehouse (the \textit{hub}). Agents are tasked with picking up delivery packages (or \textit{items}) from the fixed hub, delivering them to the vertices, and returning to the hub. In this setting, the cost incurred by an agent $i$ is the total distance traveled, that is, the total number of the edges traversed by $i$ in the graph.
Let us illustrate this through an example.

\begin{example}\label{ex:example}
\begin{figure}[!t]
    \centering
    \begin{tikzpicture}
        \def\xs{1cm} 
        \def\ys{0.6cm} 
        \def\x{0cm} 
        \def\y{0cm} 
        \def\ls{\footnotesize} 
        
        \tikzset{
            node_blank/.style={circle,draw,minimum size=0.5cm,inner sep=0, color=white}, 
            node/.style={circle,draw,minimum size=0.6cm,inner sep=0, fill = black!05},
            node_h/.style={circle,draw,minimum size=0.8cm,inner sep=0, fill = blue!20, font=\normalsize},
            node_emph/.style={circle, minimum size=1cm, black!15, fill = black!15, draw,inner sep=0, font=\normalsize},
            edge/.style={draw,thick,sloped,-,above,font=\footnotesize},
            arrow/.style={draw, single arrow, minimum width = 0.9cm, minimum height=\y-6*\x+\s, fill=black!10},
            blank/.style={}
        }
        
        \node[node_emph] (_) at (\x + 1*\xs, 1*\ys + \y) {};
        \node[node] (a) at (\x + -0.1*\xs, 1*\ys + \y) {\ls $a$};
        \node[node_h] (h) at (\x + 1*\xs, 1*\ys + \y) {\ls $h$};
        \node[node] (b) at (\x + 2.1*\xs, 1*\ys + \y) {\ls $b$};
        \node[node] (c) at (\x + 3*\xs, 0*\ys + \y) {\ls $c$};
        \node[node] (d) at (\x + 3*\xs, 2*\ys + \y) {\ls $d$};
        \node[node] (e) at (\x + 4*\xs, 2*\ys + \y) {\ls $e$};
        \node[node] (f) at (\x + 5*\xs, 2*\ys + \y) {\ls $f$};
        \node[node] (g) at (\x + 6*\xs, 2*\ys + \y) {\ls $g$};
        
        \path[edge]
        (a) edge (h)
        (h) edge (b)
        (b) edge (c)
        (b) edge (d)
        (d) edge (e)
        (e) edge (f)
        (f) edge (g)
        ;
      
    \end{tikzpicture}
    \caption{An example graph with the hub, $h$, marked.}
    \label{fig:motiv}
\end{figure}
Consider seven delivery orders $\{a, b, \ldots, g\}$ and a hub ($h$) that are located on a graph as depicted in \cref{fig:motiv}.
An agent's cost depends on the graph structure and is \emph{submodular}. For instance, the cost of delivering an order to vertex $f$ is the distance from the hub $h$ to $f$, which is $4$%
;\footnote{Formally, there is also the cost of returning to $h$, but since, on trees, each edge must be traversed by an agent twice (once in each direction), we do not count the return cost for simplicity.
}%
 but the cost of delivering to $f$ and $g$ is only $5$ since they can both be serviced in the same trip.

Let there be two (delivery) agents. If the objective were to simply minimize the total distance travelled (\emph{social optimality}), then there are two solutions with the total cost of $7$: either one agent delivers \emph{all} the items or one agent services $a$ while the other services the rest.
However, these solutions do not distribute the delivery orders fairly among the  agents.

One plausible fair solution may assign $\{a,b,f\}$ to the first agent and $\{c,d,e,g\}$ to the other, minimizing the cost discrepancy.
However, as both agents benefit from exchanging $f$ for $c$,
this allocation is not efficient
or, more precisely, it is not \emph{Pareto optimal}.
After the exchange, the first agent services $\{a,b,c\}$
and the second agent $\{d,e,f,g\}$,
which in fact is a Pareto optimal allocation.
\end{example}

The above example captures the challenges in satisfying fairness in conjunction with efficiency, and consequently, motivates the study of fair distribution of delivery orders. The literature on fair division has long been concerned with the fair allocation of goods (or resources) \citep{lipton-envy-graph,budish2011combinatorial,barman2019fair,freeman2019equitable}, chores (or tasks) \citep{chen2020fairness,ebadian2021fairly,huang2021algorithmic,hosseini2022ordinalgoods}, and mixtures thereof \citep{bhaskar2021approximate,aziz2022fair,caragiannis2022repeatedly,hosseini2022fairly}. 

It has resulted in a variety of fairness concepts and their relaxations. Most notably, \textit{envy-freeness} and its relaxation---\textit{envy-freeness up to one item} (\EFone{}) \citep{lipton-envy-graph}---have been widely studied in the context of fair division.
Another well-studied fairness notion, minimax share (\MMS{}) \citep{budish2011combinatorial}, requires that agents receive cost no more than what they would have received if they were to create (almost) equal partitions.
A key question is how to adopt these fairness concepts to the delivery problems, and whether these fairness concepts are compatible with natural efficiency requirements.


\subsection{Technical Contributions}
We initiate the study of fair distribution of delivery tasks among a set of agents. The tasks are placed on the vertices of an acyclic or tree graph. The cost of servicing a given set of tasks is the number of edges traversed to be able to reach each node.  This cost function can be easily seen to be submodular. 
The primary objective is to find a fair partition of $m$ delivery orders (represented by vertices of a graph), starting from a fixed hub, among $n$ agents. 
We consider two well-established fairness concepts of \EFone{} and \MMS{} and explore their existence and computation along with efficiency notions of social optimality (\SO{}) and Pareto optimality (\PO{}).

\begin{table}[t!]
\centering
\setlength{\tabcolsep}{2pt}
\small
\begin{tabular}{@{}lllll@{}}
\toprule
         & & \multicolumn{1}{l}{   \textbf{--}}
         & \multicolumn{1}{l}{\textbf{\PO{}}}
         & \multicolumn{1}{l}{\textbf{\SO{}}} \\
\midrule

\multirow{4}{*}{\textbf{\EFone{}}} &
     \multirow{2}{*}{existence}   &
        \cmark{} &
        \xmark{}  &
        \xmark{}  \\
        &&{\footnotesize (\cref{prop:ef1})} &{\footnotesize (\cref{prop:non-existenceEFonePO}a)} &{\footnotesize (\cref{prop:non-existenceEFonePO}a)}\\
        & \multirow{2}{*}{computation} &
        \Poly{} &
        NP-h  &
        NP-h \\ 
        &&{\footnotesize (\cref{prop:ef1})} & {\footnotesize (\cref{prop:po:hardness})} &{\footnotesize (\cref{prop:soexist})} \\
        \midrule

\multirow{4}{*}{\textbf{\MMS{}}} &
   \multirow{2}{*}{existence}   &
        \cmark{} & 
        \cmark{} &
        \xmark{}  \\
        &&{\footnotesize (\cref{thm:mms:hardness})}&{\footnotesize (\cref{prop:non-existenceEFonePO}b)} &{\footnotesize (\cref{thm:so:characterization})} \\
    & \multirow{2}{*}{computation} & 
        NP-h  & 
        NP-h &
        NP-h \\ 
        &&{\footnotesize (\cref{thm:mms:hardness})}&{\footnotesize (\cref{prop:po:hardness})}&{\footnotesize (\cref{prop:soexist})}\\
        \bottomrule  
\end{tabular}
\caption{The summary of our results on \EFone{} and \MMS{} in conjunction with efficiency requirements of PO and SO. 
\cmark{} denotes that the allocation always exists, and \xmark{} that it may not exist. 
We note that every \NPH{} problem here can be solved with an XP algorithm parameterized by the number of agents (Theorem~\ref{thm:xpalg:alloc}).
}
\label{tab:contribution}
\end{table}

\paragraph{Existence of fair and efficient allocations.} \cref{tab:contribution} summarizes our results on the coexistence of fairness and efficiency in arbitrary graphs.
We first show that an \EFone{} allocation always exists and can be computed in polynomial time on trees (\cref{prop:ef1}). In contrast, an \MMS{} allocation is guaranteed to exist, but its computation remains \NPH{} (\cref{thm:mms:hardness}). Out of the various combinations of fairness and efficiency, only \MMS{} and \PO{} allocations are guaranteed to exist. Finding a fair and efficient allocation remains \NPH{} for each combination.


We note that our intractability results hold even for the restricted case of unweighted tree graphs.
Consequently, in this paper we focus on the unweighted tree instances only.
As we show, even this special case allows for a rich landscape of results.
By doing a comprehensive study of the fair delivery problem on trees,
we aim to establish a solid baseline for further work in the more general settings.
Indeed, a popular technique to deal with computationally difficult problems on graphs
is to provide algorithms with complexity parameterized by some measure of tree-resemblance~[See for instance Chapter 7 in \citet{cygan2015parameterized}].
Trees can be also motivated in practice. Many suburban neighborhoods in the United States, among other countries, are designed with a cul-de-sac layout, with houses/mailboxes typically located at uniform distances. Then, a group of agents distributing a newspaper or a leaflet to every house in such a neighborhood, will face an unweighted tree instance, as in our problem.
However, we note that many of our results already hold for the general case of cyclic and weighted graph, which we discuss in \cref{sec:extend}.

In \cref{sec:charac},
we characterize the conditions for
the existence of 
fair (\EFone{} and \MMS{}) and efficient (\PO{} and \SO{}) allocations
in tree instances.
In particular, one of our most technically involved results provides a necessary condition for an allocation to be \EFone{}  and \PO{} (\cref{prop:non-existenceEFonePO}). 
In turn, this helps us to prove that such an allocation must be leximin optimal. 
As a result, we show that an \EFone{} and \PO{} allocation {\em always satisfies} \MMS{} as well (\cref{thrm:ef1+po:implies_mms}). 

In \cref{sec:efx}, we present results on a stronger fairness notion of envy-freeness up to any item (\EFX{}). We note that an \EFX{} allocation always exists in this setting ,
however \EFX{} combined with some efficiency requirement,
for large classes of graphs,
becomes as restrictive
as envy-freeness(\cref{thrm:efx}). Further determining whether an efficient \EFX{} allocation exists is also \NPH{} (\cref{prop:efx+po:hardness}).

\paragraph{Exact Algorithm. } We complement our existence results by
designing an XP algorithm (\cref{alg:ef1+po:main}),
parameterized by the number of agents,
that finds the Pareto frontier of a given delivery instance (\cref{thm:xpalg}). 
This allows us to find an \MMS{} and \PO{} solution. Further, our characterization results in conjunction with \cref{alg:ef1+po:main} enable us to check whether an instance admits an \EFone{} and \PO{} allocation or a fair and \SO{} allocation as well (\cref{thm:xpalg:alloc}). This algorithm also forms enables us to do extensive experiments on randomly generated instances. 

\paragraph{Price of Fairness and Experiments.} In our last set of theoretical results, we study the price of fairness of \MMS{} and \EFone{}
in our setting. Informally, price of fairness of a given instance is the ratio of the minimum possible sum of agent costs under a fair allocation to the minimum possible sum of agent costs under any allocation. To this end, we discuss the efficiency and computational complexity of finding minimum cost fair allocations \cref{prop:mincostexist,thm:mincostfair}. 

We formally study price of fairness by establishing  worst case upper bounds and typical values in a theoretical and experimental analysis, respectively. We present the theoretical bounds on the price of fairness in \cref{sec:pof} and the experimental findings on price of fairness along with other experiments in \cref{sec:exp}.  Our other experiments include checking how often \EFone{} and \PO{} allocations and fair and \SO{} allocations exist and studying the trajectories of the Pareto frontiers of randomly generated instances.


\subsection{Related Work}
Fair division of indivisible items has garnered much attention in recent years. Several notions of fairness have been explored in this space, with \EFone{} \citep{lipton-envy-graph,budish2011combinatorial,barman2019fair,caragiannis2022repeatedly} and \MMS{} \citep{ghodsi2018fair,barman2020approximation,hosseini2021guaranteeing} being among the most prominent ones. An important result is from \citet{caragiannis2019unreasonable} showing that an \EFone{} and \PO{} allocation is guaranteed to exist for items with non-negative additive valuations. Some prior work has also looked at fair division on graphs \citep{bouveret2017fair,truszczynski2020maximin,misra2021equitable,bilo2022almost,misra2022fair},
but in the settings that are very different from assigning delivery orders.
The majority of these papers identified the vertices of a graph with goods and analyzed how to fairly partition the graph into contiguous pieces.

Some recent work has explored fairness in delivery settings \citep{gupta2022fairfoody,nair2022gigs,wu2022fair,singh2023fairassign,tsai2023genetic} or ride-hailing platforms
\citep{esmaeili2022rawlsian,sanchez2022balancing}. However, in these studies tasks cannot be combined to give lower aggregate cost than the sum of the individual costs. This is very different from our setting, where an agent delivering one order can deliver all orders on the way for no additional cost. Further, the prior work on these settings is largely experimental and does not provide any positive theoretical guarantees.  
While fairness has been studied in routing problems, the aim has been to balance the amount of traffic on  each edge \citep{kleinberg1999fairness,pioro2007fair}, which does not capture the type of delivery instances that we investigate in this paper. 
In \cref{app:priorwork}, we provide an extended review of the literature. 



\section{Our Model}
\label{sec:prelim}
We denote a \emph{delivery instance} by an ordered triple $I = \langle [n],G,h \rangle$, where $[n]$ is a set of agents, $G = (V,E)$ is an undirected acyclic graph (i.e., a tree) of delivery orders rooted in $h \in V$. The special vertex, i.e., the root, $h$ is called the \emph{hub}. By $m$, we denote the number of edges in the graph, which is also the number of non-hub vertices. We assume that $m\geq n$.

The goal is to assign each vertex in graph $G$, except for the hub, to a unique agent that will \emph{service} it. Formally, an allocation $A=(A_1,\dots,A_n)$ is an $n$-partition of vertices in $V \setminus \{h\}$. We are only interested in \textit{complete} allocations such that $\cup_{i\in [n]} A_i = V \setminus \{h\}$, and denote the set of all complete allocations by $\Pi^n$.

An agent's \emph{cost} for servicing a vertex $v\in V$, denoted by $c(v)$, is the length of the shortest path from the hub $h$ to $v$.
An agent's cost for servicing a set of vertices $S \subseteq V \setminus \{h\}$ is equal to the minimum length of a walk that starts and ends in $h$ and contains all vertices in $S$ divided by two.\footnote{On trees, in each such walk, each edge is traversed by an agent two times (once in both directions). For simplicity, we drop the return cost, hence the division by 2.}
A walk to service vertices in $S$ may pass through vertices in some superset of $S$, i.e., $S' \supseteq S$.
Thus, the cost function $c$ is 
\textit{submodular} and belongs to the class of  \textit{coverage functions}.
We use $G|_S$ to denote the minimal connected subgraph containing all vertices in $S \cup \{h\}$. Thus, we have $c(S) = |E(G|_S)|$.
We say that an agent servicing $S$, \textit{visits} all vertices in $G|_S$.

\paragraph{Fairness Concepts.}
The most plausible fairness notion is \textit{envy-freeness} (\EF{}), which requires that no agent (strictly) prefers the allocation of another agent. An \EF{} allocation may not exist; consider one delivery order and two agents.
A prominent relaxation of \EF{} is \textit{envy-freeness up to one order} (\EFone) \citep{lipton-envy-graph,budish2011combinatorial},
which requires that every pairwise envy can be eliminated by the removal of a single order served by the envious agent.

\begin{definition}[Envy-Freeness up to One Order (\EFone)]
An allocation $A$ is \EFone{} if for every pair $i,j\in [n]$,
either $A_i = \emptyset$ or
there exists $x\in A_i$
such that $c(A_i\setminus \{x\})\leq c(A_j)$.
\end{definition}

Another well-studied notion is \textit{minimax share} (\MMS{}), which ensures that each agent gets at most as much cost as they would if they were to create an $n$-partition of the delivery orders but then receive their least preferred bundle.
This notion is an adaptation of maximin share fairness---which was defined for  positive valuations \citep{budish2011combinatorial}---to settings with negative valuations and has been recently studied in fair allocation of chores (see e.g., \citet{huang2021algorithmic}).

\begin{definition}[Minimax Share (\MMS{})]
The minimax share cost of a given delivery instance $I$ is given by
\[
\MMS{} (I) = \min_{A\in \Pi^n}\max_{i\in [n]} c(A_i),
\]
An allocation $A$ is \MMS{} if $c(A_i) \!\leq\! \MMS{}_i(I)$ for all $i\!\in\! [n]$.
\end{definition}

Under additive non-negative identical valuations, \MMS{} and \EFone{} co-exist.
For completeness, we give a formal proof in \cref{sec:mmsandfriends}.
However, as our cost functions are submodular, neither \EFone{} nor \MMS{} implies the other, even though the cost functions are identical
(see \cref{ex:allocations}).

\paragraph{Economic Efficiency.}
Our first notion of efficiency is social optimality (\SO{}). \SO{} requires that the aggregate cost of all of the agents is minimum. That is, we want to minimize the number of agents that traverse any given edge. One simple way to do this would be to have a single agent traverse the entire graph. Here, the sum of all agents' costs is $m$ i.e., equal to the number of edges in the graph. Clearly this is the minimum possible cost of any allocation. Thus, an allocation is socially optimal, if and only if, the sum of all agents' costs is $m$.
Equivalently, \SO{} is satisfied, if and only if, each vertex (except for the hub) is visited by exactly one agent.

\begin{definition}[Social Optimality (\SO{})]
An allocation $A$ is \emph{socially optimal} if $\sum_{i=1}^n c(A_i) = |E(G)|$.
In other words, for every pair of agents $i\neq j \in [n]$, the only vertex they both visit is the hub, i.e., $V(G|_{A_i}) \cap V(G|_{A_j})=\{h\}$.
\end{definition}

Here, $V(\cdot)$ takes as an input a subgraph and returns the set of vertices in it. An allocation that assigns all vertices to a single agent is vacuously \SO{}. However, as we discussed in \cref{ex:example} it may result in a very unfair distribution of orders. Therefore, we consider a weaker efficiency notion that allows for
some overlap in vertices visited by the agents.

\begin{definition}[Pareto Optimality (\PO{})]
An allocation $A$ \emph{Pareto dominates} $A'$ if $c(A_i)\leq c(A'_i)$, for every agent $i\in [n]$, and there exists some agent $j\in [n]$ such that $c(A_j)< c(A'_j)$.
An allocation is \emph{Pareto optimal} if it is not Pareto dominated by any other allocation.
\end{definition}
In other words, an allocation is \PO{}
if we cannot reduce the cost of one agent
without increasing it for some other agent.
Let us now follow up on \cref{ex:example}
and analyze allocations satisfying our notions.

\begin{example}[Continuation of \cref{ex:example}]
\label{ex:allocations}
    Consider the instance with
    2 agents and the graph from \cref{fig:motiv}.
    As previously noted,
    there are only two \SO{} allocations
    and neither is \EFone{} or \MMS{}.

    \PO{} allocation $(\{d,e,f,g\},\{a,b,c\})$ satisfies \MMS{}
    (vertex $g$ must be serviced by some agent, hence the \MMS{} cost cannot be smaller than 5),
    but it is not \EFone{}.
    In fact, there is no \EFone{} and \PO{} allocation in this instance,
    as an agent servicing $g$ has to service $f,e$ and $d$ as well
    (otherwise giving them to this agent would be a Pareto improvement).
    But then, even when we assign the remaining vertices, $a,b,c$, to the second agent,
    the allocation would violate \EFone{}.
    
    Finally, observe that allocation $(\{a,b,f\},\{c,d,e,g\})$
    is \EFone{}, but not \MMS{}.
\end{example} 

In order to study the (co-)existence of these fairness and efficiency notions, we use leximin optimality. Let us first set up some notation before defining it. Given an allocation $A$, we can \textit{sort it in non-increasing cost order}
to obtain allocation $B = \sort(A)$ such that
$c(B_1) \ge c(B_2) \ge \dots \ge c(B_n)$
and $B_i = A_{\pi(i)}$ for every agent $i \in [n]$ and some permutation of agents $\pi$.

\begin{definition}[Leximin Optimality]
    An allocation $A$ leximin dominates an allocation $A'$ if there is agent $i \in [n]$
    such that $c(B_i) < c(B'_i)$ and $c(B_j) = c(B'_j)$ for every $j \in [i-1]$,
    where $B=\sort(A)$ and $B'=\sort(A')$.
    An allocation is \textit{leximin optimal} if it is not leximin dominated by any other allocation.
\end{definition}

In other words, a leximin optimal allocation first minimizes the cost of the worst-off agent, then minimizes the cost of the second worst-off agent, and so on.
We note that a leximin optimal allocation is always Pareto optimal as well.


\section{Existence of Fair Allocations} \label{sec:fairallocations}

In this section, we consider the existence and computation of \EFone{} and \MMS{} allocations.

Recall that our cost functions are submodular and \textit{monotone}.
Thus, an \EFone{} allocation can be obtained by adapting an \textit{envy-graph} algorithm from \citet{lipton-envy-graph}: start with each agent having an empty set, pick an agent who currently has minimum cost (i.e., a sink of the envy-graph) and out of the unassigned vertices give them the one that results in a minimal increase of the cost\footnote{The original algorithm by \citet{lipton-envy-graph} also had an envy-cycle elimination step, which we do not need. Observe that as agents have identical cost functions, envy-cycles cannot arise here.}. Repeat this till all vertices in $V\setminus \{h\}$ are allocated.
This algorithm always returns an \EFone{} solution
as long as valuations are identical and monotone.

\begin{proposition}\label{prop:ef1}
Given a delivery instance $I=\langle [n],G, h\rangle$, an \EFone{} allocation always exists and can be computed in polynomial time.
\end{proposition}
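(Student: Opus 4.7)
The plan is to adapt the envy-graph routine of Lipton et al.\ to our chores-like setting with monotone submodular costs, exactly as sketched in the paragraph preceding the statement. I would initialize $A_i = \emptyset$ for every $i\in [n]$ and run $m$ iterations, where in each iteration I (i) pick an agent $i^\star \in \argmin_{i\in [n]} c(A_i)$, breaking ties arbitrarily, (ii) choose an unassigned vertex $v^\star$ minimizing the marginal cost $c(A_{i^\star}\cup\{v\}) - c(A_{i^\star})$ among unassigned $v$, and (iii) update $A_{i^\star}\leftarrow A_{i^\star}\cup\{v^\star\}$. The polynomial running time is immediate: there are $m$ iterations, and each cost evaluation reduces on trees to counting edges in an induced subtree via $c(S)=|E(G|_S)|$, which the paper has already noted is polynomial.

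Correctness would go by induction on the iteration count with the invariant that the current partial allocation is \EFone{}. The base case (all bundles empty) is vacuous. For the inductive step, assume the invariant holds just before $v^\star$ is assigned to $i^\star$. Any pair of agents neither of which is $i^\star$ keeps its bundles unchanged, so \EFone{} for such pairs is inherited from the previous step. For any $j\neq i^\star$, the crucial fact is that $c(A_{i^\star}) \leq c(A_j)$ by the choice of $i^\star$ as a minimum-cost agent. Writing $A'_{i^\star}:=A_{i^\star}\cup\{v^\star\}$ for the updated bundle,
\[
c(A'_{i^\star}\setminus\{v^\star\}) \;=\; c(A_{i^\star}) \;\leq\; c(A_j),
\]
so $v^\star$ serves as an \EFone{} witness in both directions: if $j$ envies $i^\star$'s new bundle, removing $v^\star$ from $A'_{i^\star}$ brings its cost down to $c(A_{i^\star})\leq c(A_j)$; and if $i^\star$ now envies $j$, removing the same $v^\star$ from $i^\star$'s own bundle gives the same bound. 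Hence the invariant persists, and after $m$ iterations the resulting complete allocation is \EFone{}.

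I do not anticipate a deep obstacle, since this is essentially a direct port of the chores envy-graph argument. The only points worth articulating carefully are (a) the chores-style convention that the "sink" role is played by an agent with minimum cost rather than maximum utility, and (b) the observation that the EF1 certification above needs to be checked in \emph{both} directions between $i^\star$ and every other $j$, which the single witness $v^\star$ handles at once. It is worth noting that the argument does not actually invoke submodularity, nor even monotonicity, of $c$; it depends only on the min-cost selection rule and on being able to compute $c(\cdot)$ efficiently on trees.
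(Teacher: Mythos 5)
Your algorithm is exactly the one the paper sketches (the paper gives no formal proof beyond that sketch), and arguing correctness by the inductive invariant is the right move. However, your case analysis for the pair $\{i^\star, j\}$ has a flaw. Under the paper's definition, \EFone{} for the ordered pair $(j,i^\star)$ requires exhibiting a removable vertex in $A_j$, the (potentially) envious agent's \emph{own} bundle, not in $A'_{i^\star}$; so your first ``direction'' (``if $j$ envies $i^\star$'s new bundle, removing $v^\star$ from $A'_{i^\star}$\dots'') does not certify the condition it claims to. As written, both of your cases establish only the ordered pair $(i^\star,j)$, namely $c(A'_{i^\star}\setminus\{v^\star\}) = c(A_{i^\star}) \le c(A_j)$. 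The direction you still owe is that $j$'s \EFone{} condition toward $i^\star$ survives the update. It does, but for a different reason: by the inductive hypothesis, either $A_j=\emptyset$ or there is some $x\in A_j$ with $c(A_j\setminus\{x\})\le c(A_{i^\star})$, and by monotonicity $c(A_{i^\star})\le c(A'_{i^\star})$, so the same witness $x$ still works after $i^\star$'s bundle grows.

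This also contradicts your closing remark: the argument \emph{does} invoke monotonicity, precisely in that step (if adding $v^\star$ could decrease the cost of $i^\star$'s bundle, agent $j$'s old witness could fail); the paper itself flags this by saying the algorithm returns \EFone{} ``as long as valuations are monotone.'' You are right that submodularity plays no role in correctness (the minimum-marginal-cost selection rule is not needed for \EFone{}; any unassigned vertex would do), and the polynomial-time claim is fine since $c(S)=|E(G|_S)|$ is computable in polynomial time on trees.
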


We now shift our focus to allocations that satisfy \MMS{}. An \MMS{} allocation in our setting always exists. This follows from the fact
that agents have identical cost functions (an allocation that minimizes the maximum cost will satisfy \MMS{}).
However, finding such an allocation is \NPH{}.
To establish this, we first show the hardness
of finding the \MMS{} cost.

\begin{theorem}\label{thm:mms:hardness}
Given a delivery instance $I=\langle [n],G ,h\rangle$, an \MMS{} allocation always exists. However, finding  i) the \MMS{} cost in \NPH{} and ii) finding  an MMS allocation is \NPH{}.
\end{theorem}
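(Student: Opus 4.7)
The plan is to split the statement into two parts: existence and computational hardness. For existence, I would observe that all agents share the same cost function $c$, which depends only on the set of vertices assigned. Consider the allocation $A^\star$ that minimizes $\max_{i\in[n]} c(A_i)$ over all complete allocations $\Pi^n$ (such a minimizer exists because $\Pi^n$ is finite). By the very definition of the minimax share,
\[
\MMS_i(I) = \min_{A\in \Pi^n}\max_{j\in [n]} c(A_j) = \max_{j\in[n]} c(A^\star_j),
\]
which is the same value for every agent $i$ (since cost functions are identical). Then for each $i\in[n]$, $c(A^\star_i) \le \max_{j} c(A^\star_j) = \MMS_i(I)$, so $A^\star$ is an \MMS{} allocation.

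For hardness, the plan is to reduce from the problem of computing the \MMS{} cost, which is already shown \NPH{} in \cref{prop:mms:hardness}. The reduction is immediate: suppose we had a polynomial-time algorithm for finding an \MMS{} allocation $A$. I would argue that $\max_{i\in[n]} c(A_i) = \MMS_i(I)$. The upper bound follows because $A$ being \MMS{} gives $c(A_i) \le \MMS_i(I)$ for every $i$. The lower bound follows because, by definition, $\MMS_i(I) \le \max_j c(A_j)$ for every complete allocation, in particular for $A$. Hence computing $\max_i c(A_i)$ on the returned allocation yields the \MMS{} cost, contradicting \cref{prop:mms:hardness}. Computing $c(A_i) = |E(G|_{A_i})|$ is polynomial on trees, so the reduction runs in polynomial time.

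I do not anticipate a real obstacle here: the existence proof is a direct consequence of identical cost functions, and the hardness proof is a one-line reduction to the already-proved \cref{prop:mms:hardness}. The only care needed is to make explicit that in the identical-cost setting, the minimax optimal allocation's maximum cost equals each agent's \MMS{} threshold, which then yields both parts cleanly.
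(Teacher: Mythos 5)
Your proposal is correct and follows essentially the same route as the paper: existence from identical cost functions via the minimax-optimal allocation, and hardness by observing that the maximum cost in any \MMS{} allocation equals the \MMS{} cost, so a polynomial-time algorithm for finding an \MMS{} allocation would compute the \MMS{} cost, contradicting \cref{prop:mms:hardness}. Your write-up merely makes the two-sided inequality behind ``$\max_i c(A_i) = \MMS_i(I)$'' slightly more explicit than the paper does.
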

\vspace{-3mm}
\begin{proof}
The existence of \MMS{} allocations follows from the cost functions being identical. The same allocation will give the \MMS{} threshold for all agents, hence would satisfy \MMS{} for all.

For part i) we give a reduction from \textsc{3-Partition} to a setting with unweighted trees.
In the \textsc{3-Partition} problem, we are given $3k$ positive integers
$S=\{s_1,\ldots, s_{3k}\}$ that sum up to $kT$ for some $T \in \mathbb{N}$.
The task is to decide if there is a partition of $S$
into $k$ pairwise disjoint subsets, $P = P_1,\dots,P_k \subseteq S$ such that
the elements in each subset sum up to $T$.
This problem is known to be \NPH{}~\citep{johnson1979computers}, even when the values of the integers are polynomial in $k$.

For each instance of \textsc{3-Partition} let us construct
a delivery instance with $k$ agents.
To this end, we construct a tree where for each integer $s_i \in S$,
let us take $s_i$ vertices $v^1_i,\dots,v^{s_i}_i$,
which, with the hub, $h$, gives as a total of $3kT+1$ vertices and all edges having a weight of $1$.
Next, for every $i \in [3k]$, let us connect all consecutive vertices in sequence $h, v^1_i,\dots,v^{s_i}_i$ with an edge to form a path.
In this way, we obtain a graph that consists of the hub and
$3k$ paths of different lengths outgoing from the hub
(such graphs are known as \textit{spider} or \textit{starlike} graphs).
See \cref{fig:mms:hardness} for an illustration.
As per our convention for trees, when defining edge costs, we count the traversal of each edge exactly once.

Let us show that minimax share cost in this instance is $T$,
if and only if,
there exists a desired partition in the original \textsc{3-Partition} instance.
If there is a partition $P$, consider allocation $A$ obtained by
assigning to every agent $j \in [k]$,
all vertices corresponding to integers in $P_j$,
i.e., $A_j = \cup_{s_i \in P_j} \{v^1_i,\dots,v^{s_i}_i\}$.
In this allocation,
the cost of every agent will be equal to $T$.
Further, the maximum cost of an agent in any allocation
cannot be smaller than $T$.
If not, it would make the total cost smaller than the number of edges in the graph, which is not possible.
Hence, the \MMS{} cost is $T$.

Conversely, 
if we know that the \MMS{} cost is $T$, we can show that a 3-Partition exists.
Take an arbitrary \MMS{} allocation $A$.
Consider the leaves in the bundle of an arbitrary agent $j \in [k]$,
i.e., vertices of the form $v^{s_i}_i \in A_j$ for some $i \in [3k]$.
Observe that to service each such leaf,
agent $j$ has to traverse $s_i$ edges
and these costs are summed when the agent services multiple leaves.
Now,
we know that the total cost of $j$ is at most $T$, i.e., $c(A_j)\le T$.
Thus, the sum of integers corresponding to the leaves serviced by each agents is at most $T$.
Hence, the leaves serviced by agents
give us a desired partition $P$.
Consequently, finding the \MMS{} cost is \NPH{}

For part ii), observe that in the constructed instance from part i), 
in an \MMS{} allocation $A$ the cost of the agent  that is the worst off, must be equal to \MMS{} cost, i.e.,
$\max_{i \in [n]} c(A_i) = \MMS{}_i(I)$. This is specifically true as the cost functions are identical. 
Hence, if we had a polynomial time algorithm for finding an \MMS{} allocation, we would be able to find \MMS{} cost by looking at the maximum cost of an agent.  Consequently, we could decide a 3-partition exists. As a result, finding an \MMS{} allocation is \NPH{}.
\end{proof}

\begin{figure}[t]
    \centering

   \begin{tikzpicture}
        \def\xs{0.8cm} 
        \def\ys{0.5cm} 
        \def\x{0cm} 
        \def\y{0cm} 
        \def\ls{\footnotesize} 
        
        \tikzset{
            node_blank/.style={circle,draw,minimum size=0.5cm,inner sep=0, color=white}, 
            node/.style={circle,draw,minimum size=0.3cm,inner sep=0, fill = black!05},
            node_h/.style={circle,draw,minimum size=0.45cm,inner sep=0, fill = blue!20, font=\footnotesize},
            node_emph/.style={circle, minimum size=0.55cm, black!15, fill = black!15, draw,inner sep=0, font=\footnotesize},
            edge/.style={draw,thick,sloped,-,above,font=\footnotesize},
            arrow/.style={draw, single arrow, minimum width = 0.9cm, minimum height=\y-6*\x+\s, fill=black!10},
            blank/.style={},
        }
        
        \node[node_emph] (_) at (\x + 0*\xs, 0*\ys + \y) {};
        \node[node_h] (h) at (\x + 0*\xs, 0*\ys + \y) {\ls $h$};
        \node[node] (a1) at (\x + -1*\xs, 1*\ys + \y) {};
        \node[node] (a2) at (\x + -2*\xs, 1*\ys + \y) {};
        \node[node] (a3) at (\x + -3*\xs, 1*\ys + \y) {};
        \node[node] (b1) at (\x + -1*\xs, 0*\ys + \y) {};
        \node[node] (b2) at (\x + -2*\xs, 0*\ys + \y) {};
        \node[node] (b3) at (\x + -3*\xs, 0*\ys + \y) {};
        \node[node] (c1) at (\x + -1*\xs, -1*\ys + \y) {};
        \node[node] (c2) at (\x + -2*\xs, -1*\ys + \y) {};
        \node[node] (c3) at (\x + -3*\xs, -1*\ys + \y) {};
        \node[node] (d1) at (\x + 1*\xs, -1*\ys + \y) {};
        \node[node] (d2) at (\x + 2*\xs, -1*\ys + \y) {};
        \node[node] (d3) at (\x + 3*\xs, -1*\ys + \y) {};
        \node[node] (d4) at (\x + 4*\xs, -1*\ys + \y) {};
        \node[node] (d5) at (\x + 5*\xs, -1*\ys + \y) {};
        \node[node] (d6) at (\x + 6*\xs, -1*\ys + \y) {};
        \node[node] (e1) at (\x + 1*\xs, 0*\ys + \y) {};
        \node[node] (e2) at (\x + 2*\xs, 0*\ys + \y) {};
        \node[node] (e3) at (\x + 3*\xs, 0*\ys + \y) {};
        \node[node] (e4) at (\x + 4*\xs, 0*\ys + \y) {};
        \node[node] (e5) at (\x + 5*\xs, 0*\ys + \y) {};
        \node[node] (e6) at (\x + 6*\xs, 0*\ys + \y) {};
        \node[node] (f1) at (\x + 1*\xs, 1*\ys + \y) {};

        \path[edge]
        (h) edge (a1)
        (a1) edge (a2)
        (a2) edge (a3)
        (h) edge (b1)
        (b1) edge (b2)
        (b2) edge (b3)
        (h) edge (c1)
        (c1) edge (c2)
        (c2) edge (c3)
        (h) edge (d1)
        (d1) edge (d2)
        (d2) edge (d3)
        (d3) edge (d4)
        (d4) edge (d5)
        (d5) edge (d6)
        (h) edge (e1)
        (e1) edge (e2)
        (e2) edge (e3)
        (e3) edge (e4)
        (e4) edge (e5)
        (e5) edge (e6)
        (h) edge (f1)
        ;
      
    \end{tikzpicture}
    \caption{{\small An example of a spider (or starlike) graph.
    }}
    \label{fig:mms:hardness}
\end{figure}


\vspace{-3mm}
\section{Characterizing Fair and Efficient Solutions}\label{sec:charac}

The possible incompatibility of fairness and efficiency in our setting was established in \cref{ex:allocations}. Recall that \cref{ex:allocations} gives an instance which does not admit an \EFone{} and \PO{} allocation.  
In this section, we exploit the structure provided by trees to characterize the space of delivery instances for which fair and efficient allocations exist. We first discuss social optimality and then turn our attention to Pareto optimality.


\subsection{Pareto Optimality}
\label{subsec:po}
We first focus on Pareto optimality.
We begin by noting that an $\MMS{}$ and \PO{} allocation always exists, but the same is not true for an $\EFone{}$ and \PO{} allocation.

\begin{restatable}{proposition}{propnonexist}\label{prop:non-existenceEFonePO}
 Given a delivery instance $I = \langle [n], G , h \rangle$, 
 \begin{itemize}
     \item[a.]
     an \EFone{} and \PO{} allocation need not exist,
     \item[b.]
     an \MMS{} and \PO{} allocation always exists.
 \end{itemize}
\end{restatable}

\begin{proof}
    For \textit{a},
    the proof follows from the instance given in \cref{ex:allocations}. As we have discussed earlier, in this example, no allocation is simultaneously \EFone{} and \PO{}. 

    For \textit{b},
    note that as the cost functions are identical across the agents, a leximin optimal allocation will always be \MMS{} and \PO{}. 
    As a result, an \MMS{} and \PO{} allocation always exists because a leximin optimal allocation always exists. 
    
\end{proof}

The central result of this section
is the proof that every \EFone{} and \PO{} allocation
will satisfy \MMS{} as well.
This comes in contrast to typical fair chore division settings, where under additive preferences
\EFone{} and \MMS{} are independent notions
even in the presence of efficiency requirements.
To this end,
we first prove an insightful necessary condition for \EFone{} and \PO{} allocations:
in every such allocation, the pairwise difference in the costs of agents cannot be greater than 1.

\begin{lemma}
    \label{lem:ef1+po}
Given a delivery instance $I = \langle [n], G, h \rangle$ and an \EFone{} allocation $A$,
if $|c(A_i)-c(A_j)|>1$ for some agents $i,j \in [n]$, then $A$ is not \PO{}.
\end{lemma}

\begin{proof}
    Given instance $I$, let $A$ be an \EFone{} allocation.     
    Without loss of generality, 
    assume that $A$ is sorted in non-increasing cost order,
    i.e.,
    $c(A_1) \ge \dots \ge c(A_n)$
    (otherwise we can relabel the agents).
    We will show that if $c(A_n) < c(A_1) - 1$,
    then $A$ is not \PO{}, i.e.,
    it is Pareto dominated by some allocation $A'$
    (not necessarily \EFone{}).

    For every vertex $x \in V \setminus \{h\}$,
    by $p(x)$ let us denote the \emph{parent} of $x$
    in a tree $G$ rooted in $h$.
    Also, for every agent $i \in [n]$,
    let $w(i)$ be the \emph{worst} vertex in $i$'s bundle, i.e.,
    the vertex which on removal gives the largest decrease in cost
    (if there is more than one we take an arbitrary one).
    Formally,
    $$w(i) = \argmax_{x \in A_i} c(A_i) - c(A_i \setminus \{x\}).$$
    
    As $A$ is an \EFone{} allocation, for every agent $i$ with maximal cost, i.e.,
    such that $c(A_i) = c(A_1)$, we have that
    \begin{equation}
        \label{eq:lem:1}
        c(A_i \setminus \{w(i)\}) \le c(A_n) < c(A_i) - 1.
    \end{equation}
    Observe that this is only possible if the parent of $w(i)$ is not serviced by $i$, i.e., $p(w(i)) \not \in A_i$.
    To construct allocation $A'$ which Pareto dominates $A$,
    we look at the agent servicing the parent of the worst vertex of agent 1, call this agent $i_1$ .
    If $i_1$ incurs maximum cost, we look at the agent servicing the parent of the worst vertex of $i_1$.
    We continue in this manner and obtain a maximal sequence of pairwise disjoint agents $1 = i_0, i_1, \dots, i_k$
    such that $p(w(i_{s-1})) \in A_{i_s}$ and $c(A_{i_s}) = c(A_1)$ for every $s \in [k]$.
    
    The cost incurred by the agent servicing the parent of the worst vertex of $i_k$,
    which we denote by $i^*$ (i.e., $p(w(i_k))\in A_{i^*}$) can create two cases.
    Either $i^*$ does not incur maximum cost, i.e., $c(A_{i^*}) < c(A_1)$
    (Case 1), or
    it already appears in the sequence, i.e.,
    $i^* = i_j$ for some $j < k$ (Case 2).\\

    \noindent \underline{\textbf{Case 1}}.
    Consider the allocation $A'$ which is obtained from $A$
    by exchanging the bundles of agent $i_k$
    and agent $i^*$
    with the exception of $w(i_k)$
    (which continues to be serviced by $i_k$).
    See \cref{fig:proofEF1PO} for an illustration.
    Formally, $A'_{i^*} = A_{i_k} \setminus \{w(i_k)\}$,
    $A'_{i_k} \! =\! A_{i^*} \cup \{w(i_k)\}$, and
    $A'_t \! =\! A_t$, for every $t \in [n] \setminus \{i^*\!,i_k\}$.
    Since costs of agents in $[n] \setminus \{i^*,i_k\}$
    are not affected, 
    it suffices to prove that the cost of either $i_k$ or $i^*$
    decreases
    without increasing the other's cost.
    To this end, observe that since parent of $w(i_k)$
    belongs to $A_{i^*}$,
    adding this vertex to $A_{i^*}$ increases the cost by 1, i.e.,
    \begin{equation}
        \label{eq:lem:2}
        c(A'_{i_k}) = c(A_{i^*}) + 1.
    \end{equation}
    Now, let us consider two subcases based on
    the original difference in costs of agents $i_k$ and $i^*$.\\

    \noindent \underline{\textit{Case 1a}}.
    If this difference is greater than one, i.e.,
    $c(A_{i_k}) > c(A_{i^*}) + 1$,
    then from \cref{eq:lem:2} we get that
    $$c(A'_{i_k}) = c(A_{i^*}) + 1 < c(A_{i_k}).$$
    Hence, the cost of $i_k$ decreases.
    
    For $i^*$, from \cref{eq:lem:1}, we have that
    $$c(A'_{i^*}) = c(A_{i_k} \setminus \{w(i_k)\}) \le c(A_n) \le c(A_{i^*}),$$
    so agent $i^*$ does not suffer from the exchange. Consequently, when $c(A_{i_k}) > c(A_{i^*}) + 1$, $A'$ Pareto dominates $A$.\\

    \noindent \underline{\textit{Case 1b}}.
    Otherwise, the difference in costs of $i_k$ and $i^*$ is exactly one,
    i.e., $c(A_{i_k}) = c(A_{i^*}) + 1$.
    Recall from \cref{eq:lem:2} that $c(A'_{i_k})=c(A_{i^*}+1)=c(A_{i_k})$. Thus,
    the cost of $i_k$ stays the same in $A'$.
    
    Now, in order to show that $A'$ Pareto dominates $A$, we need that $c(A_{i^*}')<c(A_{i^*})$. 
    As, $c(A_n) < c(A_1) - 1$, and $c(A_1)=c(A_{i_k})=c(A_{i^*})+1$
    it must be that $c(A_n) < c(A_{i^*})$. 
    Thus, from \cref{eq:lem:1} we have
    $$c(A'_{i^*}) \le c(A_n) < c(A_{i^*}),$$ 
    i.e., the cost of $i^*$ decreases under $A'$.
    As a result, even when $c(A_{i_k}) = c(A_{i^*}) + 1$, $A'$ Pareto dominates $A$.\\
    
\begin{figure}[t]
    \centering
\begin{tikzpicture}
        \def\xs{0.9cm} 
        \def\ys{0.9cm} 
        \def\x{0cm} 
        \def\y{0cm} 
        \def\ls{\footnotesize} 
        
        \tikzset{
            node/.style={circle,draw,minimum size=0.3cm,inner sep=0, fill = black!05},
            edge/.style={draw,thick,sloped,-,above,font=\footnotesize},
            arrow/.style={draw, single arrow, minimum width = 0.5cm, minimum height=\y-6*\x+\s, fill=black!10},
            operation/.style={sloped,>=stealth,above,font=\footnotesize},
            blank/.style={},
            alloc_a/.style={circle, draw, color=blue!30, minimum size=0.5cm, line width=2.5},
            alloc_b/.style={rectangle, draw, color=red!30, minimum size=0.45cm, line width=2.5},
            alloc_c/.style={regular polygon, regular polygon sides=5, draw, color=green!50, minimum size=0.55cm, line width=2.5},
        }

        \node[blank] (0) at (\x, -0.3*\ys + \y) {};
        \node[node] (1) at (\x, -1*\ys + \y) {};
        \node[node] (2) at (\x, -2*\ys + \y) {};
        \node[node] (2b) at (\x + -0.9*\xs, -2*\ys + \y) {};
        \node[node, label = {[label distance=0.07cm] 270: \ls $w(i_k)$}] (3) at (\x, -3*\ys + \y) {};
        \node[blank] (label) at (\x + - 1*\xs, -0.7*\ys + \y) {$A$};
        \node[blank] (arrow_start) at (\x + 0.4*\xs, -2*\ys + \y) {};
        \node[blank] (subcaption) at (\x + 1.1cm, -4.3*\ys + \y) {\underline{\textbf{Case 1}}};

        \path[edge]
        (0) edge (1)
        (1) edge (2)
        (1) edge (2b)
        (2) edge (3)
        ;
        
        \node[alloc_a] (A1) at (\x, -1*\ys + \y) {};
        \node[alloc_a] (A2) at (\x, -2*\ys + \y) {};
        \node[alloc_b] (A2b) at (\x + -0.9*\xs, -2*\ys + \y) {};
        \node[alloc_b] (A3) at (\x, -3*\ys + \y) {};

        \def\x{2.5cm} 
        \node[blank] (0) at (\x, -0.3*\ys + \y) {};
        \node[node] (1) at (\x, -1*\ys + \y) {};
        \node[node] (2) at (\x, -2*\ys + \y) {};
        \node[node] (2b) at (\x + -0.9*\xs, -2*\ys + \y) {};
        \node[node, label = {[label distance=0.07cm] 270: \ls $w(i_k)$}] (3) at (\x, -3*\ys + \y) {};
        \node[blank] (label) at (\x + - 1*\xs, -0.7*\ys + \y) {$A'$};
        \node[blank] (arrow_end) at (\x - 1.4*\xs, -2*\ys + \y) {};

        \path[edge]
        (0) edge (1)
        (1) edge (2)
        (1) edge (2b)
        (2) edge (3)
        ;
        
        \node[alloc_b] (A1) at (\x, -1*\ys + \y) {};
        \node[alloc_b] (A2) at (\x, -2*\ys + \y) {};
        \node[alloc_a] (A2b) at (\x + -0.9*\xs, -2*\ys + \y) {};
        \node[alloc_b] (A3) at (\x, -3*\ys + \y) {};


        \path[->,draw,very thick]
        (arrow_start) edge[operation] (arrow_end)
        ;

        \def\x{6cm}
        \node[blank] (0) at (\x, -0.3*\ys + \y) {};
        \node[node] (1) at (\x, -1*\ys + \y) {};
        \node[node] (2) at (\x, -2*\ys + \y) {};
        \node[node] (2b) at (\x + -1.1*\xs, -2*\ys + \y) {};
        \node[node] (2c) at (\x + 1.1*\xs, -2*\ys + \y) {};
        \node[node, label = {[label distance=0.07cm] 270: \ls $w(i_2)$}] (3) at (\x, -3*\ys + \y) {};
        \node[node, label = {[label distance=0.07cm] 270: \ls $w(i_1)$}] (3b) at (\x + -1.1*\xs, -3*\ys + \y) {};
        \node[node, label = {[label distance=0.07cm] 270: \ls $w(i_3)$}] (3c) at (\x + 1.1*\xs, -3*\ys + \y) {};
        \node[blank] (label) at (\x + - 1*\xs, -0.7*\ys + \y) {$A$};
        \node[blank] (arrow_start) at (\x + 1.4*\xs, -2*\ys + \y) {};
        \node[blank] (subcaption) at (\x + 1.75cm, -4.3*\ys + 1*\y) {\underline{\textbf{Case 2}}};

        \path[edge]
        (0) edge (1)
        (1) edge (2)
        (1) edge (2b)
        (1) edge (2c)
        (2) edge (3)
        (2b) edge (3b)
        (2c) edge (3c)
        ;

        \node[alloc_a] (A1) at (\x, -1*\ys + \y) {};
        \node[alloc_a] (A2) at (\x, -2*\ys + \y) {};
        \node[alloc_b] (A2b) at (\x + -1.1*\xs, -2*\ys + \y) {};
        \node[alloc_c] (A2c) at (\x + 1.1*\xs, -2*\ys + \y) {};
        \node[alloc_b] (A3) at (\x, -3*\ys + \y) {};
        \node[alloc_c] (A3b) at (\x + -1.1*\xs, -3*\ys + \y) {};
        \node[alloc_a] (A3c) at (\x + 1.1*\xs, -3*\ys + \y) {};

        \def\x{9.5cm} 
        \node[blank] (0) at (\x, -0.3*\ys + \y) {};
        \node[node] (1) at (\x, -1*\ys + \y) {};
        \node[node] (2) at (\x, -2*\ys + \y) {};
        \node[node] (2b) at (\x + -1.1*\xs, -2*\ys + \y) {};
        \node[node] (2c) at (\x + 1.1*\xs, -2*\ys + \y) {};
        \node[node, label = {[label distance=0.07cm] 270: \ls $w(i_2)$}] (3) at (\x, -3*\ys + \y) {};
        \node[node, label = {[label distance=0.07cm] 270: \ls $w(i_1)$}] (3b) at (\x + -1.1*\xs, -3*\ys + \y) {};
        \node[node, label = {[label distance=0.07cm] 270: \ls $w(i_3)$}] (3c) at (\x + 1.1*\xs, -3*\ys + \y) {};
        \node[blank] (label) at (\x + - 1*\xs, -0.7*\ys + \y) {$A'$};
        \node[blank] (arrow_end) at (\x - 1.4*\xs, -2*\ys + \y) {};

        \path[edge]
        (0) edge (1)
        (1) edge (2)
        (1) edge (2b)
        (1) edge (2c)
        (2) edge (3)
        (2b) edge (3b)
        (2c) edge (3c)
        ;
        
        \node[alloc_a] (A1) at (\x, -1*\ys + \y) {};
        \node[alloc_a] (A2) at (\x, -2*\ys + \y) {};
        \node[alloc_b] (A2b) at (\x + -1.1*\xs, -2*\ys + \y) {};
        \node[alloc_c] (A2c) at (\x + 1.1*\xs, -2*\ys + \y) {};
        \node[alloc_a] (A3) at (\x, -3*\ys + \y) {};
        \node[alloc_b] (A3b) at (\x + -1.1*\xs, -3*\ys + \y) {};
        \node[alloc_c] (A3c) at (\x + 1.1*\xs, -3*\ys + \y) {};

        \path[->,draw,very thick]
        (arrow_start) edge[operation] (arrow_end)
        ;

    \end{tikzpicture}
    \caption{\small{ Illustrating the proof of Prop.  \ref{lem:ef1+po}.
    In
    Case 1,  $i_k$
    (red squares)
    exchanges its bundle with $i^*$
    (blue circles)
    except for
    $w(i_k)$.
    In
    Case 2,
     $i_1$, $i_2$, and $i_3$
    (pentagons, squares, and circles, resp.)
    swap their worst vertices along the cycle.}}
    \label{fig:proofEF1PO}
\end{figure}

    \noindent \underline{\textbf{Case 2}}.
    When $i^*=i_j$ for some $j < k$,
    we have a cycle of agents
    $i^* = i_j, i_{j+1}, \dots, i_k$ such that $c(A_{i_s}) = c(A_1)$
    and $p(w(i_{s})) \in A_{i_{s+1}}$ for every $s \in \{j,\dots,k\}$
    (we denote $i_j$ as $i_{k+1}$ as well for notational convenience).
    Here, we consider two subcases,
    based on whether it happens somewhere in the cycle
    that the parent of the worst vertex of one agent
    is the worst vertex of the next agent,
    i.e., $p(w(i_{s})) = w(i_{s+1})$ for some $s \in \{j,\dots,k\}$.\\

    \noindent \underline{\textit{Case 2a}}.
    We first look at the subcase where there does exist $s \in \{j,\cdots, k\}$ s.t. $p(w(i_s))=w(i_{s+1})$.
    Here, we consider allocation $A'$ which is obtained from $A$ by giving $w(i_{s+1})$ to agent $i_s$.
    Formally, $A'_{i_{s+1}} = A_{i_{s+1}} \setminus \{w(i_{s+1})\}$,
    $A'_{i_s} = A_{i_s} \cup \{w(i_{s+1})\}$, and
    $A'_t = A_t$, for every $t \in [n] \setminus \{i_s,i_{s+1}\}$.
    
    Now, the cost of agent $i_{s+1}$ decreases as it no longer services its worst vertex.
    Observe that agent $i_s$ was already servicing a child of $w(i_{s+1})$ in $A$.
    Consequently, it was visiting $w(i_{s+1})$ on the way.
    Hence, the cost of $i_s$ stays the same.
    As the bundles of the remaining agents did not change,
    $A'$ Pareto dominates $A$. \\

    \noindent \underline{\textit{Case 2b}}.
    Finally, we have the case where there is no agent in the cycle for which the parent of its worst vertex is the worst vertex of the next agent. 
    In this case we consider the allocation $A'$ obtained from $A$ by swapping the worst vertices of the agents along the cycle
    (illustrated in \cref{fig:proofEF1PO}).
    Formally, $A'_{i_s} = (A_{i_s} \setminus \{w(i_s)\}) \cup \{w(i_{s-1})\}$
    for every $s \in \{j+1,\dots,k+1\}$ and $A'_i = A_i$ for every $i \in [n] \setminus \{i_{j+1},\dots,i_{k+1}\}$.
    
    As each  $i_s$ is servicing the parent of the worst vertex of the previous agent in the cycle, i.e., $p(w(i_{s-1})) \in A'_{i_s}$,
    servicing  $w(i_{s-1})$ incurs an additional cost of 1.
    However, from \cref{eq:lem:1} giving away the worst vertex decreases the cost by more than 1.
    Hence, the cost of each agent in the cycle decreases.
    The other agents' costs stay the same, 
    thus $A'$ Pareto dominates $A$.
\end{proof} 

We can now show that \EFone{} and \PO{} imply \MMS{}.

\begin{restatable}{theorem}{efonepoimpliesmms} \label{thrm:ef1+po:implies_mms}
Given a delivery instance $I= \langle [n], G, h \rangle$, an \EFone{} and \PO{} allocation exists if and only if for any leximin optimal allocation $A$, $\max_{i,j\in [n]}|c(A_i)-c(A_j)|\leq 1$. Further, every \EFone{} and \PO{} allocation satisfies \MMS{}. 
\end{restatable}

\begin{proof}
    We begin by proving  that an \EFone{} and \PO{} allocation $A$ exists if and only if for any leximin optimal allocation $A$, we have that $\max_{i,j\in [n]}|c(A_i)-c(A_j)|\leq 1$. Observe that by \cref{lem:ef1+po}, we have that under a \EFone{} and \PO{} allocation $A$, it must be that $\max_{i,j\in [n]}|c(A_i)-c(A_j)|\leq 1$. We shall now that $A$ is necessarily leximin optimal. 
    
    Without loss of generality,
    let $A$ be s.t.
    $c(A_1) \ge \dots \ge c(A_n)$.
    For contradiction, assume that there exists $A'$
    that leximin dominates $A$. As the agents have identical costs and we are only comparing for leximin domination, we can assume without loss of generality that $c(A_1')\geq \dots \geq c(A_n')$.
    Thus, there exists $i \in [n]$ such that
    $c(A'_i) < c(A_i)$ and $c(A'_j)= c(A_j)$ for every $j \in [i-1].$.
    
    Fix an agent $j \in [n]$ such that $j >i$ and thus $c(A_j)\leq c(A_i)$.
    From \cref{lem:ef1+po} we know that it must be that
    either $c(A_j) = c(A_i)$ or $c(A_j) = c(A_i)-1$.
    Recall that $A'$ is also sorted and thus $c(A'_j) \le c(A'_i) < c(A_i)$.
    As a result,  it must be that $c(A'_j) < c(A_i)-1 \le c(A_j)$.
    Thus, $A'$ also Pareto dominates $A$,
    which contradicts the assumption that $A$ is \PO{}.

    Now for the converse, let $A$ be a leximin optimal allocation such that $\max_{i,j\in [n]}|c(A_i)-c(A_j)|\leq 1$. Clearly, $A$ is \PO{}. We shall now show that $A$ is \EFone{}.  Fix $i,j\in [n]$, arbitrarily. We shall show that $i$ is \EFone{} towards $j$. Firstly, if $c(A_i)\leq c(A_j)$, clearly $i$ does not envy $j$. From \cref{lem:ef1+po}, the only remaining case is that $c(A_i)=c(A_j)+1$. Choose a leaf $x$ in the graph $G|A_i$. That is $x\in A_i$ is such that $c(A_i\setminus \{x\})\leq c(A_i)-1=c(A_j)$. Consequently, $A$ is \EFone{}.

    As a result, an \EFone{} and \PO{} allocation will always be leximin optimal. Hence, whenever an \EFone{} and \PO{} allocation exists, it must also be \MMS{}.
\end{proof}


Finally, by modifying the proof of \cref{thm:mms:hardness}
we show that deciding if there exists a \PO{}
allocation that is \EFone{}
is also \NPH{}. 
Also, we note that
hardness for \MMS{} and \PO{}
is directly implied by
\cref{thm:mms:hardness}. 

\begin{restatable}{proposition}{proppohardness}\label{prop:po:hardness}
Given a delivery instance $I=\langle [n],G,h\rangle$,
checking whether there exists an \EFone{} and \PO{} or
finding an \MMS{} and \PO{} allocation is \NPH{}.
\end{restatable}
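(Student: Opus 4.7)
The plan is to reuse the reduction from \textsc{3-Partition} to a spider (starlike) graph from the proof of \cref{prop:mms:hardness}, and then exploit a structural collapse that drives both subclaims: on a spider graph, every \PO{} allocation is in fact \SO{}. To see this, take a \PO{} allocation $A$ and suppose for contradiction that some branch $B$ outgoing from $h$ is serviced by at least two agents. Let $a$ be an agent servicing the deepest vertex of $B$, at depth $d$, and consider reassigning every vertex of $B$ at depth at most $d$ to $a$. Agent $a$'s minimal connected subgraph intersected with $B$ remains the length-$d$ prefix, so $c(A_a)$ is unchanged; meanwhile any other agent $j$ who previously serviced a vertex in $B$ contributed a strictly positive amount from $B$ to $c(A_j)$, which now drops to zero. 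This is a strict Pareto improvement, contradicting \PO{}. Hence in every \PO{} allocation each branch is fully contained in one bundle, and by \cref{prop:so} the allocation is \SO{}.

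Given this collapse, on the reduction instance an \EFone{} and \PO{} allocation coincides with an \EFone{} and \SO{} allocation. By \cref{thm:ef1so}, such an allocation exists iff the branch lengths admit a partition into $k$ parts whose sums pairwise differ by at most $1$; since the branch lengths sum to $kT$, any such partition must have every part equal to $T$, i.e., a solution to \textsc{3-Partition}. Conversely, any \textsc{3-Partition} solution yields an \SO{} allocation in which every agent has cost exactly $T$, which is trivially \EFone{}. Thus deciding the existence of an \EFone{} and \PO{} allocation is \NPH{}.

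For the \MMS{} and \PO{} claim, observe that in any \MMS{} allocation the worst-off agent must have cost exactly $\MMS{}_i(I)$ (otherwise the minimax share cost could be lowered by this very allocation). Since \MMS{} and \PO{} allocations always exist by \cref{obs:leximin}, any algorithm that produces one lets us read off $\MMS{}_i(I)$ as the maximum agent cost. By the argument in the proof of \cref{prop:mms:hardness}, on the reduction instance $\MMS{}_i(I) = T$ exactly when the \textsc{3-Partition} instance is a yes-instance, so finding an \MMS{} and \PO{} allocation is \NPH{} as well.

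The main obstacle in this plan is the structural step showing that \PO{} coincides with \SO{} on spider graphs; once that is nailed down, both reductions follow directly from results already proved in the excerpt (\cref{prop:so}, \cref{thm:ef1so}, and the reduction underlying \cref{prop:mms:hardness}).
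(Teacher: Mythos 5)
Your proposal is correct and follows essentially the same route as the paper's proof: reuse the spider-graph reduction from \textsc{3-Partition}, show that on these instances every \PO{} allocation is \SO{} by handing an entire branch to the agent already servicing its leaf (a Pareto improvement otherwise), and then invoke the \SO{} characterizations. The only cosmetic differences are that you route the \EFone{} case through \cref{thm:ef1so} rather than \cref{prop:soexist}, and argue the \MMS{} case by reading off the \MMS{} cost as the maximum agent cost, both of which are sound.
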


\begin{proof}
We use the reduction from \cref{thm:mms:hardness} to prove this result. Firstly, consider an \MMS{} and \PO{} allocation.

\paragraph{\MMS{} and \PO{}.} Observe that the \MMS{} allocations in the case when the original instance does admit a 3-Partition  must also be \PO. Hence, finding an \MMS{} and \PO{} allocation  is \NPH{}.

\paragraph{\EFone{} and \PO{}.} For the same reduction, we now show hardness of \EFone{} and \PO{} allocations.  Recall that from \cref{lem:ef1+po} for any \EFone{} and \PO{} allocation, $A=(A_1,\ldots, A_n)$, for any $i,j$, it must be that $|c(A_i)-c(A_j)|\leq 1$.  
As $\sum_{i=1}^n c(A_i)=nT$, there exist $i$ s.t. $c(A_i)<T$ if and only if there exists $i'$ such that $c(A_{i'})>T$. Thus, $c(A_{i'})-c(A_i)\geq 2$ and for such an $i$, \EFone{} is violated. Consequently, an \EFone{} and \PO{} allocation must give cost $T$ to all agents.

Hence, an \EFone{} and \PO{} allocation exists if any only if a there exists a 3-Partition. Consequently, finding an \EFone{} and \PO{} allocation when one exists is \NPH{}.
\end{proof}

\subsection{Social Optimality}
\label{subsec:so}

We now give more specific characterizations for fair and socially optimal allocations.  Recall that in every socially optimal allocation, each edge is traversed by a unique agent, thus the sum of all agents' costs is $m$ (the number of edges in a tree with $m+1$ vertices).
This is a very demanding condition, hence it is not surprising that it may be impossible to satisfy it together with some fairness requirement.
We show that checking if there exists a social optimal and fair allocation is computationally hard. 

\begin{restatable}{proposition}{propsoexist}\label{prop:soexist}
Given a delivery instance $I=\langle [n], G ,h\rangle$, an \SO{} allocation that satisfies \EFone{} or \MMS{} need not exist. Moreover, checking whether an instance admits such an allocation is \NPH{}.
\end{restatable}

\begin{proof}
\cref{ex:allocations} shows that there can exist instances where no \SO{} allocation satisfies \MMS{} or \EFone{}. Now recall the proof of \cref{thm:mms:hardness} and \cref{prop:po:hardness}. We find that the same reduction helps us prove the intractability of finding fair and \SO{} allocations. We shall show that in the instance constructed in the reduction, a fair and \SO{} allocation exists if and only if there exists a 3-Partition.

We have already shown that finding fair and \PO{} allocations whenever they exist is \NPH{} in \cref{prop:soexist}. Using the same reduction (which is the one \cref{thm:mms:hardness}), we can show that in the instances constructed, we can show that any \PO{} allocation must be \SO{}. This will directly imply that finding a fair and \SO{} allocation is \NPH{}.

Let $A$ be an allocation that is not \SO{}. Thus, there must exist a vertex $v_j^t$ that is visited by multiple agents. Clearly, $v_j^t$ is not a leaf. 

Recall that the reduction constructs a spider graph, where we create a branch of $s_j$ vertices for each $s_j$ in the given \textsc{\textup{3-Partition}} instance. The vertices on branch $j$ are labelled $\{v_j^1,\cdots,v_j^{s_j}\}$ with $v_j^{s_j}$ being the leaf of the branch.

Let $v_j^{s_j}$ be serviced by agent $i$. Consider allocation $A'$ which is identical to $A$ with the exception that $i$ services all of $v_j^1,\cdots, v_j^{s_j}$. Clearly, the cost of $i$ is the same as that in $A$, but the costs of all other agents either decrease or remain the same. As multiple agents, including $i$, visit $v_j^t$, the cost of at least one agent reduces. Consequently, $A'$ Pareto dominates $A$. Thus, every \PO{} allocation in our constructed graph must be \SO{}.

Hence, any polynomial time algorithm to find an \MMS{} and \SO{} allocation or find an \EFone{} and \SO{} allocation when one exists, would also find the corresponding \PO{} allocations and solve 3-Partition. Thus, we have the proposition. 



\end{proof}

Despite the computational hardness result, we exploit the tree structure to characterize instances with such allocations.

\begin{restatable}{proposition}{propso}\label{prop:so}
    Given a delivery instance $I=\langle [n], G,  h \rangle$,
    an allocation $A$ is SO,
    if and only if,
    every branch, $B$, outgoing from $h$,
    is fully contained in a bundle of some agent $i \in [n]$,
    i.e., $B \subseteq A_i$.
\end{restatable}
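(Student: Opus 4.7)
The plan is to characterize socially optimal allocations edge-by-edge. Recall that $c(A_i) = |E(G|_{A_i})|$ where $G|_{A_i}$ is the minimal connected subtree containing $A_i \cup \{h\}$, so
\[
\sum_{i \in [n]} c(A_i) \;=\; \sum_{e \in E(G)} \bigl|\{\, i \in [n] \,:\, e \in E(G|_{A_i}) \,\}\bigr|.
\]
Since the allocation is complete, every vertex in $V \setminus \{h\}$ is serviced by some agent, and the path from $h$ to that vertex is contained in that agent's subtree; hence every edge of $G$ lies in at least one $G|_{A_i}$. Therefore SO (i.e.\ $\sum_i c(A_i) = |E(G)|$) is equivalent to the condition that every edge of $G$ lies in \emph{exactly} one $G|_{A_i}$.

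The next step is to translate this edge-level condition into a condition on branches. Fix a branch $B$ rooted at a child $b_B$ of $h$, and let $e_B = (h, b_B)$. Since $G$ is a tree, an edge $(u,v)$ with $v$ the child of $u$ lies in $G|_{A_i}$ if and only if $A_i$ contains some descendant of $v$. In particular, $e_B \in E(G|_{A_i})$ if and only if $A_i \cap B \neq \emptyset$, because the set of descendants of $b_B$ is exactly $B$.

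For the backward direction, I would observe that if every branch $B$ is fully contained in some single bundle $A_{i(B)}$, then for any edge $(u,v)$ in $B$ (or the edge $e_B$), only the agent $i(B)$ has descendants of $v$ inside $B$, so the edge lies in exactly one $G|_{A_i}$; summing gives SO. For the forward direction, I would argue by contrapositive: if some branch $B$ is split between two distinct agents $i \neq j$ (both with nonempty intersection with $B$), then $e_B$ lies simultaneously in $G|_{A_i}$ and $G|_{A_j}$, so this edge is counted at least twice in $\sum_i c(A_i)$, forcing $\sum_i c(A_i) > |E(G)|$ and contradicting SO.

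No step is genuinely hard here; the only care needed is in identifying ``branches'' correctly (as the vertex sets of the connected components of $G - h$, not including $h$) and in handling the edge $e_B$ connecting a branch to the hub separately from the internal edges of $B$. With these conventions fixed, both directions reduce to a one-line edge-counting argument.
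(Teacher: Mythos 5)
Your proof is correct and follows essentially the same route as the paper's: both argue that the total cost counts each edge once per agent whose subtree contains it, that completeness forces each edge to be counted at least once, and that a branch split between two agents forces its hub-adjacent edge to be double-counted. The explicit double-counting identity you write down is just a slightly more formal packaging of the same edge-counting argument.
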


\begin{proof}
    Observe that the sum of the costs of all agents
    can never be smaller than
    the number of edges in the graph, i.e., $|E| = m$.
    Indeed, since every vertex has to be serviced by some agent,
    each edge must appear in $G|_{A_i}$ for some $i \in [n]$.
    Moreover, if every branch outgoing from the hub
    is fully contained in some bundle,
    every edge appears in $G|_{A_i}$ for exactly one $i \in [n]$.
    Hence, in every allocation satisfying the assumption,
    the total costs is equal to $m$
    and thus the allocation is \SO{}.

    Now, consider an allocation in which there exists a branch, $B$,
    and agents $i \neq j$ such that
    $B \cap A_i \neq \emptyset$ and $B \cap A_j \neq \emptyset$.
    Let $u$ be a vertex in $B$ connected to the hub, $h$.
    Then, edge $(h,u)$ appears in both $G|_{A_i}$ and $G|_{A_j}$.
    Since every other edge appears in $G|_{A_k}$
    for at least one $k \in [n]$,
    we get that the total cost is greater than $m$.
    Thus, such an allocation is not \SO{}.
\end{proof}

\begin{restatable}{theorem}{thrmsocharacterization}\label{thm:so:characterization}
Given a delivery instance $I = \langle [n], G, h \rangle$:
\begin{itemize}
    \item[a.] An \EFone{} and \SO{} allocation exists if and only if there is a partition $(P_1,\dots,P_n)$ of branches out of $h$ such that $\textstyle\sum_{B \in P_i} |B| - \textstyle\sum_{B \in P_j} |B| \le 1$, for every $i,j \in [n]$, 
    \item[b.] An \MMS{} and \SO{} allocation exists if and only if there is a partition $(P_1,\dots,P_n)$ of branches outgoing from $h$ such that $\textstyle\sum_{B \in P_i} |B| \le MMS_i(I)$ for every $i \in [n]$.
\end{itemize}
\end{restatable}

\begin{proof}
    From~\cref{prop:so}, we know that each \SO{} allocation must be a partition of whole branches outgoing from $h$.
    Hence, \textit{b} follows from the definition of \MMS{}.

    For \textit{a},
     observe that if an agent's bundle consists of a union of whole branches, then removing a vertex from its bundle reduces the cost of the agent by 1, if the vertex is a leaf, or by 0, otherwise. Hence, in order to achieve \EFone{} the costs of agents cannot differ by more than one.
\end{proof}

By \cref{prop:soexist},
checking the conditions in  the above theorem is \NPH{}.
However, we develop a polynomial-time verifiable necessary condition for \EFone{} and \SO{} existence
using the notion of the \emph{center} of the graph
(which was studied in computational social choice~\citep{skibski2023closeness} and in theoretical computer science in general~\citep{goldman1971optimal}).

\begin{definition}
The \emph{center} of a tree $G=(V,E)$ is a 
set of vertices
$C = \argmin_{v \in V} \sum_{u \in V} \dist(u,v)$,  
where $\dist(x,y)$ is the length of a shortest path from $x$ to $y$.
\end{definition}

\begin{restatable}{proposition}{propcenterefoneso}\label{prop:center:ef1so}
Given a delivery instance $I = \langle [n], G, h \rangle$, there exists an \EFone{} and \SO{} allocation only if the hub is in the center of the tree.
\end{restatable}

\begin{proof}
Let us show that 
if for some vertex $v$, there exists a partition of the branches outgoing from $v$ such that the differences between the number of vertices in each bundle are equal or less than 1, then $v$ has to be in the center.
In conjunction with \cref{thm:so:characterization}a,
this will imply the thesis.

For contradiction,
assume that there exists a tree $G = (V,E)$
and vertex $v$ such that there is
a partition of the branches
outgoing from $v$ such that
the difference in
the number of vertices 
in each pair of bundles is 
equal or less than 1,
but at the same time there exists $u \in V$
such that
\[  
    \sum_{w \in V} \dist(w,v) > \sum_{w \in V} \dist(w,u).
\]
Let $B$, be the branch outgoing from $v$ that contains vertex $u$.
From the fact that the branches outgoing from $v$
can be partitioned in such a way that
the number of the vertices in each part
is at most equal to one plus the number of vertices in each other part,
we get that in particular
\(
    |V(B)| \le |V \setminus V(B) \setminus \{v\}| + 1.
\)
This implies that
\begin{equation}
    \label{eq:centrality}
    |V(B)| \le |V \setminus V(B)|.
\end{equation}

Let $v = v_0, v_1,\dots,v_k = u$
be the path from vertex $v$ to vertex $u$.
Since sum of distances to $v$ is greater
than sum of distances to $u$,
there has to be $i \in [k]$
such that
\[  
    \sum_{w \in V} \dist(w,v_{i-1}) > \sum_{w \in V} \dist(w,v_{i}).
\]
For arbitrary $w\in V$ consider the difference 
\(
    \dist(w,v_{i-1}) - \dist(w,v_{i}).
\)
Since $v_{i-1}$ and $v_{i}$
are adjacent vertices,
this difference has value $1$,
if $w$ is closer to $v_{i}$ than $v_{i-1}$,
or $-1$, otherwise.
Observe that every vertex in $V \setminus V(B)$
is closer to $v_{i-1}$ than $v_{i}$.
Thus, from \cref{eq:centrality} we get that
\begin{align*}
    \sum_{w \in V} \dist(w,\!v_{i-1}) -  \sum_{w \in V} \dist(w, v_{i}) \le
    |V(B)| - |V  \setminus V(B)|
    \le 0, 
\end{align*} 
which is a contradiction. As a result, there cannot exist a vertex $v$ that is not the center of the graph but there is a partition of branches outgoing from $v$ such that the absolute difference in  the number of vertices in each pair of bundles is at most one.  
\end{proof}

\subsection{EFX allocations}\label{sec:efx}
We now consider a fairness notion that is stronger than \EFone{}, but is still a relaxation of envy-freeness
is \emph{envy-freeness up to any item} (\EFX). \EFX{} has proven to be one of the most enigmatic fairness notions with the existence of \EFX{} allocations being an open problem in most settings.

In our setting,
allocation $A$ is \EFX{} if
for every pair of agents $i,j \in [n]$
and every vertex $x \in A_i$,
it holds that $c(A_i \setminus \{x\}) \leq c(A_j)$.
For chore division settings, \citet{barman2023fair} have shown that when allocating chores with identical monotone costs, an \EFX{}
allocation always exists and
can be computed in pseudo-polynomial time.
Observe that our setting is one where each vertex is a chore, and costs are identical submodular functions. Consequently, an \EFX{} allocation must always exist. Further, in our case the costs of agents are polynomial
with respect to the size of the input.
Thus,
the algorithm given by \citet{barman2023fair} runs in polynomial time.

While \EFX{} allocations always exist in our setting, efficient \EFX{} allocations need not. This can be seen clearly as an \EFX{} allocation must also satisfy \EFone{} and efficient \EFone{} allocations need not exist. 
Although \EFX{} is more restrictive than \EFone{}, 
it seems less compatible with efficiency in delivery settings.
To see this, observe that 
in an \EFX{} allocation
any envious agent, $i$, cannot posses a vertex, $x$,
that is on a path to another of its vertices.
Otherwise, removing $x$ from the bundle of $i$
would not decrease $i$'s cost, which would contradict \EFX{}.
In fact, we show that for large classes of graphs,
\EFX{} combined with some efficiency requirement
is as restrictive as envy-freeness.

\begin{restatable}{theorem}{thrmefx}\label{thrm:efx}
    Given an instance $I = \langle [n], G, h \rangle$
    such that $m \ge n$ and an \EFX{} allocation $A$, if
    \begin{itemize}
        \item[a.] the hub is a parent of at most one leaf
        and $A$ is \SO{} or,
        \item[b.] every vertex is a parent of at most one leaf
        and $A$ is \PO{},
    \end{itemize}
then $A$ satisfies \EF.
\end{restatable}
 
\begin{proof}
    Consider an arbitrary instance
    $I = \langle [n], G, h \rangle$
    and an \EFX{} allocation $A$.
    We shall establish both conditions
    \textit{a} and \textit{b}, separately.
    
    \textbf{Part a.} For {\em a}, let us assume that $A$ is both \EFX{} and \SO{} and $I$ is such that $h$ is the parent of at most one leaf.
    Now, assume for contradiction
    that there exists an agent,
    $i \in [n]$, that envies another agent.
    Observe that since $m \ge n$,
    we must have that $c(A_i) > 1$.
    Since the hub is the parent
    of at most one leaf,
    there must be a vertex in $x \in A_i$
    that is not a leaf adjacent to $h$.
    As $A$ is \SO{},
    then from \cref{prop:so}
    the whole branch containing $x$ is given to $i$.
    Let $y$ be a vertex of this branch adjacent to $h$.
    The branch has more then one vertex,
    so removing $y$ does not decrease the cost of agent $i$,
    i.e., $c(A_i \setminus \{y\}) = c(A_i)$.
    But that contradicts the \EFX{}.
    
    \textbf{Part b.} For \textit{b}, we assume that $I$ is such that each vertex in $G$ is the parent of at most one leaf and $A$ is an \EFX{} and \PO{} allocation. We shall first show that all agents that envy others must only service leaves.
    Assume otherwise, i.e., $i$ is envious
    and $x \in A_i$ is not a leaf.
    If there exists a vertex, $y$, that is a descendant
    of $x$ in rooted tree $(G,h)$,
    then removing $x$ would not decrease
    the cost of agent $i$
    (as $x$ is visited by $i$ either way).
    But this violates \EFX{}.
    On the other hand,
    if $A_i$ does not contain any descendants of $x$,
    then let us denote arbitrary child of $x$ by $z$
    and by $j$ an agent that serves $z$, i.e., $z \in A_j$.
    Observe that removing vertex $x$
    from the bundle of $i$ and giving it to $j$
    would decrease the cost of $i$,
    but not increase the cost of $j$
    (as it visits $x$ either way)
    nor any other agent (as their bundles do not change).
    Hence, this constitutes
    a Pareto improvement,
    which violates \PO{}.

    Now, assume for contradiction
    that there exists an agent, $i \in [n]$,
    that envies another agent.
    From the previous paragraph,
    we know that $A_i$ consists of leaves only.
    Let us take a leaf $x \in A_i$
    that is the furthest from the hub $h$.
    Let $y$ be the parent of $x$ and
    $j$ an agent that serves it, i.e., $y \in A_j$.
    From the previous paragraph we know that
    $j$ is not envious of any other agent.
    Thus, $c(A_j) < c(A_i)$.
    As \EFX{} implies \EFone{},
    by \cref{prop:ef1}, this means that
    $c(A_j) = c(A_i) - 1$.
    Now, consider allocation $A'$ obtained from
    $A$ by giving $j$ all of $i$'s vertices except for $x$
    and to $i$ all vertices of $j$ and $x$.
    Formally, $A'_i = A_j \cup \{x\}$,
    $A'_j = A_i \setminus \{x\}$, and
    $A'_k = A_k$, for every $k \in [n] \setminus \{i,j\}$.
    Observe that adding vertex $x$ to bundle $A_j$
    increases its cost by 1.
    Thus,
    \(
        c(A'_i) = c(A_j) + 1 = c(A_i),
    \)
    which means that the cost for agent $i$ did not increase.
    On the other hand,
    since $A_i$ contains only leaves,
    $x$ is the leaf in $A_i$ that is the furthest from the hub,
    and $y$ is not a parent of any other leaf,
    an agent serving $A_i \setminus \{x\}$
    does not have to visit $y$.
    Hence, removing vertex $x$ from $A_i$ decreases the cost
    of at least 2, i.e.,
    \(
        c(A'_j) \le c(A_i) - 2 = c(A_j) -1.
    \)
    Hence, the cost of agent $j$ decreased.
    Since the cost of the remaining agents is unchanged,
    $A'$ Pareto dominates $A$---a contradiction.
\end{proof}
\begin{figure}[t]
    \centering

   \begin{tikzpicture}
        \def\xs{1.2cm} 
        \def\ys{0.4cm} 
        \def\x{0cm} 
        \def\y{0cm} 
        \def\ls{\footnotesize} 
        
        \tikzset{
            node_blank/.style={circle,draw,minimum size=0.5cm,inner sep=0, color=white}, 
            node/.style={circle,draw,minimum size=0.45cm,inner sep=0, fill = black!05},
            node_h/.style={circle,draw,minimum size=0.6cm,inner sep=0, fill = blue!20, font=\footnotesize},
            node_emph/.style={circle, minimum size=0.75cm, black!15, fill = black!15, draw,inner sep=0, font=\footnotesize},
            edge/.style={draw,thick,sloped,-,above,font=\footnotesize},
            arrow/.style={draw, single arrow, minimum width = 0.9cm, minimum height=\y-6*\x+\s, fill=black!10},
            blank/.style={},
            alloc_a/.style={circle, draw, color=blue!30, minimum size=0.8cm, line width=2.5},
            alloc_b/.style={rectangle, draw, color=red!30, minimum size=0.7cm, line width=2.5},
        }
        
        \node[node_emph] (_) at (\x + 1*\xs, 0*\ys + \y) {};
        \node[node] (a) at (\x + -1*\xs - 0.1cm, 0*\ys + \y) {\ls $a$};
        \node[node] (b) at (\x + -0*\xs - 0.1cm, 0*\ys + \y) {\ls $b$};
        \node[node_h] (h) at (\x + 1*\xs, 0*\ys + \y) {\ls $h$};
        \node[node] (c) at (\x + 2*\xs + 0.1cm, 0*\ys + \y) {\ls $c$};
        \node[node] (d) at (\x + 3*\xs + 0.1cm, 0*\ys + \y) {\ls $d$};
        \node[node] (e) at (\x + 4*\xs + 0.1cm, 0*\ys + \y) {\ls $e$};
        
        \node[alloc_a] (_) at (\x + -1*\xs - 0.1cm, 0*\ys + \y) {};
        \node[alloc_a] (_) at (\x + -0*\xs - 0.1cm, 0*\ys + \y) {};
        \node[alloc_b] (_) at (\x + 2*\xs + 0.1cm, 0*\ys + \y) {};
        \node[alloc_b] (_) at (\x + 3*\xs + 0.1cm, 0*\ys + \y) {};
        \node[alloc_b] (_) at (\x + 4*\xs + 0.1cm, 0*\ys + \y) {};
        
        \path[edge]
        (a) edge (b)
        (b) edge (h)
        (h) edge (c)
        (c) edge (d)
        (d) edge (e)
        ;

    \end{tikzpicture}
    \caption{An example graph with an \EFone{} and \PO{} allocation marked. There is no \EFX{} and \PO{} allocation for this graph and two agents.}
    \label{fig:no:efx+po}
\end{figure}

While \EFone{} is also not always compatible with \PO{}
it is much less restrictive then \EFX{}.
In particular, \cref{fig:no:efx+po} shows an example of an instance in which there is no \EFX{} and \PO{} allocation
but \EFone{} and \PO{} allocation does exist.
Indeed, for this instance
there exist two different \PO{} allocations
(up to permutation of agents), i.e.,
the allocation shown in \cref{fig:no:efx+po}
and the allocation in which
one agent serves all vertices.
Neither of them is \EFX{}. 

Finally to conclude, the discussion of \EFX{}, we observe that as any \EFX{} allocation will be \EFone{}. In particular, in the reduction in \cref{prop:po:hardness}, an \EFX{} and \PO{} allocation will exist if and only if an \EFone{} and \PO{} allocation exists. Consequently, we get the following result

\begin{restatable}{proposition}{propefxpohardness}\label{prop:efx+po:hardness}
Given a delivery instance $I=\langle [n],G,h\rangle$, it is \NPH{} to 
check whether there exists i) an \EFX{} and \SO{}  allocation and ii) an \EFX{} and \PO{} allocation.
\end{restatable}

\begin{proof}
    We develop on the proof of \cref{prop:po:hardness} and \cref{prop:so}. As any \EFX{} allocation must satisfy \EFone{}, analogous to the proof in \cref{prop:po:hardness} we have from \cref{lem:ef1+po} that  an \EFX{} and \PO{} allocation to exists if and only if the given instance admits a 3-Partition. Thus, finding an \EFX{} and \PO{} allocation whenever it exists is \NPH{}.

    As shown in the proof of \cref{prop:so}, in the constructed instance an allocation is \PO{} if and only if it is socially optimal. As a result, finding an \EFX{} and \SO{} allocation, whenever it exists, is \NPH{}.
\end{proof}



\section{Computing Fair and Efficient Solutions}\label{sec:xpalg}

We have shown that finding a fair and efficient allocation (or deciding if it exists) is computationally intractable. 
In this section, we develop a recursive algorithm for computing each combination of the fairness and efficiency notions we consider. This algorithm is XP with respect to the number of agents. That is, when the number of agents is bounded by a constant, the running time of our algorithm is polynomial. 

Given a delivery instance, our algorithm (\cref{alg:ef1+po:main}) finds a set of \PO{} allocations such that all possible cost distributions under a \PO{} allocation are covered. We call this set, the Pareto frontier of the instance.  

\begin{definition}
    Given a delivery instance $I = \langle [n], G, h \rangle$, its \emph{Pareto frontier} is a minimal set of allocations $\mathcal{F}$ such that for every \PO{} allocation $A$ there exists $B \in \mathcal{F}$ and permutation of agents $\pi$ such that $c(A_i) = c(B_{\pi(i)})$, for every $i \in [n]$.
\end{definition}

Clearly, a leximin optimal solution will be part of the Pareto frontier. Thus, we can use this algorithm to find an \MMS{} and \PO{} allocation. Further, we can use the tools built in \cref{sec:charac} to identify if an allocation satisfying any other combination of fairness and efficiency exists.

\paragraph{Algorithm Overview.}
Throughout the algorithm, we keep allocations in the list $\mathcal{F}$, with each allocation sorted in non-increasing cost order. That is, for each $A\in \mathcal{F}$, $c(A_1)\geq c(A_2)\geq \cdots c(A_n)$. The algorithm finds the Pareto frontier of a given tree rooted at $h$, by recursively finding the respective Pareto frontiers of the subtrees rooted at each of the children of $h$ and combining them to generate \PO{} allocations for the tree rooted at $h$. Specifically, this proceeds as follows:

First, the set $\mathcal{F}$ is initialized with just one empty allocation.
Then, we look at all the vertices directly connected to the hub.
For each, say $u$, we run our algorithm on a smaller instance
where the graph is just the branch outgoing from $h$
that $u$ is on, and $u$ is the hub. Let $\mathcal{F}_u$ be the Pareto frontier of this new tree, found recursively.

Now, we need to combine these Pareto frontiers to generate the Pareto frontier of the tree rooted at the hub. 
Before doing so, in each allocation in $\mathcal{F}_u$,
we add $u$ to the bundle of the first agent. This is a necessary step as none of the vertices connected to the hub had yet been allocated. 

Finally, we combine these $\mathcal{F}_u$s,
by looking at all possible sorted combinations
of allocations in the lists. That is, for allocations $A\in \mathcal{F}_u$ and $A'\in \mathcal{F}_{u'}$ we consider allocations $A^{\#}$ where $A^{\#}_i=A_i\cup A_j'$ for $i,j\in [n]$.
We only keep the ones that are not
weakly Pareto dominated by any other
(where an allocation weakly Pareto dominates
another allocation if it Pareto dominates it,
or all agents have equal costs in both allocations).

\begin{algorithm}[!t] 
    \caption{FindParetoFrontier($n$, $G$,$h$)}
    \label{alg:ef1+po:main}
    \begin{algorithmic}[1]{\small
        \STATE $\mathcal{F} \leftarrow [ (\emptyset, \dots, \emptyset) ]$
        \FOR{$u \in$ children of $h$}
            \STATE $T_u \leftarrow$ a subtree rooted in $u$
            \STATE $\mathcal{F}' \leftarrow $ FindParetoFrontier($n$, $T_u$, $u$)
            \STATE \textbf{for} $A \in \mathcal{F}'$ \textbf{do} add $u$ to $A_1$
            \STATE $\mathcal{F} \leftarrow$ maximal set of sorted combinations of allocations from $\mathcal{F}$ and $\mathcal{F}'$ such that none is weakly Pareto dominated by another within the set
        \ENDFOR
        \RETURN $\mathcal{F}$ }
        \end{algorithmic}
\end{algorithm}
\begin{example}{
    We run our algorithm on the instance with 2 agents from \cref{fig:motiv}.
    First, we run it on two smaller graphs:
    the vertex $a$ as one
    and one on the branch containing vertices $b,c,d,e,f$ and $g$.
    Vertex $a$ is a leaf, so 
    we get one allocation, i.e., $\mathcal{F} = \{(\{a\},\emptyset)\}$.
    When $b$ is the hub,
    we obtain two allocations:
    either one agent services $g$ along with all its ancestors
    and the other agent services $c$,
    or one agent services everything.
    Thus, $\mathcal{F}' = \{(\{b,d,e,f,g\}, \{c\}), (\{b,c, d, e,f,g\}, \emptyset)\}$.
    Finally, we combine $\mathcal{F}$ with $\mathcal{F}'$.
    We consider all four possible combinations.
    However, one of them, $(\{a,b,d,e,f,g\}, \{c\})$
    is Pareto dominated by another,
    $(\{b,c,d,e,f,g\}, \{a\})$
    (the cost of the first agent is the same,
    but for the second agent it decreases by 1).
    In conclusion, we return three allocations:
    $(\{b,d,e,f,g\}, \{a, c\})$,
    $(\{b,c, d,e,f,g\}, \{a\})$, and
    $(\{a,b, c, d,e,f,g\}, \emptyset)$.
    We note that the first one is in fact \MMS{} and \PO{} allocation. }
\end{example}
Now, we can prove the correctness of our algorithm. While we defer the full proof to \cref{app:xp}, we provide an outline here. 

\begin{restatable}{theorem}{xpalgproof}\label{thm:xpalg}
Given a delivery instance $I=\langle [n], G, h \rangle$, \cref{alg:ef1+po:main} computes its Pareto frontier and runs in time $O((n+2)! m^{3n+2})$,
where $m=|E(G)|$.
\end{restatable}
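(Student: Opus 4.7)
The plan is to prove both correctness and the running time bound by induction on $|V(G)|$. The base case is when $G$ consists only of the hub: the \textbf{for} loop never executes and the algorithm returns $\mathcal{F} = [(\emptyset, \ldots, \emptyset)]$, which is trivially the Pareto frontier.

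For the inductive step, let $u_1, \ldots, u_k$ be the children of $h$ and $T_{u_j}$ the subtree rooted at $u_j$. The key structural observation is that costs decompose across branches: for every allocation $A$ on $G$, the contribution of branch $T_{u_j}$ to $c_G(A_i)$ equals $c_{T_{u_j}}(A_i \cap V(T_{u_j}) \setminus \{u_j\}) + 1$ whenever $A_i \cap V(T_{u_j}) \neq \emptyset$, and $0$ otherwise. I would establish the loop invariant that after processing children $u_1, \ldots, u_t$, the list $\mathcal{F}$ is precisely the Pareto frontier (up to agent permutation, as in the definition) of the instance on the partial graph induced by $\{h\} \cup \bigcup_{s \le t} V(T_{u_s})$.

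Maintaining the invariant rests on two claims. First, for every Pareto optimal allocation $A$ on the partial graph after processing $u_{t+1}$, its restriction to branch $T_{u_{t+1}}$ must be Pareto optimal in the sub-problem with $u_{t+1}$ as hub; otherwise a Pareto dominating sub-allocation would yield a global Pareto improvement. The subtle point here is the $+1$ correction per non-empty bundle: a sub-problem Pareto improvement $B_i$ of $A_i$ cannot turn a non-empty $A_i$ into an empty $B_i$ (since $c_{T_{u_{t+1}}}(\emptyset) = 0$ is the minimum), so the indicator $[A_i \neq \emptyset]$ is monotone under sub-problem improvements, ensuring that sub-problem Pareto dominance implies branchwise dominance in $G$. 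By the inductive hypothesis both restrictions lie in $\mathcal{F}$ and $\mathcal{F}'$ up to permutation, and CombineFrontiers enumerates all pairings and permutations so as to reconstruct $A$. Second, the dominance filter inside CombineFrontiers removes any combination Pareto dominated by another, so no spurious entries survive.

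For the running time, each allocation in $\mathcal{F}$ is uniquely identified (up to agent permutation) by its sorted cost vector with entries in $\{0, 1, \ldots, m\}$, giving $|\mathcal{F}| = O\bigl(\binom{m+n}{n}\bigr) = O(m^n)$. A single CombineFrontiers call iterates over $|\mathcal{F}| \cdot |\mathcal{F}'| \cdot n!$ combinations, computing each combined cost vector in time $O(mn)$ and checking Pareto dominance against the emerging new frontier in $O(m^n \cdot n)$. Since there are at most $m+1$ vertices and one recursive CombineFrontiers call per non-hub vertex, multiplying and absorbing factors gives the claimed $O((n+2)!\, m^{3n+2})$ bound. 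The hard part will be the correctness of CombineFrontiers, specifically showing that restricting a globally Pareto optimal allocation to each subtree yields sub-problem Pareto optimal pieces despite the $+1$ branchwise correction; I expect the monotonicity argument sketched above to carry the induction through cleanly.
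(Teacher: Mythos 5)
Your proposal is correct and follows essentially the same route as the paper's proof: decompose branch by branch, argue that the restriction of a globally \PO{} allocation to each subtree is \PO{} in the sub-instance (else a local Pareto improvement lifts to a global one), note that CombineFrontiers enumerates all pairings and permutations, and bound the frontier size by the $(m+1)^n$ distinct sorted cost vectors. One small wording slip in an otherwise sound step: under sub-problem Pareto dominance the impossible transition is an \emph{empty} $A_i$ becoming a \emph{non-empty} $B_i$ (which would force $c(B_i)\le 0$), not the reverse --- a dominating allocation may well empty a bundle, and that case only helps after the $+1$ correction.
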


\begin{proof}[Proof (sketch)]
    Here, we note two key observations.

    For the first, consider an arbitrary instance
    $\langle [n], G, h\rangle$,
    and a smaller one obtained 
    by taking a subtree rooted in some vertex $u \in V$,
    i.e., $\langle [n], T_u,u \rangle$.
    Then, every \PO{} allocation $A$ in the original instance,
    must be still \PO{} when we cut it to the smaller instance
    (i.e., we remove vertices  not in $T_u$).
    Otherwise, a Pareto improvement on the cut allocation,
    would also be a Pareto improvement for $A$.
    Hence, by looking at all combinations of \PO{}
    allocations on the branches outgoing from $h$,
    we obtain all \PO{} allocations
    in the original instance.

    The second observation is that
    we do not need to keep two allocations
    that give the same cost for each agent.
    The maximum cost of each agent is $m$.
    Hence, we will never have more than $(m+1)^n$
    different allocations in the list
    (in fact, since we keep all allocations sorted
    in non-increasing cost order, this number
    will be much smaller).
    Thus, we can combine two frontiers efficiently.
\end{proof}

We can use \cref{alg:ef1+po:main} to obtain the desired allocations.
\begin{restatable}{theorem}{xpalgallocproof}\label{thm:xpalg:alloc}
There exists an XP algorithm parameterized by $n$,
that given a delivery instance $I=\langle [n], G, h \rangle$,
computes an \MMS{} and \PO{} allocation,
and decides whether there exist
\MMS{} and \SO{},
\EFone{} and \PO{},
and \EFone{} and \SO{} allocations.
\end{restatable}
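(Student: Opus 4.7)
The plan is to call \cref{alg:ef1+po:main} once to obtain the Pareto frontier $\mathcal{F}$ of $I$, which by \cref{thm:xpalg} takes time $O((n+2)!\,m^{3n+2})$ and is therefore polynomial in $m$ for any fixed $n$; the size of $\mathcal{F}$ is likewise bounded polynomially in $m$ for fixed $n$. All four tasks will then be answered by scanning $\mathcal{F}$ and performing a polynomial-time check on each candidate allocation, so the overall procedure is XP in $n$.

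For \MMS{} and \PO{}, I exploit the fact that all agents share the same cost function, so $\MMS{}(I) = \min_{A} \max_{i \in [n]} c(A_i)$ over all complete allocations. Every allocation is weakly Pareto dominated by some \PO{} allocation, and Pareto domination never increases the maximum cost, so this minimum is attained by some representative in $\mathcal{F}$. I therefore output an $A^{\star} \in \argmin_{A \in \mathcal{F}} \max_{i} c(A_i)$: it is \PO{} by membership in $\mathcal{F}$ and \MMS{} since its maximum cost equals $\MMS{}(I)$. The value $\MMS{}(I)$ is returned as a byproduct for use below.

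For \MMS{} and \SO{}, any witness must be \SO{} (hence \PO{} and represented in $\mathcal{F}$), satisfy $\sum_i c(A_i) = |E(G)|$, and have $\max_i c(A_i) \le \MMS{}(I)$; I simply scan $\mathcal{F}$ for such an allocation. For the two \EFone{} queries the central observation is: \emph{if $A$ is a \PO{} allocation with $\max_{i,j}|c(A_i) - c(A_j)| \le 1$, then $A$ is already \EFone{}}. Fix a pair $i,j$: if $c(A_i) \le c(A_j)$, \EFone{} holds by monotonicity of $c$; otherwise $c(A_i) = c(A_j) + 1$, in which case a leaf $x \in A_i$ of $G|_{A_i}$ exists and its removal drops the cost by exactly one, giving $c(A_i \setminus \{x\}) = c(A_j)$. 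Combined with \cref{lem:ef1+po}, which forces every \EFone{} and \PO{} allocation to have cost gap at most one, this shows that an \EFone{} and \PO{} allocation exists if and only if some $A \in \mathcal{F}$ satisfies the gap condition, and any such $A$ is a valid witness. The \EFone{} and \SO{} decision is the specialization where I additionally require $\sum_i c(A_i) = |E(G)|$ during the scan.

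The delicate point is the \EFone{} scan: \EFone{} is in principle a bundle-level property, while \cref{alg:ef1+po:main} retains only one representative per cost profile. The leaf-removal observation above is what bridges this gap, showing that once we restrict to \PO{} allocations with cost gap at most one, \EFone{} is determined by the cost profile alone, so inspecting the unique representative in $\mathcal{F}$ is sufficient. All per-candidate checks (sum, maximum, pairwise gap, and \SO{}-ness) are polynomial in $n$ and $m$, so the total running time is dominated by the frontier computation, completing the XP bound.
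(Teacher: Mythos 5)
Your proposal is correct and follows essentially the same route as the paper: one call to \cref{alg:ef1+po:main} to get the Pareto frontier, followed by a polynomial scan of its (at most $(m+1)^n$) members using the cost-gap-at-most-one criterion for \EFone{} (necessity from \cref{lem:ef1+po}, sufficiency from a leaf-removal argument) and the test $\sum_i c(A_i)=|E(G)|$ for \SO{}. The only deviations are minor: the paper selects the leximin-optimal frontier member for \MMS{} and \PO{} where you select a min-max one, and it checks \EFone{} only on that single allocation (using the fact that an \EFone{} and \PO{} allocation must be leximin optimal) rather than scanning the whole frontier; also, removing a leaf of $G|_{A_i}$ drops the cost by \emph{at least} one rather than exactly one, but that is the direction your argument needs, so nothing breaks.
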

\begin{proof}[Proof (sketch)]
    Here, we focus on \MMS{} and \PO{} allocations
    (see the \cref{app:xp} for the other notions).
    By \cref{thm:xpalg}, we can obtain a Pareto frontier
    for every instance.
    Also, from the proof of \cref{thm:xpalg},
    we know that a Pareto frontier contains at most $(m+1)^n$
    allocations.
    Thus, we can search through them to find the leximin optimal one,
    which will be \MMS{} and \PO{} as well.
\end{proof}
\noindent We note that the proof
for \EFone{} and \PO{} here
relies on \cref{thrm:ef1+po:implies_mms}.


\begin{table}[t!]
\centering
\setlength{\tabcolsep}{2pt}
\small
\begin{tabular}{@{}llll@{}}
\toprule
         & \multicolumn{1}{l}{   \textbf{\PO{}}}
         & \multicolumn{1}{l}{\textbf{Computation}\quad\quad}
         & \multicolumn{1}{l}{\textbf{Upper Bound on PoF}} \\
\midrule 
\multirow{2}{*}{\textbf{Min Cost \EFone{}}} & \xmark  &NP-h&$\frac{n(2m-n+1)}{2m}$  \\
                                   &{\footnotesize (\cref{prop:mincostexist})\quad }&\footnotesize (\cref{thm:mincostfair} ) &\footnotesize (\cref{prop:priceefone})\\
\midrule

\multirow{2}{*}{\textbf{Min Cost \MMS{}}}\quad \quad & \xmark  &NP-h&$\frac{n(m-n+1)}{m}$  \\
                                   &{\footnotesize (\cref{prop:mincostexist})\quad }&\footnotesize(\cref{thm:mincostfair} ) &\footnotesize (\cref{prop:priceefone})\\
\bottomrule  
 
\end{tabular}
\caption{The summary of our results on the price of \EFone{} and \MMS{}. 
\cmark{} denotes that the min cost allocation is always \PO{}, and \xmark{} that it may not be. 
}
\label{tab:PoF}
\end{table}

\section{Price of Fairness}\label{sec:pof}
In this section,
we study the \emph{price of fairness},
i.e., the loss to the aggregate cost
incurred by requiring allocations to be \EFone{} or \MMS{} when delivery tasks are located on the vertices of a tree.
Specifically, we show tight bounds
for the ratio of the aggregate cost of a fair allocations to the cost of an \SO{} allocation.
The price of fairness concept has been well studied in fair division literature
\citep{barman2020optimal,sun2021fairness,bhaskar2023price} and merits exploration in delivery settings as well.

Formally, price of fairness is defined 
as the ratio of the minimum total cost of an allocation satisfying the given notion
to the minimum total cost of any allocation. 

\begin{definition}[Price of Fairness]
    Given a fairness concept $F$ and an instance $I=\langle [n], G, h \rangle$, the price of $F$ is given as:
    \[
    \mathrm{PoF}(I)=  \frac{\min_{A \in \Pi^n: A \text{ satisfies } F} \sum_{i \in [n]}c(A_i)}{ \min_{A \in \Pi^n} \sum_{i \in [n]}c(A_i)} 
    \]
\end{definition}

We first remark on the efficiency of minimum cost fair solutions.

\begin{restatable}{proposition}{mincostexist}\label{prop:mincostexist}
    Given a delivery instance $I=\langle [n],G, h\rangle$, we have that
    \begin{itemize}
        \item A minimum cost \MMS{} allocation must be \PO{}.
        \item A minimum cost \EFone{} allocation need not be \PO{}
    \end{itemize}
\end{restatable}

\begin{proof}
    We first consider minimum cost \MMS{} allocations. Given a delivery instance $I=\langle [n], G, h \rangle$, let $A=(A_1,\cdots, A_n)$ be a minimum cost \MMS{} allocation. We shall show that it must be \PO{}. 
    
    For contradiction, let $B=(B_1,\cdots, B_n)$ be an allocation that Pareto dominates $A$. That is, for all $i\in [n]$, $c(B_i)\leq c(A_i)$ and there exists $i^*\in [n]$ s.t. $c(B_{i^*})<c(A_{i^*})$. As $A$ is \MMS{}, this implies that $B$ must also be \MMS{}. Further, due to Pareto domination, $B$ must have strictly lower social cost than $A$. Clearly, this contradicts the fact that $A$ is  a minimum cost \MMS{} allocation. Consequently, for any delivery instance, a minimum cost \MMS{} allocation must be \PO{}.

    The same is not true for \EFone{}
    as an \EFone{} and \PO{} allocation
    need not to exist (\cref{prop:non-existenceEFonePO}) but minimum cost \EFone{} allocations always exist.
\end{proof}

We now show that it is intractable to find minimum cost fair allocations.
\begin{restatable}{theorem}{mincostfair}\label{thm:mincostfair}
    Given a delivery instance $I=\langle [n],G, h\rangle$, computing a minimum cost \MMS{} allocation or a minimum cost \EFone{} allocation is \NPH{}.
\end{restatable}
\begin{proof}
    We first consider minimum cost \MMS{} allocations. Clearly, whenever an \MMS{} and \SO{} allocation exists, it will also be a minimum cost \MMS{} allocation. Thus, an algorithm that always finds a minimum cost \MMS{} allocation would find an \MMS{} and \SO{} allocation, whenever it exists. Thus, finding a minimum cost \MMS{} allocation is \NPH{}. 
    
    For the case of minimum cost \EFone{} allocations, recall the reduction in \cref{prop:po:hardness}. In the instance constructed, any \EFone{} and \PO{} allocation will be \SO{}. Consequently, it would be a minimum cost \EFone{} allocation. Hence, finding a minimum cost \EFone{} allocation is \NPH{}.
\end{proof}

In the following proposition we
identify the tight bounds for
the price of \EFone{} and \MMS{}.
\begin{restatable}{proposition}{priceefone}\label{prop:priceefone}
    Given a delivery instance $I=\langle [n],G, h\rangle$, 
    it holds that
    $\mathrm{Po}\MMS{}(I)\leq \frac{n(m-n+1)}{m}$ and
    $\mathrm{Po}\EFone{}(I)\leq \frac{n(2m-n+1)}{2m}$.
    Both these bounds are tight.
\end{restatable}

\begin{proof}
    In delivery instances, the minimum total cost of any allocation, fair or otherwise, will be achieved by socially optimal allocations that have social cost $m$. Thus, for price of fairness in delivery instances, the denominator of the fraction will always be $m$.
    We first consider upper bounds on the price of \MMS{}.

\begin{figure}[!t]
    \centering
    \begin{tikzpicture}
        \def\xs{1cm} 
        \def\ys{0.5cm} 
        \def\x{0cm} 
        \def\y{0cm} 
        \def\ls{\footnotesize} 
        
        \tikzset{
            node_blank/.style={circle,draw,minimum size=0.55cm,inner sep=0, color=white}, 
            node/.style={circle,draw,minimum size=0.4cm,inner sep=0, fill = black!05},
            node_h/.style={circle,draw,minimum size=0.65cm,inner sep=0, fill = blue!20, font=\footnotesize},
            node_emph/.style={circle, minimum size=0.8cm, black!15, fill = black!15, draw,inner sep=0, font=\footnotesize},
            edge/.style={draw,thick,sloped,-,above,font=\footnotesize},
            arrow/.style={draw, single arrow, minimum width = 0.9cm, minimum height=\y-6*\x+\s, fill=black!10},
            blank/.style={},
            alloc_a/.style={circle, draw, color=blue!30, minimum size=0.8cm, line width=2.5},
            alloc_b/.style={rectangle, draw, color=red!30, minimum size=0.7cm, line width=2.5},
        }
        
        \node[node_emph] (_) at (\x + 1*\xs, 1*\ys + \y) {};
        
        \node[node_h] (h) at (\x + 1*\xs, 1*\ys + \y) {\ls $h$};
        \node[node] (a) at (\x + 2*\xs, 1*\ys + \y) {};
        \node[node] (b) at (\x + 3*\xs, 1*\ys + \y) {};
        \node[node] (c) at (\x + 4*\xs, 1*\ys + \y) {};
        \node[node] (d) at (\x + 5*\xs, 1*\ys + \y) {};
        \node[node] (e) at (\x + 6*\xs, 1*\ys + \y) {};
        \node[node] (f) at (\x + 7.1*\xs, 0*\ys + \y) {};
        \node[node] (g) at (\x + 7.1*\xs, 2*\ys + \y) {};

        \node[alloc_a] (_) at (\x + 2*\xs, 1*\ys + \y) {};
        \node[alloc_a] (_) at (\x + 3*\xs, 1*\ys + \y) {};
        \node[alloc_a] (_) at (\x + 4*\xs, 1*\ys + \y) {};
        \node[alloc_a] (_) at (\x + 5*\xs, 1*\ys + \y) {};
        \node[alloc_a] (_) at (\x + 6*\xs, 1*\ys + \y) {};
        \node[alloc_a] (_) at (\x + 7.1*\xs, 0*\ys + \y) {};
        \node[alloc_b] (_) at (\x + 7.1*\xs, 2*\ys + \y) {};
        
        \path[edge]
        (h) edge (a)
        (a) edge (b)
        (b) edge (c)
        (c) edge (d)
        (d) edge (e)
        (e) edge (f)
        (e) edge (g)
        ;
      
    \end{tikzpicture}
    \caption{An example with $m=7$ and $n=2$. A minimum cost \MMS{} allocation is depicted. Here $\mathrm{PoMMS}=\frac{12}{7}=\frac{2 \cdot (7-2+1))}{7}$}
    \label{fig:pommstight}
\end{figure}

    \paragraph{Price of MMS.}  Let $I = \langle [n], G, h \rangle$ be an arbitrary delivery instance and
    $A$ a minimum cost \MMS{} allocation. In any \MMS{} allocation, each agent has cost at most the \MMS{} threshold. In the worst case, the number of vertices that are visited by multiple agents are maximized. We have from \cref{prop:mincostexist} that a minimum cost \MMS{} allocation must be \PO{}. In a \PO{} allocation, each agent must service at least one leaf.
    Otherwise, all of the vertices serviced by it
    can be given to the agents serving the descendants of these vertices
    decreasing the cost for this agent to 0 and
    incurring no additional cost on any other agent,
    which would be a Pareto improvement.

    In the worst case, all non leaf vertices are visited by all agents and each agent additionally services a single leaf. This would mean that there are $n$ leaves, and each agent visits the $m-n$ internal vertices and $1$ leaf. This gives a social cost of $n(m-n+1)$.
    Thus, we have that
    \[\mathrm{Po}\MMS{}(I)\leq \frac{n(m-n+1)}{m} \]

    To show that this bound is tight for every $m$ and $n$,
    we consider the graph that is the hub and $n$-armed star connected by a path of length $m-n$ (see \cref{fig:pommstight} for an illustration).
    Formally, 
    \[
    G = (\{h, v_1, \dots, v_m\},
        \{(h,v_1),(v_1,v_2),\dots,(v_{m-n-1},v_{m-n}),
        (v_{m-n},v_{m-n+1}),\dots,(v_{m-n},v_m)\}).
    \]
    Observe that in an \MMS{} allocation, each vertex $v_{m-n+1},\dots,v_{m}$ has to be served by a different agent
    (otherwise the cost of an agent serving two or more of such vertices would be greater than $m-n+1$, which is the value of the \MMS{}).
    Thus, the cost of every \MMS{} allocation matches the established upper bound.

    \paragraph{Price of EF1.}  Recall that the price of \EFone{} is the ratio of the social cost of the minimum cost \EFone{} allocation and $m$. In trying to give an upper bound on this, we again aim to maximize the number of vertices that are visited by multiple agents, conditioned on the allocation having minimum cost over all \EFone{} allocations.

    Let $I = \langle [n], G, h \rangle$
    be an arbitrary instance and
    $A$ an arbitrary minimum cost \EFone{} allocation.
    We shall prove by induction
    that for every $k \in [n]$
    there exists a set of $k$ vertices, $S_k$,
    such that $V \setminus S_k$ induces a connected graph
    and the vertices in $S_k$ are
    visited by at most $k$ agents,
    i.e., $|\{ i \in [n] : V(G| A_i) \cap S_k \neq \emptyset\}| \le k$.
    For $k=1$, observe that
    there has to exist at least one leaf in graph $G$
    and the leaf is visited only by the agent
    that serves it.
    Also, removing this leaf retains a connected graph.
    Hence, the induction basis holds.
    
    Now, assume that the induction hypothesis holds for some $k \in [n-1]$ and consider $k+1$.
    We know that there exists a set $S_k$ such that
    $|S_k| = k$ and graph induced by $V \setminus S$, i.e., $G' = G[V \setminus S]$ is connected.
    Let $v$ be an arbitrary leaf in $G'$.
    Observe that apart from the agent serving $v$
    it can be visited only
    by agents serving vertices in $S$.
    Hence, $v$ is visited by at most $k+1$ agents.
    Then, $S \cup \{v\}$ contains $k+1$ vertices
    visited by at most $k+1$ agents.
    Observe also that removing this set of vertices
    retains a connected graph.
    Hence, by induction, the thesis holds.


    Consequently,
    we get that there is an agent not visiting $n-1$ vertices in $S_{n-1}$,
    an agent not visiting $n-2$ vertices in $S_{n-2}$ and so on.
    This yields the following upper bound on
    the total cost in minimum cost \EFone{}:
    \begin{align*}
        \mathrm{Po}\EFone{} &\leq
        \frac{n \cdot m - \sum_{k \in [n]} (k-1) }{m} =
        \frac{n \cdot m - n(n-1)/2}{m} =
        \frac{n \cdot (2m - n + 1)}{2m}.
    \end{align*}

    For an example of an instance where this bound is tight for every $m$ and $n$, we consider a path graph with hub at one of the ends of the path (see \cref{fig:poefonetight} for an illustration).
    Formally, let $G=(\{h,v_1,\dots,v_m\},\{(h,v_1),(v_1,v_2),\dots,(v_{m-1},v_m)\})$.
    Observe that in order for the allocation to be \EFone{},
    each vertex $v_{m}, v_{m-1}, \dots, v_{m-n+1}$ has to be served by a different agent.
    This yields the cost equal to the established upper bound.
\end{proof}

    \begin{figure}[!t]
    \centering
    \begin{tikzpicture}
        \def\xs{1cm} 
        \def\ys{0.4cm} 
        \def\x{0cm} 
        \def\y{0cm} 
        \def\ls{\footnotesize} 
        
        \tikzset{
            node_blank/.style={circle,draw,minimum size=0.55cm,inner sep=0, color=white}, 
            node/.style={circle,draw,minimum size=0.4cm,inner sep=0, fill = black!05},
            node_h/.style={circle,draw,minimum size=0.65cm,inner sep=0, fill = blue!20, font=\footnotesize},
            node_emph/.style={circle, minimum size=0.8cm, black!15, fill = black!15, draw,inner sep=0, font=\footnotesize},
            edge/.style={draw,thick,sloped,-,above,font=\footnotesize},
            arrow/.style={draw, single arrow, minimum width = 0.9cm, minimum height=\y-6*\x+\s, fill=black!10},
            blank/.style={},
            alloc_a/.style={circle, draw, color=blue!30, minimum size=0.8cm, line width=2.5},
            alloc_b/.style={rectangle, draw, color=red!30, minimum size=0.7cm, line width=2.5},
        }
        
        \node[node_emph] (_) at (\x + 1*\xs, 1*\ys + \y) {};
        
        \node[node_h] (h) at (\x + 1*\xs, 1*\ys + \y) {\ls $h$};
        \node[node] (a) at (\x + 2*\xs, 1*\ys + \y) {};
        \node[node] (b) at (\x + 3*\xs, 1*\ys + \y) {};
        \node[node] (c) at (\x + 4*\xs, 1*\ys + \y) {};
        \node[node] (d) at (\x + 5*\xs, 1*\ys + \y) {};
        \node[node] (e) at (\x + 6*\xs, 1*\ys + \y) {};

        \node[alloc_a] (_) at (\x + 2*\xs, 1*\ys + \y) {};
        \node[alloc_a] (_) at (\x + 3*\xs, 1*\ys + \y) {};
        \node[alloc_a] (_) at (\x + 4*\xs, 1*\ys + \y) {};
        \node[alloc_a] (_) at (\x + 5*\xs, 1*\ys + \y) {};
        \node[alloc_b] (_) at (\x + 6*\xs, 1*\ys + \y) {};

        \path[edge]
        
        (h) edge (a)
        (a) edge (b)
        (b) edge (c)
        (c) edge (d)
        (d) edge (e)
        
        ;
      
    \end{tikzpicture}
    \caption{An example with $m=5$ and $n=2$. A minimum cost \EFone{} allocation is depicted. Here $\mathrm{PoEF1}=\frac{9}{5}=\frac{2 \cdot (10 -2 + 1)}{10}$.}
    \label{fig:poefonetight}
\end{figure}

We note that the bound for the price of \MMS{} is lower,
which matches our intuition that \MMS{} is more compatible with efficiency in our setting.
Recall from \cref{thm:mincostfair} that finding a min cost \MMS{} or \EFone{} allocation is \NPH{}.
Further, from  \cref{prop:mincostexist}, a min cost \MMS{} allocation has to be \PO{}
(which is not the case for \EFone{}).
This allows us to use \cref{alg:ef1+po:main}
to calculate the price of \MMS{}
and study it in numerical experiments in \cref{sec:exp}.



\section{Experimental Results}\label{sec:exp}

We now present our experimental results on the running time of our algorithm, the existence of fair and efficient allocations and investigate the efficiency loss of fair solutions through their \textit{price of fairness}. 
In each experiment, we generated trees, uniformly at random, based on Prüfer sequences~\citep{prufer27sequences}
using NetworkX Python library~\citep{hagberg2008networkx}.%
\footnote{The code for our experiments is available at:
\url{https://doi.org/10.5281/zenodo.11149658}.}
For each experiment and a graph size, we sampled 1,000 trees.

We note that the experiments were conducted on
Dell Latitude E5570 computer
with Intel(R) Core(TM) i7-6600U CPU
@ 2.60GHz 2.80 GHz processor,
16.0 GB of RAM, and
Windows 10 Education operating system.
We used Python version 3.10.7
with NetworkX library version 2.8.8
(the plots were created using MatPlotLib v. 3.6.2).

\subsection[Running Time of Algorithm 1]{Running Time of \cref{alg:ef1+po:main}}
In our first experiment, we observe the running time of our algorithm to find the Pareto frontier of a given delivery instance. 
We run \cref{alg:ef1+po:main} for graphs of sizes $10,20,\dots,100$
and every number of agents from $2$ to $6$.
The running times are reported in \cref{fig:alg:running-time}
(the running time for $2$ agents is not reported
as it would be indistinguishable
from the running times for 3 agents
in the picture).
The power in the running time of our algorithm
for $n=6$ agents is significant,
hence the sharp increase in the running time
with growing graph size in this case.

\begin{figure*}[!t]
    \centering
    \begin{subfigure}{0.46\textwidth}
        \centering
        \includegraphics[width=0.8\linewidth]{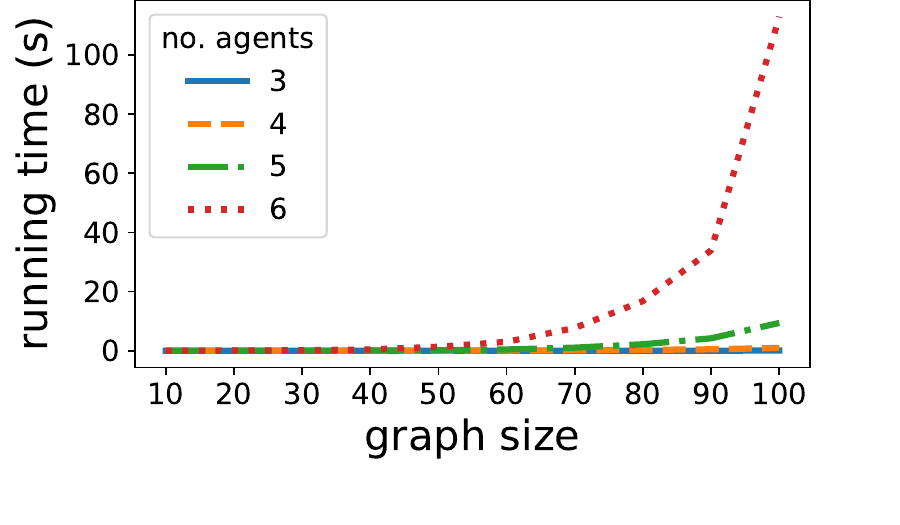}
        \caption{Running time.}
        \label{fig:alg:running-time}
    \end{subfigure}
    \begin{subfigure}{0.39\textwidth}
        \centering
        \includegraphics[width=\textwidth]{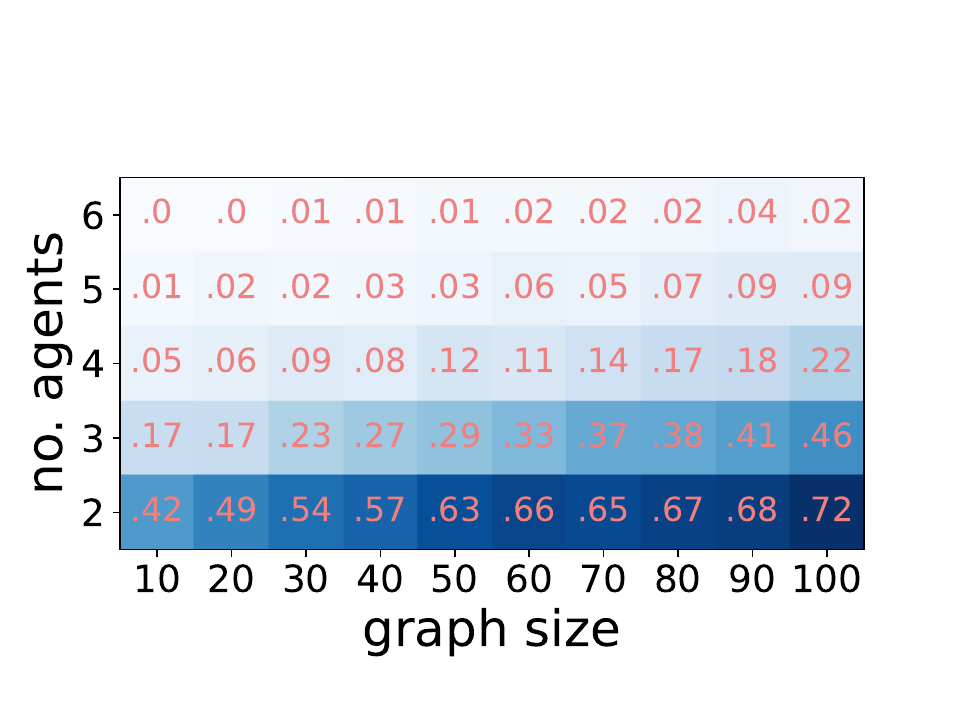}
        \caption{\EFone{} and \PO{}  existence.}
        \label{fig:exp:ef1+po:prob}
    \end{subfigure}
    \hfill
    \begin{subfigure}{0.46\textwidth}
        \centering
        \includegraphics[width=0.8\linewidth]{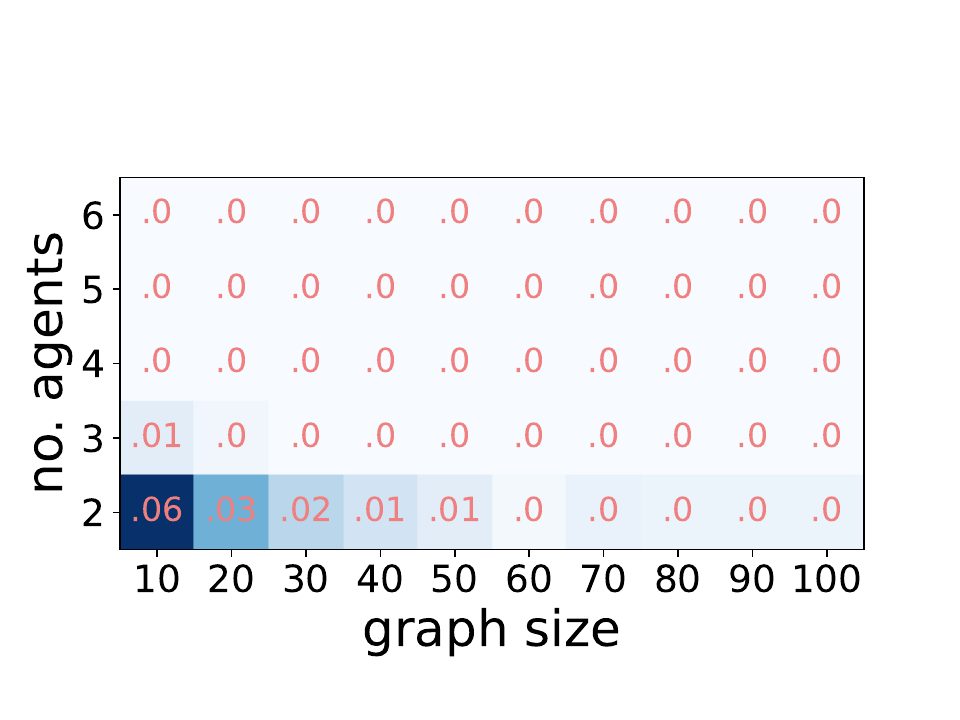}
        \caption{\EFone{} and \SO{} allocation existence.}
        \label{fig:mms+so:prob}
    \end{subfigure}
    \begin{subfigure}{0.46\textwidth}
        \centering
        \includegraphics[width=0.8\linewidth]{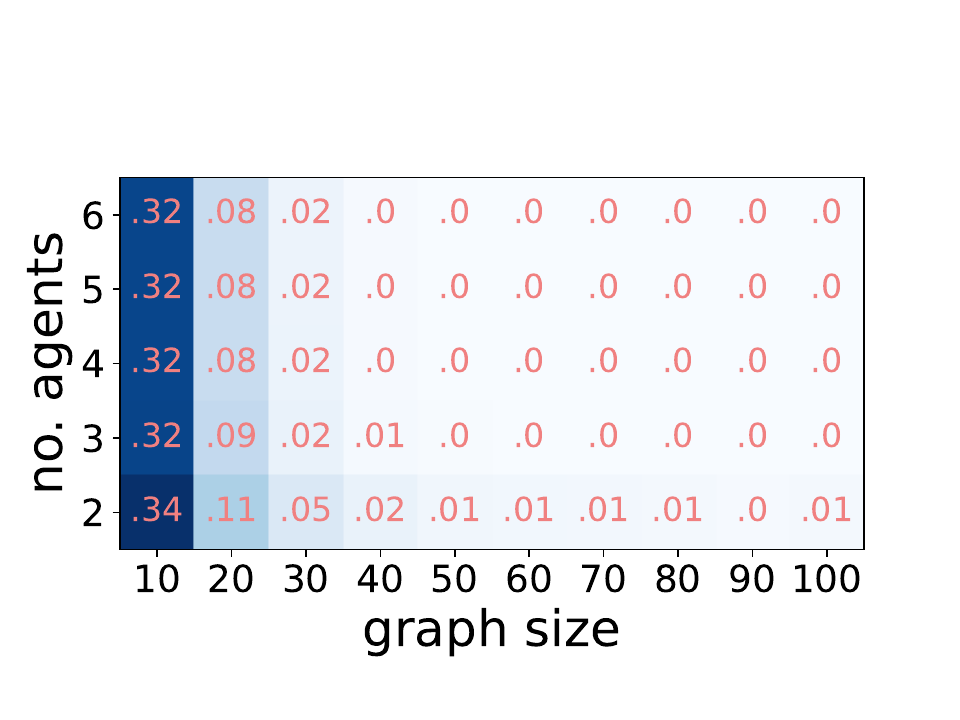}
        \caption{\MMS{} and \SO{} allocation existence.}
        \label{fig:ef1+so:prob}
    \end{subfigure}
    \caption{ \cref{fig:alg:running-time} illustrates the average running time of \cref{alg:ef1+po:main} on graphs of different sizes with different number of agents. \cref{fig:exp:ef1+po:prob,fig:mms+so:prob,fig:ef1+so:prob} show the fraction
    of instances admitting \EFone{} and \PO{}, \EFone{} and \SO{} and \MMS{} and \SO{} allocation, respectively, for different number of agents and sizes of graphs.}
\end{figure*}

\subsection{Existence of Fair and Efficient Allocations}

In this experiment, we check how often fair and efficient allocations exist for randomly generated trees. As we have seen, \MMS{} and \PO{} allocations always exist. Consequently, we focus on the three other combinations of fairness and efficiency. 

First, we checked how often there exists an \EFone{} and \PO{} allocation.
To this end, we generated trees of sizes $10, 20, \dots, 100$
and for each tree we run \cref{alg:ef1+po:main} for each
number of agents from 2 to 6.
Based on the output, we checked
the number of trees that admit an \EFone{} and \PO{} allocation.
As shown in \cref{fig:exp:ef1+po:prob}, the probability of finding an \EFone{} and \PO{} allocation
increases steadily when we increase the size of the graph,
but drops sharply when we increase the number of agents.
Intuitively, on larger graphs we have more flexibility in how we fairly and efficiently split
the vertices.
However, when there are more agents,
it may still be difficult to satisfy fairness for each of them.
We repeat a similar experiment for \EFone{} and \SO{}
as well as \MMS{} and \SO{} allocations.

The results for \EFone{} and \SO{} allocations
are presented in \cref{fig:mms+so:prob}.
There, we see a sharp decrease in probability
with the increase in either number of agents
or the size of a graph.
For the former it is clear,
as with larger number of agents
it is difficult to be fair to all of them.
For the size of a graph,
recall \cref{prop:center:ef1so}
in which we have shown that
\EFone{} and \SO{} allocation exists
only if  the hub is in the center of a tree.
Moreover, the center of a tree always contains one or two vertices.
Hence, with the increase in the size,
the probability that the hub will be in the center decreases.

The results for \MMS{} and \SO{}
allocations are presented in \cref{fig:mms+so:prob}.
As can be seen in the picture,
the probability does not really vary much
between different numbers of agents
(especially for small graphs).
A plausible explanation for that phenomenon
is that for \MMS{} we only have to care to not be
unfair to the worst off agent.
If we have a small graph and a lot of agents,
then probably some of them will not be assigned
to any vertex either way.
However,  the number of such agents
does not impact whether an allocation is \MMS{} or not.
Hence the visible effect.

\begin{figure}[t]
    \centering
    \begin{subfigure}{0.48\textwidth}
        \includegraphics[width=\linewidth]{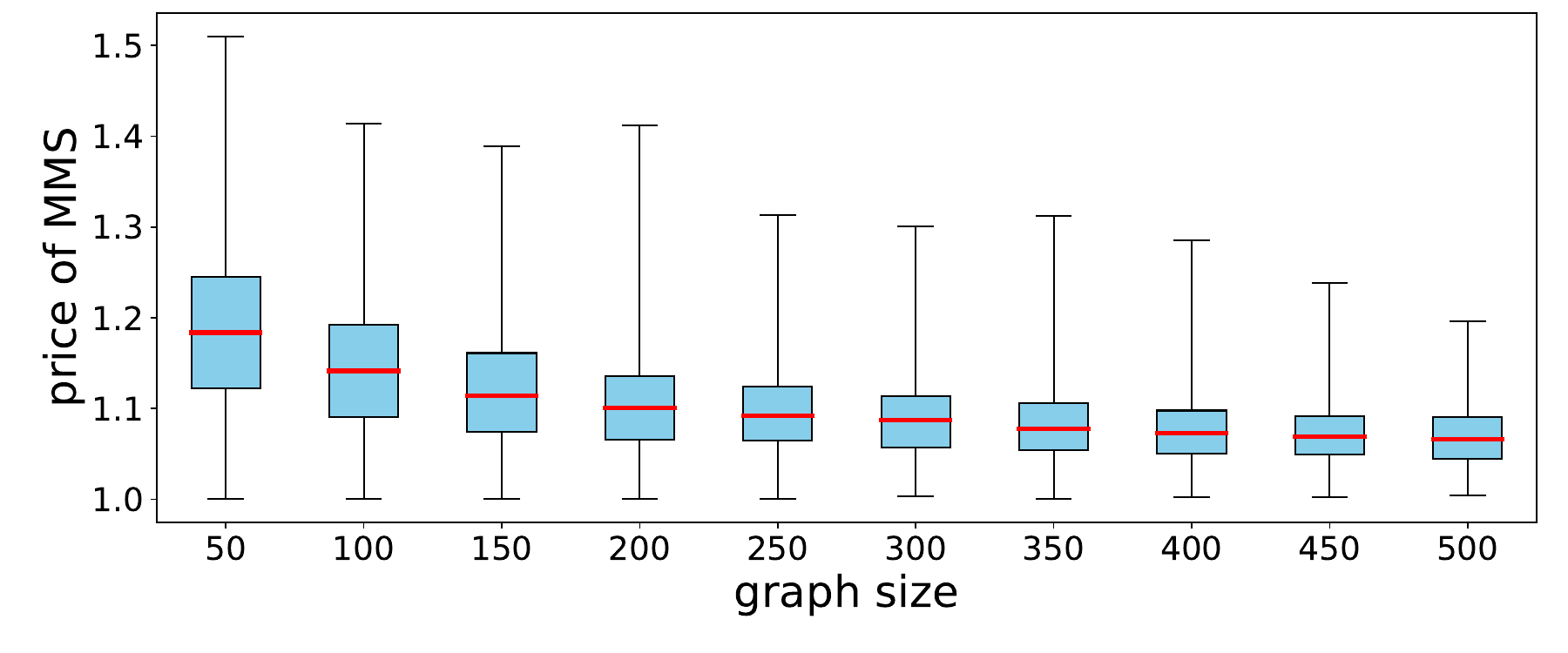}
        \caption{Price of \MMS{} for 2 agents.}
        \label{fig:price_of_mms:app:2}
    \end{subfigure}
    \begin{subfigure}{0.48\textwidth}
        \includegraphics[width=\linewidth]{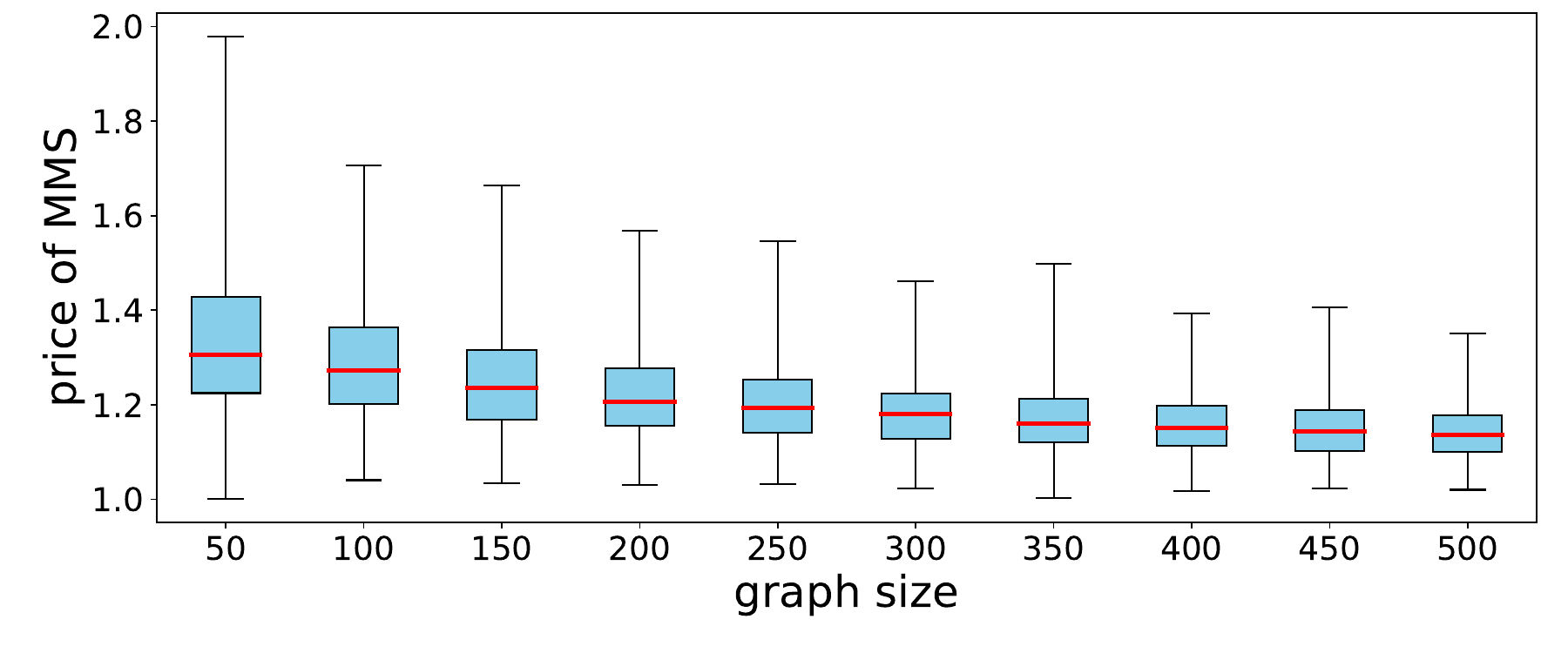}
        \caption{Price of \MMS{} for 3 agents.}
        \label{fig:price_of_mms:app:3}
    \end{subfigure}
    \caption{Price of \MMS{} for graphs of different sizes afor 2 and 3 agents.}
\end{figure}

\subsection{Price of \MMS{}}
Next, we analyze the price of \MMS{} observed in randomly generated trees. Recall that the price of fairness is the ratio of the sum of the agents' costs under a min cost fair allocation to the minimum possible social costs. From \cref{prop:mincostexist}, a minimum cost \MMS{} allocation must be \PO{}. However, an \EFone{} and \PO{} allocation need not exist. Furthermore, we do not have an exact algorithm for finding a minimum cost \EFone{}. Recall that this problem is \NPH{}, as shown in \cref{thm:mincostfair}. Consequently, we only pursue the price of \MMS{} experimentally. 

In this experiment, we observe the price of \MMS{} randomly generated trees of different sizes. 
In particular, we observe the change in the median price of fairness observed for each tree size. To this end, we generated trees of sizes $50,100,150,\dots,500$
and looked for a min cost \MMS{} allocation for 2 and 3 agents using \cref{alg:ef1+po:main}.
The results are reported in boxplots in
\cref{fig:price_of_mms:app:2,fig:price_of_mms:app:3}.
For 2 agents, \cref{fig:price_of_mms:app:2} illustrates that the median price is around $1.19$ for the graphs of size 50
and it steadily decreases for the larger graphs, to $1.05$ for graphs of size 500.
For 3 agents, \cref{fig:price_of_mms:app:3} illustrates that the median price is around $1.3$ for the graphs of size 50
and here too, it steadily decreases for the larger graphs, to $1.17$ for graphs of size 500.
These results suggest that as the size of the instance grows, the efficiency loss due to \MMS{} becomes negligible in most cases
(at least for a small number of agents). For a fixed graph size, the price of \MMS{} seems to be higher for more agents than fewer, which is intuitive.

\subsection{Pareto Frontiers}
In our final experiment, we analyze the trajectories of Pareto frontiers 
in randomly generated instances of fixed sizes
in order to establish an empirical 
trade-off between fairness (the maximum difference in the costs)
and efficiency (the total cost of agents).

We focus on graphs of size 100 or 400,
with 2-4 agents (for 4 agents, we only consider graphs of size 100). 
For each size and the number of agents and
for each of the 1000 trees generated,
we look at each allocation in the Pareto frontier 
and report the total cost of all agents on y-axis
and the maximum difference in the costs of the agents on x-axis
(in case of multiple allocations in the frontier
with the same total cost, we report only the one in which 
the maximum difference in the costs is the smallest).
Then, we connect all such points for one Pareto frontier
to form a partially transparent blue line.
By superimposition of all 1000 of such blue lines,
we obtain a general view on the distribution of Pareto frontiers.
With the thick red line we denote the average total cost,
for each difference in costs. 
We see that particular Pareto frontiers can behave very differently,
but the general pattern is quite strong:

\paragraph{Two agents.}
For 2 agents and graphs of size 400,
shown in \cref{fig:po_frontier} the total cost does not vary much when
the difference in costs is between 0 and 250,
however it is much steeper for the larger differences.
These findings imply that it is usually not effective to
focus on partial fairness as
the additional total cost that we incur
by guarantying complete fairness instead of partial
is not that big.
The plot for the size 100 and 2 agents, shown in \cref{fig:po:front:2:100} looks very similar.
Again, we can say, that the total cost for agents increases
sharply when we decrease the difference in costs of agents from 100,
but the further we go, this increase is slower.

\begin{figure}[!t]
    \centering
    \begin{subfigure}{0.32\textwidth}
        \includegraphics[width=\linewidth]{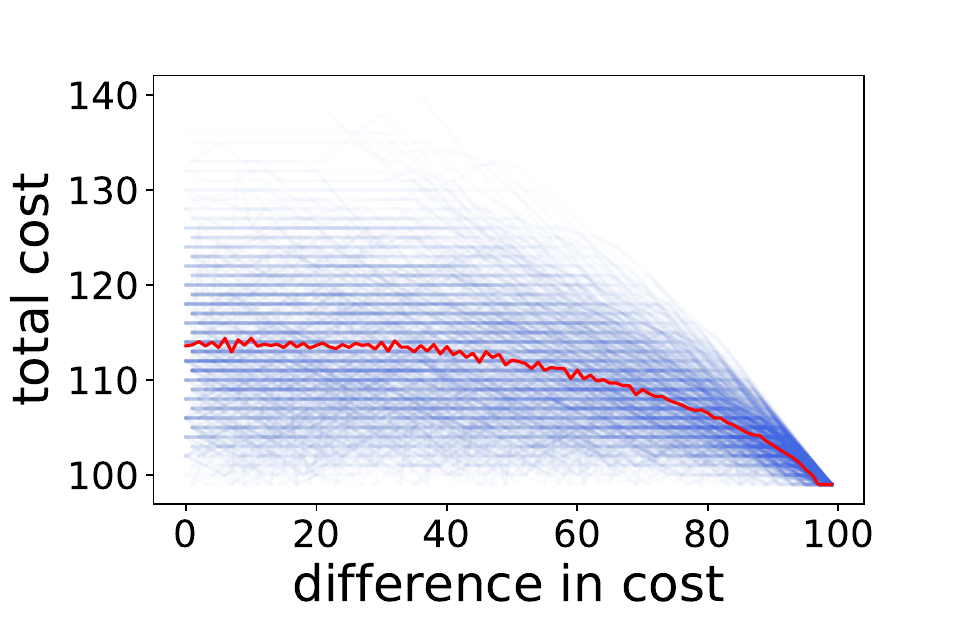}
        \caption{Pareto frontier (2,100).}
        \label{fig:po:front:2:100}
    \end{subfigure}
    \begin{subfigure}{0.32\textwidth}
        \centering
        \includegraphics[width=\textwidth]{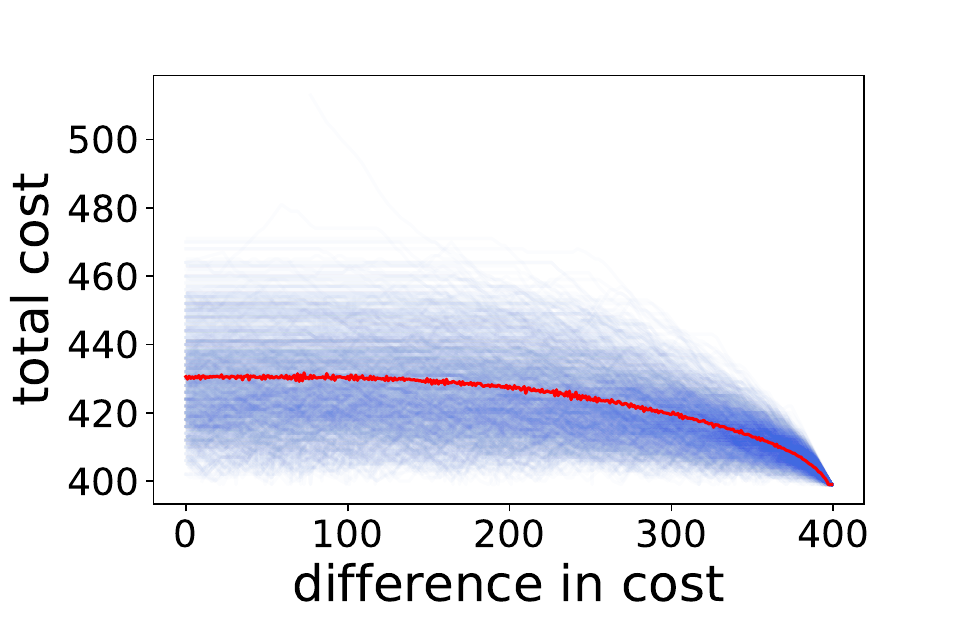}
        \caption{Pareto frontier (2,400).}
        \label{fig:po_frontier}
    \end{subfigure}  
    \begin{subfigure}{0.32\textwidth}
        \includegraphics[width=\linewidth]{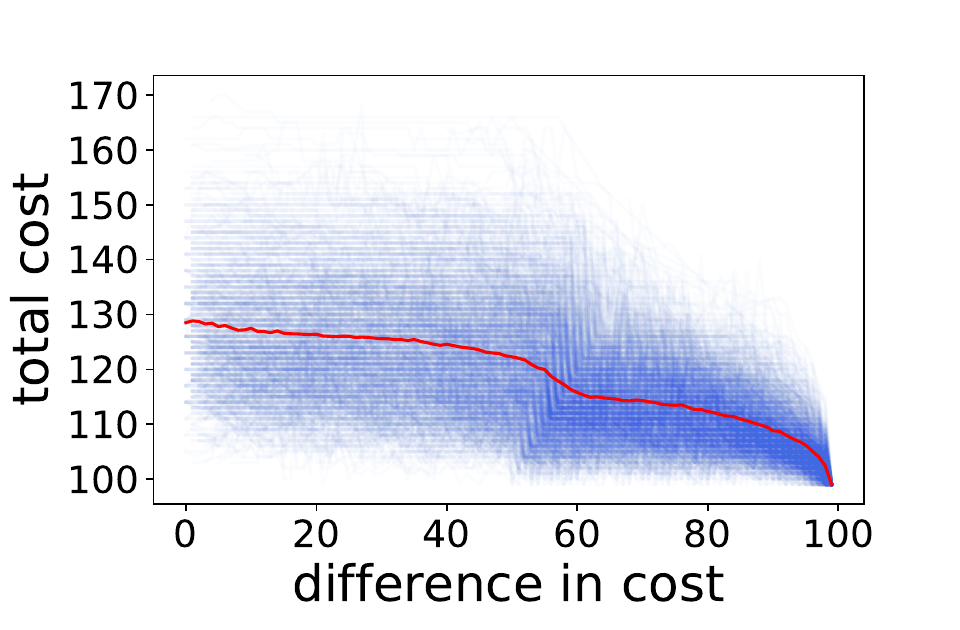}
        \caption{Pareto frontier (3,100).}
        \label{fig:po:front:3:100}
    \end{subfigure}
    \begin{subfigure}{0.32\textwidth}
        \includegraphics[width=\linewidth]{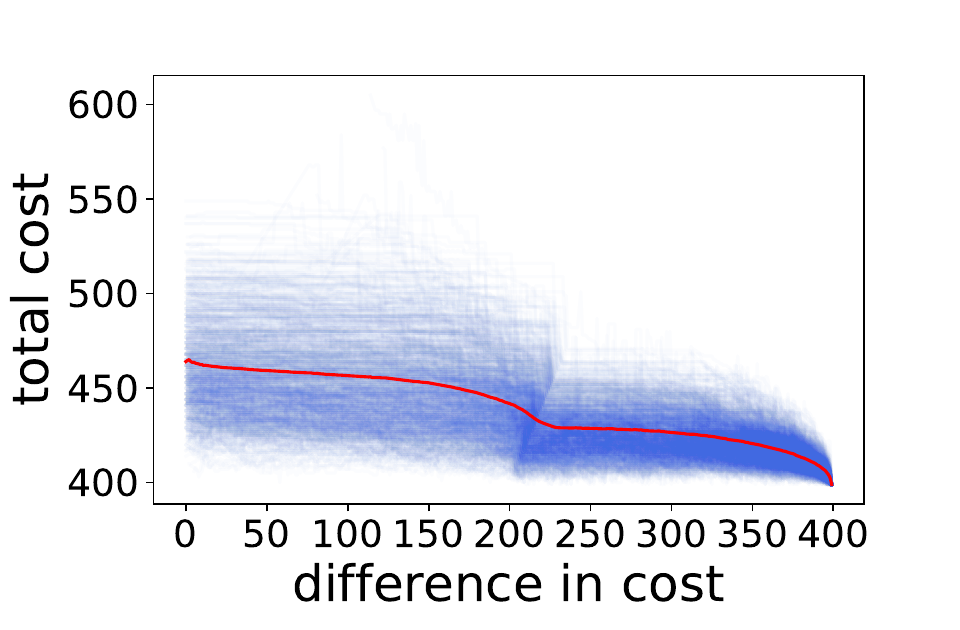}
        \caption{Pareto frontier (3,400).}
        \label{fig:po:front:3:400}
    \end{subfigure}
    \begin{subfigure}{0.32\textwidth}
        \includegraphics[width=\linewidth]{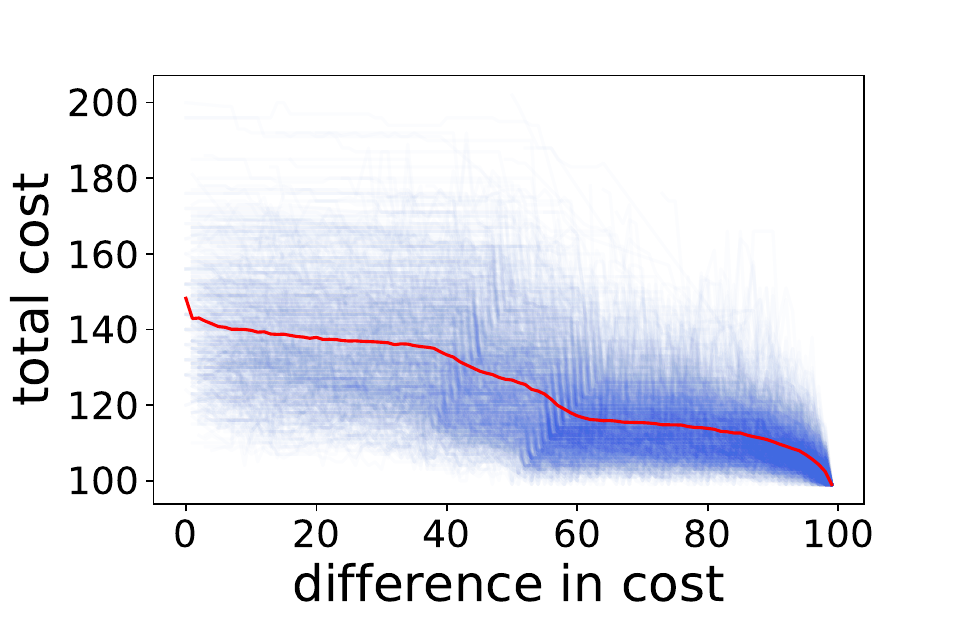}
        \caption{Pareto frontier (4,100).}
        \label{fig:po:front:4:100}
    \end{subfigure}
   
    \caption{ ``Pareto frontier ($n$, $m$)'',
    shows the distribution of Pareto frontiers for $n$ agents and graphs of size $m$.}
    \label{fig:experiments:app}
\end{figure}

\paragraph{Three agents.} We now look at the Pareto frontiers with 3 agents, shown in \cref{fig:po:front:3:100} and \cref{fig:po:front:3:400}.
For both graph sizes, the plots are not as smooth as the case for 2 agents. Specifically, there is a sharp change in the total cost near the middle of the plot (where the $x$-axis values are between 55 and 60 for \cref{fig:po:front:3:100} and between 210 and 240 for \cref{fig:po:front:3:400}).
We offer the following interpretation of this fact:
in the rightmost part of the plot we begin with
an allocation where the first agent is serving all of the vertices,
which gives us the minimum total cost,
but also the maximum difference between the costs of two agents.
Then, as we move towards the left, in the direction of decreasing differences in the agents' costs,
some of the vertices served by the first agent are now given to some other agent.

Initially, we do not observe a dramatic increase in the total costs as the difference in costs decreases. In the allocations in this area, 
the graph seems to be serviced only between the first two agents,
which  results in a smaller total cost for a higher difference in costs.
However, when we reach the difference in cost that is 
around the half of the graph size, this is no longer possible.
To decrease the difference further, 
vertices must be assigned to the third agent, which sharply increases the total cost.
This leads to the sharp increase seen in so many of the individual Pareto frontiers.

\paragraph{Four agents.} Turning to the case of four agents, shown in \cref{fig:po:front:4:100}, we observe two sharp change in the Pareto frontiers as opposed to a single change for 3 agents
(one for the $x$-axis values between 35 and 45 and another between 55 and 60).
We believe that the explanation for 4 agents is similar to that of 3 agents.
Again moving from the right most point with minimum total cost and maximum difference in cost, towards the left, we see a small increase in the total cost as at first only two agents are servicing the vertices. Then around the middle (half the number of vertices in the graph), we see a sharp increase in total costs, as now three agents are servicing the whole graph. Another sharp increase is seen when the fourth agent also starts delivering items. 



\section{Conclusions and Future Work}

We introduced a novel problem of fair distribution of delivery orders on graphs. We showed that even when  for unweighted tree graphs, the problem of finding fair and efficient solutions proves to be pretty hard. 
We provided a comprehensive characterization of the space of instances that admit fair (\EFone{} or \MMS{}) and efficient (\SO{} or \PO{}) allocations and---despite proving their hardness---developed an XP algorithm parameterized by the number of agents for each combination of fairness-efficiency notions. We found that between \MMS{} and \EFone{}, \MMS{} is more compatible with efficiency: an \MMS{} and \PO{} allocation always exists and the worst case price of \MMS{} is lower than that of \EFone{}.

Since a preliminary version of this work was published \cite{hosseini2023fair}, there has been significant interest in this model and its extensions. Some subsequent work is already available online. \cite{hosseini2025algorithmic} extends our model to weighted graphs and pursues \MMS{} and a relaxation of efficiency for different types of trees. One notable result there shows that in fact no FPT algorithm can find MMS and efficient allocations, making our XP algorithm the optimal. A different follow up considers fairness in multi stage deliveries \cite{vibulan2025fair}. Specifically, this is an offline model where delivery tasks are known some time in advance and fairness needs to be maintained throughout.

Even beyond these recent follow ups, our work paves the way for future research on developing approximation schemes
or perhaps algorithms parameterized by graph characteristics
(e.g., maximum degree or diameter) in this domain.
Another natural direction is extending our results to cyclic graphs. While the current impossibility results will continue to hold, multiplicative approximations of \MMS{} would be an important question to pursue in this space.

An alternate way to extend the current model would be by allowing agents to have heterogeneous cost functions, instead of being identical. Similarly, it would make sense to consider a setting where agents have capacity constraints on how many orders they may be allocated.
Another extension of the model could involve multiple hubs.

It would also be useful to study delivery tasks settings where tasks arrive in an online fashion. Alternatively, the underlying graph itself may be dynamic and some branches may be added or removed across time. 
Note that any model that is more general will still experience the same intractability hurdles as our model, and thus, the problem will be quite non-trivial.

Another avenue for future work would be to do the experiments we have done for price of fairness for alternate objectives like minimum cost \EFone{}. While the problem is \NPH{}, an exponential time exact algorithm, which is yet unknown, would allow such experiments. In a similar vein, it would be useful to run the price of fairness experiments on real world transportation networks.  



\section*{Acknowledgments}
Hadi Hosseini acknowledges support from NSF IIS grants \#2144413, \#2052488, and \#2107173. Part of this work was done when Shivika Narang was a visitor at Pennsylvania State University and Tomasz Wąs a postdoctoral scholar therein. Shivika Narang is currently supported by the NSF-CSIRO project on “Fair Sequential Collective Decision-Making”. We thank the anonymous reviewers for their constructive suggestions.

{\small 
\bibliographystyle{named} 
\bibliography{references.bib}}
\cleardoublepage

\appendix

\section*{Appendix}
\section{Additional Related Work}\label{app:priorwork}
Fair division of indivisible items has garnered much attention in recent years. Several notions of fairness have been explored here with \EFone{} \citep{lipton-envy-graph,budish2011combinatorial,barman2018finding,caragiannis2019unreasonable,barman2019fair} and \MMS{} \citep{barman2019fair, barman2020approximation,ghodsi2018fair, procaccia2014fair,hosseini2021guaranteeing}  being among the most prominent ones. A significant amount of work has gone into studying the coexistence of these fairness notions with efficiency, through the lens of pareto optimality \citep{caragiannis2019unreasonable,barman2018finding,hosseini2022fairly} or that of social welfare \citep{barman2019fair,caragiannis2022repeatedly,aziz2023computing}.

An important result from this space is from \citet{caragiannis2019unreasonable} showing that an \EFone{} and Pareto Optimal allocation is guaranteed to exist. In our setting we find that this is no longer the case. We use the leximin optimal allocation to provide a characterization for settings where \EFone{} and \PO{} can simultaneously be satisfied. In prior work, \citet{plaut2020almost} use a specific type of leximin optimal allocation to find $\mathrm{EFX}$ allocations under subadditive valuations. We also provide a characterization of when \EFone{} and socially optimal (SO) allocations exist. Such an allocation would maximize the social welfare (sum of all agents' values) and also be fair. Prior work has looked at maximizing SW over \EFone{} allocations \citep{barman2019fair,benabbou2020finding,caragiannis2022repeatedly}.

Submodular valuations and their subclasses have been well studied in prior work. Typically, submodular valuations require oracle access as the functions tend to be too large to be sent as an input to the algorithm. Depending on the type of oracle access, the abilities and efficiency of the algorithm changes. Subclasses of submodular valuations have also been explored in fair division literature, especially in the context of \MMS{} \citep{barman2021existence,barman2020approximation,ghodsi2018fair,benabbou2020finding,li2021estimating,barman2022truthful}. Additionally, while the majority of the work in this space has looked at settings with goods, some recent work also looks at chores, either alone or in conjunction with goods \citep{aziz2022fair,bhaskar2021approximate,caragiannis2022repeatedly,hosseini2022ordinalgoods,huang2016fair,freeman2019equitable}. 

Submodular costs have been studied in various algorithmic settings, be it combinatorial auctions, facility location or other  graph problems like shortest cycles. Of these the only study to look at a cost model similar to ours is \citet{svitkina2006facility} where the authors consider submodular facility costs using a rooted tree whose leaves are the facilities. The facility costs of opening a certain set of facilities was the sum of the weights of the  vertices needed to be crossed from the root to reach reach these facilities from the root. While this is very similar to the cost functions in our model, it is important to note that this work has no fairness considerations, only aiming to minimize the total costs. Needless to say, this can cause a large discrepancy in the workloads of different facilities.

Price of fairness has also been studied in both additive and submodular valuation settings \citep{barman2020optimal,bei2021price,caragiannis2012efficiency,sun2021fairness,bhaskar2023price,li2024complete}. Recently, \citet{li2024complete} provide a complete characterization of the price of envy-based fairness notions under divisible and indivisible goods. \citet{sun2021fairness} looks at the price of \EFone{} and approximate \MMS{} among other notions in chore division settings. They also show how the notions relate to the each other. 
In particular, they demonstrate that the price of fairness can be arbitrarily large.
This is not the case in our setting
where the price of fairness is bounded
for a given $m$ and $n$ (\cref{prop:priceefone}).

\subsection{Fair Task Allocation}

Task allocation is a bit of an overloaded term and has been used outside of fair allocations of chores. For the most part, it refers to efficiently allocating tasks to robots (see \cite{nunes2017taxonomy} for an overview) and is not concerned with fairness. When pursuing fair task allocations, the tasks considered are independent of each other, unlike our model where tasks lie on a graph. Most settings studied require  exactly one agent to complete a task \cite{ye2017fair,basik2018fair,zlotkin1992domain,sun2020fair}. Some others may require groups to be allocated \cite{amador2014dynamic} or allow tasks to be divided among multiple agents \cite{billing2020fair}.

Fairness here is pursued in different ways. For the most part, it focuses on either i) reducing variance in the assigned loads of agents \cite{basik2018fair,billing2020fair} or maximizing the minimum number of tasks assigned \cite{ye2017fair}. For the most part, no costs are considered here, with the exception of the costs of the agents commissioning the tasks \cite{ye2017fair}. Most works assume prior knowledge of upcoming tasks \cite{ye2017fair,basik2018fair,zlotkin1992domain,billing2020fair} and give algorithms that allocate these tasks in advance. \citet{amador2014dynamic} also looks at a setting where tasks arrive dynamically, but the tasks here are largely independent. Each task has an associated start and end time. As a result, an agent can only be allocated tasks whose start and end times do not overlap. In contrast, out model does not how such restrictions. 

\subsection{Fair Food Delivery}
Work on fairness in delivery settings has been almost entirely empirical \citep{nair2022gigs,gupta2022fairfoody,wu2022fair,singh2023fairassign,tsai2023genetic} and no positive theoretical guarantees have been provided. While the models pursued are not identical, one common assumption is that an agent services one request at a time. Further, the cost of servicing a set of requests is simply the sum of the costs of the individual requests. This clearly differs from our setting where agents service a large set of orders together and the cost of the set is a submodular function of the costs of the individual orders. To the best of our knowledge, no other paper has studied interdependent costs of delivery tasks.  

Additionally, under prior work on food delivery, the aim is typically to achieve fairness by income distribution\cite{nair2022gigs,gupta2022fairfoody,singh2023fairassign} . Specifically, the goal is reducing the variance in the income earned by the various delivery workers. In contrast,  we look for fairness in the workload of the delivery agents, as there can be many settings where agents receive no or fixed monetary compensation for their work. In our setting, fairness is achieved by a fair division of the underlying graph.

\subsection{Fairness with Graphs}
Some prior work has also looked at fair division on graphs \citep{bouveret2017fair,bilo2022almost,truszczynski2020maximin,misra2021equitable,misra2022fair}. With the exception of \citet{misra2022fair}, all other papers looked at settings where items are on a graph and each agent must receive a {\em connected} bundle.
Our model does not have such a restriction. The purpose behind the prior work done in fair division on graph is to partition the graph into $n$ vertex-disjoint connected subgraphs. The value of the agents comes only from the vertices in their bundle, and does not depend on vertices outside of their bundle. In our case, the subgraphs will always intersect in the hub. Additionally, for us, the bundle that an agent receives need not form a connected subgraph. The cost depends on the distances of these vertices to the hub, which is in not allocated. Thus, agents can incur costs from having to traverse vertices that are not in their bundles. 
\citet{misra2022fair} look at a setting where agents are connected by a social network and only envy those agents they have an edge to. 

Recent work, initiated by \citet{christodoulou2023fair} introduces {\em graphical valuations} where agents are vertices on a graph and the edges are the items \cite{bhaskar2024efx,zeng2024structure,misra2024envy,deligkas2024ef1,afshinmehr2024efx,hsu2024efx,hsu2025polynomial,zhou2024complete}. An incident edge can only be assigned to one of its two endpoints. It is easy to see that there is little overlap between this setting and the one introduced in this paper.

While fairness has been studied in routing problems, the aim has been to balance the amount of traffic on  each edge \citep{kleinberg1999fairness,pioro2007fair}. This does not capture the type of delivery instances that we look at. Some work has also looked at ride-hailing platforms and aiming to achieve group and individual fairness in these settings \citep{sanchez2022balancing,esmaeili2022rawlsian}. However, these studies are largely experimental and do not provide any theoretical guarantees. Further, these models also do not look at models with submodular costs. In fact, they use linear programs to achieve their experimental results. 

\section{Graph Preliminaries}\label{app:treeprelims}
A \emph{graph} is defined by a pair $G=(V,E)$,
where $V$ (or $V(G)$) is a set of \emph{ vertices}
and $E$ (or $E(G)$) is a set of (undirected) \emph{edges}.

We are often interested in some specific sub elements of a given graph. We shall now define those relevant to this paper.
A \emph{walk} is a sequence of  vertices $(v_0,\dots,v_k)$
such that every two consecutive  vertices are connected by an edge.
The \emph{length} of a walk is the number of edges in it,
so the number of  vertices minus 1.
A \emph{path} is a walk in which all  vertices are pairwise distinct.

A walk or a path between two vertices allows us to define several useful properties. We say that a graph is \emph{connected} if there exists a walk between every pair of vertices in $V$.
In a connected graph, the distance between vertex $u$ and $v$,
i.e., $\dist(u,v)$, is the minimum possible length
of a walk connecting $u$ and $v$.

We now discuss some graph notation. A \emph{subgraph} of graph $G$
is any graph $H=(U,M)$ such that
$U \subseteq V$ and $M \subseteq E$.
A subgraph is \emph{induced}
if for every edge $(u,v) \in E$
such that $u,v \in U$ we have $(u,v) \in M$.

A \emph{tree} is a graph in which
between every pair of  vertices
there exists exactly one path. Some properties of a tree $G=(V,E)$ with $n$ vertices are as follows:
\begin{itemize}
    \item $G$ is connected 
    \item $G$ is acyclic
    \item $G$ contains exactly $n-1$ edges. 
    \item Given a subset of vertices $S\subset V$, a minimum length walk containing all the vertices in $S$ can be found in polynomial time.
\end{itemize}

We now finally define some terms related to trees that enable convenient reference in our proofs. A \emph{rooted tree}, $(G,h)$, is a tree with one vertex, $h \in V$, designated as the \emph{root}. In our paper we shall assume that our trees are rooted at the hub.

For every vertex $u \in V$, every vertex in the path
from $u$ to $h$ except for $u$ is called its \emph{ancestor}.
For such  vertices $u$ is a \emph{descendant}. An ancestor (or descendant) that is connected to $u$ by an edge
is called a \emph{parent} (or \emph{child}).

A vertex without children is called a \emph{leaf} (vertex);
otherwise, if it does have children, it is called an \emph{internal} vertex.
A \emph{subtree rooted in} $u$ is
an induced subgraph $T_u = (U,M)$ rooted in $u$,
where $U$ contains $u$ and all its descendants.
By a \emph{branch} outgoing from $h$ we understand
a set of vertices in 
a subtree rooted in a child of $h$.


\section{Relation Between MMS and EF}\label{sec:mmsandfriends}
In this section,
we consider 
the relation between \MMS{} and envy-freeness.


It is well-known that \EF{} implies \MMS{} for additive items, via Proportional share. 
However, \EFone{} does not imply \MMS{}, even for identical additive valuations and vice versa. However, for identical additive settings an allocation that is simultaeneously \EFX{} and \MMS{} must exist. 


\begin{proposition}
    Given a $n$ identical agents and $m$ indivisible goods $M$ and additive valuation function $v:2^M\rightarrow \mathbb{R}^+$, there exists an allocation $A$ s.t. $A$ is \EFX{} and \MMS{}. However, every \EFX{} allocation is not \MMS{} and every \MMS{} allocation is not \EFone{}.
\end{proposition}

\begin{proof}
    Consider the leximin optimal allocation $A^*$ for the $n$ agents over $M$ under $v$. Clearly, $A^*$ is leximin optimal. We now show that $A^*$ must be \EFX{}. Without loss of generality let $v(A_1^*)\geq v(A_2^*)\geq \cdots \geq v(A_n^*)$. Further, we assume that for all $g\in M$, $v(g)>0$
    
    For contradiction, let $A^*$ not be \EFX{}. That is, there exists $i$ s.t. for some $g\in A_i$, $v(A_i^*\setminus g)>v(A_n^*)$. Fix  $g\in A^*_i$ s.t. $v(A_i^*\setminus g)>v(A_n^*)$. Consider the allocation $A$ where for $j\notin \{i,n\}$ $A_j=A^*_j$ and $A_i=A^*_i\setminus \{g\}$ and $A_n=A^*_n\cup \{g\}$. Now, by assumption $v(A_i)=v(A^*_i\setminus \{g\})>v(A_n^*)$ and clearly $v(A_n)>v(A^*_n)$. Thus, $A$ leximin dominates $A^*$. This contradicts the assumption that $A^*$ is leximin optimal.

    As a result, a leximin optimal allocation must be \MMS{} and \EFX{}.

    \paragraph{\EFX{} is not \MMS{}.} We show that \EFX{} does not imply \MMS{} for identical valuations (\EFX{} is a stricter notion than \EFone{} that requires that on the removal of \emph{any} item, envy-freeness must be satisfied).  Take a simple example. Let $n=2$ and $m=4$, where $v(g_1)=v(g_2)=2$ and $v(g_3)=v(g_4)=1$. 
    Here an \MMS{} allocation would give both agents a value of 3. Now, the allocation $A$ where $A_1=\{g_1,g_2\}$ and $A_2=\{g_3,g_4\}$ is \EFX{} (and hence, \EFone{}) but not \MMS{}.

    \paragraph{\MMS{} is not \EFone{}.} Consider an example with $3$ agents and $5$ indivisible goods where $v(g)=1$ for each good $g$. Here, for an allocation to be \MMS{}, we need that $v(A_i)\geq 1$ for each $i$. Consider the allocation $A$ where $|A_1|=3$ and $|A_2|=|A_3|=1$. Clearly this is \MMS{} but not \EFone{}. 
\end{proof}

An analogous proof also shows that for identical additive cost functions over indivisible chores, an allocation that satisfies \MMS{} and \EFone{} always exists. 

Unfortunately, even \EF{} need not imply \MMS{} under our setting. Consider the instance and allocation depicted in \cref{fig:efmms}. Both agents incur a combined cost of 3, but the \MMS{} cost is 2.


\begin{figure}[t]
    \centering 
   \begin{tikzpicture}
        \def\xs{1.2cm} 
        \def\ys{0.4cm} 
        \def\x{0cm} 
        \def\y{0cm} 
        \def\ls{\footnotesize} 
        
        \tikzset{
            node_blank/.style={circle,draw,minimum size=0.5cm,inner sep=0, color=white}, 
            node/.style={circle,draw,minimum size=0.45cm,inner sep=0, fill = black!05},
            node_h/.style={circle,draw,minimum size=0.6cm,inner sep=0, fill = blue!20, font=\footnotesize},
            node_emph/.style={circle, minimum size=0.75cm, black!15, fill = black!15, draw,inner sep=0, font=\footnotesize},
            edge/.style={draw,thick,sloped,-,above,font=\footnotesize},
            arrow/.style={draw, single arrow, minimum width = 0.9cm, minimum height=\y-6*\x+\s, fill=black!10},
            blank/.style={},
            alloc_a/.style={circle, draw, color=blue!30, minimum size=0.8cm, line width=2.5},
            alloc_b/.style={rectangle, draw, color=red!30, minimum size=0.7cm, line width=2.5},
        }
        
        \node[node_emph] (_) at (\x + 1*\xs, 0*\ys + \y) {};
        \node[node] (a) at (\x + -1*\xs - 0.1cm, 0*\ys + \y) {\ls $a$};
        \node[node] (b) at (\x + -0*\xs - 0.1cm, 0*\ys + \y) {\ls $b$};
        \node[node_h] (h) at (\x + 1*\xs, 0*\ys + \y) {\ls $h$};
        \node[node] (c) at (\x + 2*\xs + 0.1cm, 0*\ys + \y) {\ls $c$};
        \node[node] (d) at (\x + 3*\xs + 0.1cm, 0*\ys + \y) {\ls $d$};
        
        \node[alloc_a] (_) at (\x + -1*\xs - 0.1cm, 0*\ys + \y) {};
        \node[alloc_b] (_) at (\x + -0*\xs - 0.1cm, 0*\ys + \y) {};
        \node[alloc_a] (_) at (\x + 2*\xs + 0.1cm, 0*\ys + \y) {};
        \node[alloc_b] (_) at (\x + 3*\xs + 0.1cm, 0*\ys + \y) {};
        
        \path[edge]
        (a) edge (b)
        (b) edge (h)
        (h) edge (c)
        (c) edge (d)
        ;

    \end{tikzpicture}
    \caption{An example graph showing that \EF{} does not imply \MMS{}.
    The visible allocation, i.e., $(\{a,c\},\{b,d\})$, is \EF{}
    but not \MMS{}.}
    \label{fig:efmms}
\end{figure}


\section[Additional Material for Section 6]{Additional Material for Section 6}\label{app:xp}
In this section,
we describe \cref{alg:ef1+po:main} in more detail 
and formally prove its correctness.
We further show how it can be used to find
allocations satisfying a combination
of fairness and efficiency notions
whenever they exist.

We begin by introducing some additional notation.
By $\Pi_n$ let us denote the set of all permutations
of set $[n]$.
For a permutation $\pi \in \Pi_n$
and allocation $A = (A_1,\dots,A_n)$,
by $\pi(A)$ we understand allocation $(A_{\pi(1)},\dots,A_{\pi(n)})$,
i.e., allocation where agent $i$ receives the original bundle of agent $\pi(i)$.
For two partial allocations,
$A = (A_1,\dots,A_n)$ and $B=(B_1,\dots,B_n)$,
with disjoint set of distributed vertices,
i.e., $\bigcup_{i\in [n]} A_i \cap \bigcup_{i \in [n]}B_i = \emptyset$,
by $C = A \oplus B$,
we understand allocation $C = (A_1 \cup B_1,\dots,A_n \cup B_n)$.
For two allocations $A$ and $B$ we write $A <_{PO} B$,
when $A$ is Pareto dominated by $B$.
We write $A \le_{PO} B$, when it is \emph{weakly Pareto dominated},
i.e., $c(A_i) \ge c(B_i)$ for every $i \in [n]$.
Now, we are ready to describe \cref{alg:ef1+po:main} with details on
how we combine the two lists of allocations, $\mathcal{F}$ and $\mathcal{F}'$.

\setcounter{algorithm}{0}
\begin{algorithm}[!t]
    \caption{FindParetoFrontier($n$, $G$,$h$)}
    \label{alg:ef1+po:main_app}
    \begin{algorithmic}[1]
        \STATE $\mathcal{F} \leftarrow [ (\emptyset, \dots, \emptyset) ]$
        \FOR{$u \in$ children of $h$}
            \STATE $T_u \leftarrow$ a subtree rooted in $u$
            \STATE $\mathcal{F}' \leftarrow $ FindParetoFrontier($n$, $T_u$, $u$)
            \STATE \textbf{for} $A \in \mathcal{F}'$ \textbf{do} add $u$ to $A_1$
            \STATE $\mathcal{F}'' \leftarrow [\ ]$, empty list of allocations
            \FOR{ $A \in \mathcal{F}, B \in \mathcal{F}', \pi \in \Pi_n$}
                \STATE $C =$ sort$(A \oplus \pi(B))$
                \IF{there is no $D \in \mathcal{F}''$ s.t. $C \le_{PO} D$}
                    \STATE add $C$ to $\mathcal{F}''$ 
                    \WHILE{there is $D \in \mathcal{F}''$ s.t. $D <_{PO} C$}
                    \STATE remove $D$ from $\mathcal{F}''$
                    \ENDWHILE
                \ENDIF
            \ENDFOR
            \STATE $\mathcal{F} \leftarrow \mathcal{F}''$
        \ENDFOR
        \RETURN $\mathcal{F}$
        \end{algorithmic}
\end{algorithm}


\paragraph{Algorithm.}
Throughout the algorithm,
we keep allocations in the list $\mathcal{F}$,
each allocation sorted in non-increasing cost order.
First, we initialize it
with just one empty allocation.
Then, we look at vertices directly connected to the hub.
For each such $u$, we run our algorithm on a smaller instance
where the graph is just the branch outgoing from $h$
that $u$ is on, and $u$ is the hub.
In each allocation in the output, $\mathcal{F}'$,
we add $u$ to the bundle of the first agent.
Finally, we combine the allocations returned by each child of $h$.
%

To this end, we first initialize $\mathcal{F}''$ with an empty list of allocations.
We then iterate over all possible triples $(A,B,\pi)$,
where $A$ and $B$ are allocations
from input lists $\mathcal{F}$ and $\mathcal{F}'$, respectively,
and $\pi$ is a permutation of agents $[n]$.
For each such triple,
we consider allocation $C = \sort(A \oplus \pi(B))$, i.e.,
the allocations in which agent $i$,
receives bundle $A_i \cup B_{\pi(i)}$,
sorted in non-increasing cost order.

In the next step, we check if there exists
an allocation $D$ in $\mathcal{F}''$ that weakly Pareto dominates $C$.
If this is the case, we disregard $C$
and move to the next triple.
If this is not the case,
then we add allocation $C$ to the list $\mathcal{F}''$
and remove all allocations $D$ from $\mathcal{F}''$
that are Pareto dominated by $C$.
We note that these operations
can be performed more efficiently
if we keep allocations in $Res$ in a specific ordering,
but since it is not necessary for our results,
we do not go into details for the sake of simplicity.
After considering all pairs and permutations,
we put $\mathcal{F}''$ for $\mathcal{F}$.

\xpalgproof*

\begin{proof}
Let us start by showing that the output is a Pareto frontier,
which we will prove by induction on the number of edges in a graph.
If there is only one edge,
the graph consists of the hub, $h$,
and one vertex connected to it, say $u$.
Then, there is only one possible allocation
(up to a permutation of agents),
namely, $(\{u\},\emptyset,\dots,\emptyset)$,
and it is \PO.
Observe that this is also
the only allocation returned by our algorithm
for such a graph.
Hence, the inductive basis holds.

Now, assume that our algorithm outputs a Pareto frontier for every instance,
in which the number of edges is smaller or equal to $M$ for some $M \in \mathbb{N}$
and consider an instance $\langle [n], G, h \rangle$ with $M+1$ edges.
Take arbitrary \PO{} allocation $A$.
We will show that in the output of \cref{alg:ef1+po:main}
for this instance, $\mathcal{F}$, there exists allocation $B$ such that
$c(A_i) = c(B_{\pi(i)})$ for some permutation $\pi \in \Pi_n$.

If $h$ has only one child, $u$,
then observe that every agent that services some vertex in $A$
has to visit $u$.
Also, the agent that services $u$ services also other  vertices
(otherwise giving $u$ to an agent that services some other vertex
would be a Pareto improvement).
Hence, partial allocation $A'$
obtained from $A$ by removing $u$ is still \PO{}
and the cost of each agent is the same in both allocations.
Observe that $A'$ is also a \PO{} allocation in instance $\langle [n], T_u, u \rangle$.
Let $\mathcal{F}'$ be the output of \cref{alg:ef1+po:main}
for instance $\langle [n], T_u, u \rangle$.
Then, since $T_u$ has $M$ edges,
from inductive assumption we know that there exists $B' \in \mathcal{F}'$
and $\pi \in \Pi_n$ such that
$c(A'_i) = c(B'_{\pi(i)})$, for every $i \in [n]$.
Then, let $B$ be an allocation obtained from $B'$
by adding $u$ to the bundle of agent $1$.
Since we sort allocation so that consecutive agents have nonincreasing costs,
we know that $B'_1 \neq \emptyset$.
Hence, cost of each agent in $B$ is the same as in $B'$.
Hence, $c(A_i) = c(A'_i) = c(B'_{\pi(i)}) = c(B_{\pi(i)})$.
Since $B$ is in the output of the algorithm,
the induction thesis holds.

Now, assume that $h$ has more than one child.
Let us denote them as $u^1,\dots,u^k$
and by $U^1,\dots,U^k$ let us denote
the respective branches outgoing from $h$.
Let $A^1,\dots,A^k$ be partial allocations obtained from $A$
by restricting $A$ to one branch from $U^1,\dots,U^k$, respectively
(i.e., removing all vertices not in the branch from all of the bundles).
Observe that since $A$ is \PO{},
each allocation $A^1,\dots,A^k$ is also \PO{}
(otherwise a Pareto improvement in $A^j$ for some $j \in [k]$
would be a Pareto improvement also in $A$).
With the same reasoning as in the previous paragraph,
by line 6 of \cref{alg:ef1+po:main}
in the iteration of the loop for each child $u^j$,
list $\mathcal{F}'$ contains an allocation $B^j$ such that
$c(A^j_i) = c(B^j_{\pi^j(i)})$ for some $\pi^j \in \Pi_n$ and every $i \in [n]$.
Now, when we combine $\mathcal{F}$ with $\mathcal{F}'$
we consider all possible combinations of allocations in $\mathcal{F}$ and $\mathcal{F}'$
along with all possible permutations of agents.
Hence, in the output of the algorithm,
there will be allocation $B$ and permutation $\pi \in \Pi_n$ such that
$B_{\pi(i)} = B^1_{\pi^1(i)} \cup \dots \cup B^k_{\pi^k(i)}$
for every $i \in [n]$,
unless there is some allocation $D$ that weakly Pareto dominates $B$.
Since $A^1,\dots,A^k$ are partial allocations of separate branches
and $B^1,\dots,B^k$ as well, we have
\[
    c(B_{\pi(i)}) = 
        \textstyle \sum_{j \in [k]} c(B^j_{\pi^j(i)}) = 
        \textstyle \sum_{j \in [k]} c(A^j_i) = 
        c(A_i),
\]
for every $i \in [n]$.
Hence, $B$ is \PO.
Thus, if there is $D$ that weakly Pareto dominates $B$,
then $c(D_i)=c(B_i)$ for every $i \in [n]$.
Either way, there exists an allocation in $\mathcal{F}$
that for corresponding agents gives the same costs as allocation $A$.
Therefore, $\mathcal{F}$ is a Pareto frontier, which concludes the induction proof.

In the rest of the proof,
let us focus on showing that the running time of \cref{alg:ef1+po:main}
is $O((n+2)!m^{3n+1})$.
To this end, recall that
in the nested loop of the algorithm (lines 7-15)
we consider all triples $(A,B,\pi)$,
where $A$ is an allocation in $\mathcal{F}$,
$B$ an allocation in $\mathcal{F}'$,
and $\pi$ a permutation in $\Pi_n$.
Since the cost of an agent is an integer between $0$ and $m$
and at any moment we cannot have
two allocation with the same cost for every agent
in $\mathcal{F}$ or $\mathcal{F}'$
(because one is weakly Pareto dominated by the other),
the sizes of $\mathcal{F}$ and $\mathcal{F}'$ are bounded by $(m+1)^n$.
Hence, the nested loop in lines 7--15
will have at most $n!(m+1)^{2n}$ iterations.
For each triple, we have to sort the resulting allocation, which can take time $n\log(n)$.
Moreover, we may need to check whether resulting allocation $C$
Pareto dominates or is Pareto dominated by all of the allocations already kept in $\mathcal{F}''$.
The size of $\mathcal{F}''$ is also bounded by $(m+1)^n$
and checking Pareto domination can be done in time $n$.
All in all, the running time of
the nested loop is in $O((n+2)! m^{3n})$.
Finally, observe that in the algorithm 
we go through the nested loop less than $m$ times,
thus final running time is in $O((n+2)! m^{3n + 1})$.
\end{proof}

Now, let us show how we can use \cref{alg:ef1+po:main}
to find allocation satisfying certain fairness and efficiency requirements
(or decide if they exist).

\xpalgallocproof*
\begin{proof}
    Let us split the proof into four lemmas devoted
    to each combination of fairness and efficiency notions.
    We start with \MMS{} and \PO{} allocations.

    \begin{lemma}
        \label{thrm:mms+po:xp}
        There exists an XP algorithm parameterized by $n$ that
        for every delivery instance $\langle [n], G, h \rangle$
        finds an \MMS{} and \PO{} allocation.
    \end{lemma}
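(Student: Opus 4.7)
The plan is to use \cref{alg:ef1+po:main} as a black box and then post-process its output. By \cref{thm:xpalg}, running \cref{alg:ef1+po:main} on the instance $\langle [n], G, h\rangle$ produces its Pareto frontier $\mathcal{F}$ in time $O((n+2)!\, m^{3n+2})$, and from the argument in that proof we also know that $|\mathcal{F}| \le (m+1)^n$, since each allocation in $\mathcal{F}$ is kept sorted in non-increasing cost order and no two allocations in $\mathcal{F}$ induce the same cost profile. Both bounds are polynomial in $m$ whenever $n$ is treated as a fixed parameter, so both $\mathcal{F}$ and any linear scan over it are XP-computable.

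Next I would iterate through $\mathcal{F}$ and return the allocation whose sorted cost vector is lexicographically smallest. Comparing two such vectors takes time $O(n)$, so selecting the leximin-optimal element of $\mathcal{F}$ costs an additional $O(n \cdot |\mathcal{F}|) = O(n (m+1)^n)$, which does not change the XP bound. Call the returned allocation $A^\star$.

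It remains to argue that $A^\star$ is both \MMS{} and \PO{}. Pareto optimality is immediate: $A^\star \in \mathcal{F}$ and every allocation in the Pareto frontier is \PO{} by definition. For \MMS{}, I would invoke the observation (already used in \cref{obs:leximin}) that agents have identical cost functions: consequently a leximin-optimal allocation among all allocations is also \MMS{}. The only subtlety is that $A^\star$ is leximin-optimal \emph{within $\mathcal{F}$} rather than within the set of all allocations, so the step I want to spell out carefully is the following: any leximin-optimal allocation is \PO{}, hence by the definition of a Pareto frontier there exists some $B \in \mathcal{F}$ having the same multiset of agent costs, and thus $A^\star$ is no worse than $B$ in the leximin order. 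Combining these, $A^\star$ is leximin-optimal overall, hence \MMS{}. The only mild obstacle in writing this out is this last matching step between ``leximin over all allocations'' and ``leximin over $\mathcal{F}$''; everything else is bookkeeping around \cref{thm:xpalg}.
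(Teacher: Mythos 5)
Your proposal is correct and follows essentially the same route as the paper: run \cref{alg:ef1+po:main} to get the Pareto frontier, select the leximin-best allocation within it (feasible since the frontier has at most $(m+1)^n$ elements), and conclude via \cref{obs:leximin} that a leximin-optimal allocation under identical costs is \MMS{} and \PO{}. The "matching step" you flag — that the leximin-best element of the frontier is leximin-optimal overall because every leximin-optimal allocation is \PO{} and hence has a cost-profile representative in the frontier — is exactly the observation the paper's proof opens with, just spelled out more explicitly.
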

    \begin{proof}
        Observe that the allocation in a Pareto frontier
        that leximin dominates all other allocations in the frontier
        is a leximin optimal allocation.
        Since \cref{alg:ef1+po:main} returns a Pareto frontier,
        and Pareto frontier contains at most $O(m^n)$ allocations,
        we can find such an allocation in XP time by \cref{thm:xpalg}.
        Therefore, the lemma follows from 
        the fact that leximin optimal allocation
        is \MMS{} an \PO{}.
    \end{proof}

    Next, let us move to \EFone{} and \PO{} allocations.
    \begin{lemma}
    \label{thrm:ef1+po:xp}
        There exists an XP algorithm parameterized by $n$ that
        for every delivery instance $\langle [n], G, h \rangle$
        decides if there exists an \EFone{} and \PO{} allocation
        and finds it if it exists.
    \end{lemma}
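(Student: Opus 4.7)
The plan is to combine the Pareto frontier computation of \cref{alg:ef1+po:main} with the structural characterization of \EFone{} and \PO{} allocations given by \cref{cor:ef1+po:leximin}.

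First, invoke \cref{alg:ef1+po:main} on the instance $\langle [n], G, h \rangle$ to obtain a Pareto frontier $\mathcal{F}$. By \cref{thm:xpalg}, this step runs in time $O((n+2)! \, m^{3n+2})$, which is polynomial in $m$ for fixed $n$, and $|\mathcal{F}| \le (m+1)^n$. Second, since all agents share the same cost function, every leximin optimal allocation of the instance is in particular Pareto optimal, so its sorted cost vector must coincide with that of some allocation in $\mathcal{F}$. Iterate through $\mathcal{F}$ (each of whose members is already stored in non-increasing cost order) and retain one allocation $A^\star$ that is lexicographically minimal; this takes $O(n \cdot |\mathcal{F}|)$ time, which is XP in $n$. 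By construction, $A^\star$ is leximin optimal.

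Third, test the condition $\max_{i \in [n]} c(A^\star_i) - \min_{i \in [n]} c(A^\star_i) \le 1$. If it fails, report that no \EFone{} and \PO{} allocation exists; correctness of this rejection follows immediately from \cref{cor:ef1+po:leximin}. If the condition holds, return $A^\star$. To see that $A^\star$ is indeed \EFone{} and \PO, note that \PO{} is inherited from leximin optimality. For \EFone{}, consider any pair $i,j \in [n]$ with $A^\star_i \neq \emptyset$. If $c(A^\star_i) \le c(A^\star_j)$ the condition is immediate, so suppose $c(A^\star_i) = c(A^\star_j) + 1$. Since $A^\star_i \neq \emptyset$, the subgraph $G|_{A^\star_i}$ has at least one leaf $\ell \neq h$, and any such leaf must belong to $A^\star_i$ (otherwise it could be pruned from $G|_{A^\star_i}$, contradicting minimality). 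Removing $\ell$ cuts exactly one edge from $G|_{A^\star_i}$, so $c(A^\star_i \setminus \{\ell\}) = c(A^\star_i) - 1 = c(A^\star_j)$, certifying \EFone{}.

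The main obstacle is the third step: \cref{cor:ef1+po:leximin} guarantees only the \emph{existence} of an \EFone{} and \PO{} allocation under the max-min condition, not that the particular leximin optimal allocation we extract is itself \EFone{}. Resolving this requires the leaf-removal argument above, which crucially uses that cost functions are identical and submodular coverage functions on a tree, so any leaf of $G|_{A^\star_i}$ not equal to the hub lies in $A^\star_i$ and contributes exactly one unit of cost. With that observation in place, the same routine decides existence and produces a witness, yielding the desired XP algorithm parameterized by $n$.
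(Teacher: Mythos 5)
Your proposal is correct and follows essentially the same route as the paper: compute the Pareto frontier, extract a leximin optimal allocation, test whether the cost spread is at most $1$, and argue that under this condition the leximin allocation is itself \EFone{} because each nonempty bundle contains a vertex whose removal drops the cost by at least $1$. One small imprecision: removing a leaf $\ell$ of $G|_{A^\star_i}$ can cut \emph{more} than one edge (when $\ell$'s parent is not in $A^\star_i$ and gets pruned too), so the correct statement is $c(A^\star_i\setminus\{\ell\})\le c(A^\star_i)-1$ rather than equality—but this only strengthens the inequality \EFone{} requires, so the argument stands.
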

    \begin{proof}
        From the proof of \cref{thrm:ef1+po:implies_mms} and \cref{lem:ef1+po}
        we know that an \EFone{} and \PO{} allocation is
        leximin optimal and
        the pairwise differences in the costs of agents are at most 1.
        Observe that it is an equivalence, i.e.,
        a leximin optimal allocation in which
        the pairwise differences in the costs of agents are at most 1
        is \EFone{} and \PO{}
        (\PO{} because of leximin optimality, and \EFone{} because for every agent
        there exists a vertex that removed from a bundle of this agent
        reduces the cost by at least 1).
        By \cref{thrm:mms+po:xp}, we can find a leximin optimal
        allocation in XP time with respect to $n$.
        Therefore, it remains to check if
        the pairwise differences in the costs of agents are at most 1.
        If it is true, then we know that this allocation is \EFone{} and \PO.
        Otherwise, we know there is no \EFone{} and \PO{} allocation.
    \end{proof}

    Now, let us consider \MMS{} and \SO{} allocations.

    \begin{lemma}
        \label{thrm:so:xp}
        There exists an XP algorithm parameterized by $n$ that
        for every delivery instance $\langle [n], G, h \rangle$
        decides if there exists an \MMS{} and \SO{} allocation
        and finds it if it exists.
    \end{lemma}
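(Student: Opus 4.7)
The plan is to invoke \cref{alg:ef1+po:main} to compute the Pareto frontier $\mathcal{F}$ in XP time (guaranteed by \cref{thm:xpalg}), and then scan $\mathcal{F}$ for an allocation that is simultaneously \MMS{} and \SO{}. The key observation is that every \SO{} allocation is \PO{}, so every \MMS{} and \SO{} allocation is \PO{}; hence its cost vector appears, up to a permutation of agents, as some $B \in \mathcal{F}$.

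First I would extract the \MMS{} cost by scanning $\mathcal{F}$ for a leximin optimal allocation $A^*$; since the cost functions are identical across agents, $A^*$ is \MMS{} (by \cref{obs:leximin} and the argument in \cref{thrm:mms+po:xp}) and its maximum bundle cost equals $\MMS_i(I)$. Then I would iterate through each $B \in \mathcal{F}$ and test two conditions: (i) $\sum_{i \in [n]} c(B_i) = |E(G)|$, which is the definition of \SO{}, and (ii) $\max_{i \in [n]} c(B_i) \le \MMS_i(I)$, which is the \MMS{} condition. If any $B$ passes both tests, return it; otherwise, report that no \MMS{} and \SO{} allocation exists.

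For correctness, if an \MMS{} and \SO{} allocation $A$ exists, then by the definition of the Pareto frontier there are $B \in \mathcal{F}$ and $\pi \in \Pi_n$ with $c(A_i) = c(B_{\pi(i)})$ for every $i \in [n]$; since both the \SO{} criterion (a sum of costs) and the \MMS{} criterion (a maximum of costs) are invariant under reindexing of agents, $B$ inherits both properties and the scan finds it. Conversely, any $B \in \mathcal{F}$ satisfying (i) and (ii) is by definition \SO{} and \MMS{}.

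Since $|\mathcal{F}| \le (m+1)^n$ and each test takes $O(n)$ time, the scan adds only a lower-order term to the XP runtime of \cref{alg:ef1+po:main}, so the overall algorithm remains XP in $n$. There is no real obstacle beyond reusing the Pareto frontier; the main subtlety is recognizing that we do \emph{not} need to verify the branch-partition condition of \cref{thrm:mms+so:existance} directly (which would require producing partitions summing below the, a priori unknown, \MMS{} cost), because the Pareto frontier already enumerates every achievable cost profile and thereby reduces the decision problem to a cheap scan.
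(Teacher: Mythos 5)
Your proposal is correct and follows essentially the same route as the paper: both observe that \SO{} implies \PO{}, compute the Pareto frontier via \cref{alg:ef1+po:main}, obtain the \MMS{} value from the leximin optimal allocation in the frontier, and then scan the frontier for an allocation whose cost profile satisfies both the sum condition for \SO{} and the max condition for \MMS{}. Your write-up merely spells out the permutation-invariance of both criteria, which the paper leaves implicit.
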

    \begin{proof}
        For \MMS{} and \SO{},
        observe that an \SO{} allocation is also necessarily a \PO{} allocation.
        Having Pareto frontier from \cref{thm:xpalg}
        and \MMS{} value from \cref{thrm:mms+po:xp},
        we can simply check all allocation in the frontier
        if they satisfy \MMS{} and \SO.
    \end{proof}

    Finally, we focus on \EFone{} and \SO{} allocations.
    \begin{lemma}
        \label{thrm:ef1+so:xp}
        There exists an XP algorithm parameterized by $n$ that
        for every delivery instance $\langle [n], G, h \rangle$
        decides if there exists an \EFone{} and \SO{} allocation
        and finds it if it exists.
    \end{lemma}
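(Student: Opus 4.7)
The plan is to reduce the task to scanning the Pareto frontier $\mathcal{F}$ produced by \cref{alg:ef1+po:main}. Since every \SO{} allocation attains the minimum possible total cost $|E(G)|$, it is also \PO{}, so any \EFone{} and \SO{} allocation is cost-equivalent (up to agent relabeling) to some representative in $\mathcal{F}$. This lets us search only over the $O(m^n)$ sorted cost profiles stored in $\mathcal{F}$ rather than over all allocations.

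Concretely, I would invoke \cref{alg:ef1+po:main} to obtain $\mathcal{F}$ in XP time by \cref{thm:xpalg}, and then iterate over each $A \in \mathcal{F}$ (already sorted in non-increasing cost order), checking two conditions: (i) $\sum_{i \in [n]} c(A_i) = |E(G)|$, and (ii) $c(A_1) - c(A_n) \le 1$. If some $A$ passes both tests, return it; otherwise, report that no \EFone{} and \SO{} allocation exists. For correctness, (i) combined with \cref{prop:so} makes $A$ an \SO{} allocation, so each bundle is a union of whole branches outgoing from $h$; then every non-empty bundle admits a leaf whose removal decreases its cost by exactly $1$, and together with (ii) this yields \EFone{} (the subcase where some bundle is empty is also covered, since (ii) then forces every non-empty bundle to have cost at most $1$, hence a single removal empties it). Conversely, if an \EFone{} and \SO{} allocation $A^*$ exists, then some $A \in \mathcal{F}$ shares its sorted cost vector; this $A$ satisfies (i) because $\sum_i c(A_i) = \sum_i c(A^*_i) = |E(G)|$, and (ii) because the argument in the proof of \cref{thm:ef1so} shows that pairwise cost differences in any \EFone{} and \SO{} allocation are at most $1$.

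The running time is dominated by \cref{alg:ef1+po:main}; the subsequent scan adds only $O(n \cdot |\mathcal{F}|)$ time. The main obstacle is justifying that it suffices to inspect only representatives of the Pareto frontier rather than all allocations, which is resolved by the cost-equivalence observation above, since both target conditions depend solely on the sorted cost vector of the allocation.
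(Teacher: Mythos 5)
Your proof is correct, but it takes a somewhat different route from the paper. The paper first notes that, since \SO{} implies \PO{}, \cref{thrm:ef1+po:implies_mms} forces any \EFone{} and \SO{} allocation to be \MMS{}; it then reuses \cref{thrm:so:xp} to enumerate all \MMS{} and \SO{} allocations in the Pareto frontier and checks each of them directly for \EFone{}. You instead bypass \MMS{} entirely: you scan the whole frontier and test two conditions that depend only on the sorted cost vector, namely total cost equal to $|E(G)|$ and maximum pairwise cost difference at most $1$, and you justify that these two conditions are \emph{equivalent} to being \EFone{} and \SO{} by invoking \cref{prop:so} (each bundle is a union of whole branches, so a deepest leaf's removal decreases the cost by exactly $1$) together with the argument from \cref{thm:ef1so}. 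Your equivalence argument is sound, including the empty-bundle corner case, and the reduction to cost profiles is legitimate because both conditions are invariant under relabeling and the frontier contains a cost-equivalent representative of every \PO{} (hence every \SO{}) allocation. What each approach buys: the paper's version is shorter because it leans on the already-proved lemmas for \MMS{} and \SO{} and on the implication \EFone{}$\wedge$\PO{}$\Rightarrow$\MMS{}, whereas yours is more self-contained and makes the verification step a trivial arithmetic check on the cost vector rather than a full \EFone{} test; both run within the same XP bound dominated by \cref{alg:ef1+po:main}.
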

    \begin{proof}
        For \EFone{} and \SO,
        observe that \cref{thrm:ef1+po:implies_mms} also implies that 
        \EFone{} and \SO{} allocation is \MMS.
        From \cref{thrm:so:xp} we know that we can find all
        \MMS{} and \SO{} allocations in Pareto frontier
        in XP time with respect to $n$.
        Hence, we can check if any one of them satisfies also \EFone{}
        and this will give us the solution.
    \end{proof}

    From \cref{thrm:mms+po:xp,thrm:ef1+po:xp,thrm:so:xp,thrm:ef1+so:xp}
    we obtain the thesis.
\end{proof}


\section{Extensions to Weighted/Cyclic Graphs and Future Work}\label{sec:extend}


This work is the first to theoretically consider standard fairness objectives for delivery tasks.
Trees, as we discuss in the introduction are an important setting to consider, not the least because they allow for tractable routing solutions. For this reason, we explore and exploit the structure of trees in order to get our results. 


It is important to note that several of our results extend to weighted cyclic graphs as well. In particular, all our hardness results and the existence of \EFone{} allocations extend.  
Further, our hardness results show that when there are edge weights, looking for \MMS{} allocations or allocations that satisfy any combination of fairness and efficiency is \emph{strongly} \NPH{}. Thus our results have strong implications for delivery settings with weights and/or cycles in the graph. We now remark on each of these settings separately and remark on the avenues for future work.

\subsection{Weighted Trees}

The setting of weighted trees subsumes the standard fair chore division setting with identical additive costs. Any such chore division setting can be captured by a weighted star graph with a separate hub and a vertex for every chore, and the edge between the hub and the vertex has weight equal to the cost of the chore. Consequently, the setting is significantly more difficult than typical chore division settings.

As previously mentioned, all the results in \cref{sec:fairallocations} extend to weighted trees. Here, oracle access is not explicitly required to find \EFone{} allocations, unlike the case of cyclic graphs. Further, \cref{alg:ef1+po:main} can easily extend to weighted trees, however
the running time of such algorithm
will have the sum of the edge weights
across the tree in place of $m$.

\begin{proposition}
    Given a delivery instance with a weighted tree, an \EFone{} allocation can be found in polynomial time. 
\end{proposition}

An important direction of further study would be to find a faster algorithm to find a fair and efficient algorithm, if possible. Also, studying the connection between \EFone{} and \MMS{} in this space would be very interesting.

Additionally, for the setting of unweighted trees itself, our work paves the way for future research on developing approximation schemes
or perhaps algorithms parameterized by graph characteristics
(e.g., maximum degree or diameter) in this domain.

\subsection{Cyclic Graphs}
While cyclic graphs are practically relevant, they bring intrinsic computational hardness. 
In fact, even the cost of servicing a given set of nodes in a weighted cyclic graph cannot be computed tractably.

\paragraph{Cyclic Weighted Graphs} With edge weights, the fair delivery problem generalizes the travelling salesperson problem, which is hard to approximate within \emph{any} polynomial factor. Therefore, while the envy-cycle elimination argument does guarantee the existence of \EFone{} allocations, to compute one would require access to a value oracle which will provide the cost of servicing a given set of nodes. 

The remainder of our results rely on the graph being acyclic, hence do not extend to cyclic graphs. An important avenue of future work includes understanding if there is any connection between \EFone{} and \MMS{} allocation. Further, it is relevant to note that an efficient way of verifying if a given allocation is \PO{} may not exist in that setting (it is co-NP complete for standard settings). Hence, even a brute force approach to finding a fair and efficient allocation for cyclic graphs may take time $\Omega(m^{n})$.

\end{document}